\newtheoremstyle{break}
  {\topsep}{\topsep}%
  {\itshape}{}%
  {\bfseries}{}%
  {\newline}{}%
\newtheoremstyle{breaknoitalics}
  {\topsep}{\topsep}%
  {\upshape}{}%
  {\bfseries}{}%
  {\newline}{}%
\newtheorem{claim}{Claim}[section]
\newtheorem{theorem}{Theorem}[section]
\newtheorem{lemma}{Lemma}[section]
\newtheorem{definition}{Definition}
\newtheorem{problem}{Problem}
\gdef\fps@figure{!htbp}}
\let\realbfseries=\bfseries
\def\bfseries{\realbfseries\boldmath}
\newif\ifabstract
\newif\iffull
 \gdef\xxxmark{%
   \expandafter\ifx\csname @mpargs\endcsname\relax 
     \expandafter\ifx\csname @captype\endcsname\relax 
       \marginpar{xxx}
     \else
       xxx 
     \fi
   \else
     xxx 
   \fi}
 \gdef\xxx{\@ifnextchar[\xxx@lab\xxx@nolab}
 \long\gdef\xxx@lab[#1]#2{\textbf{[\xxxmark #2 ---{\sc #1}]}}
 \long\gdef\xxx@nolab#1{\textbf{[\xxxmark #1]}}
\let\epsilon=\varepsilon
\let\subset=\subseteq
\def\degree{\operatorname{deg}}
\newcommand{\x}{\ensuremath{x}} 
\newcommand{\bignum}{\ensuremath{B}} 
\newcommand{\sufficientalg}{\mathcal{A}}
\DeclareMathOperator*{\maxcoeff}{maxcoeff}
\newcommand{\Ev}{\ensuremath{E}\xspace}
\newcommand{\ev}{\ensuremath{\mathrm{ev}}}
\newcommand{\ops}{\ensuremath{\mathsf{ops}}}
\newcommand{\variant}{\ensuremath{\mathsf{var}}}
\newcommand{\nats}{\ensuremath{\mathbb{N}}}
\newcommand{\AEC}[3]{\ensuremath{(#1,#2)}\textsc{-AEC-#3}}
\newcommand{\AECEO}[2]{\AEC{#1}{#2}{EO}}
\newcommand{\AECEL}[2]{\AEC{#1}{#2}{EL}}
\newcommand{\SG}[3]{\ensuremath{(#1,#2)}\textsc{-AEC-#3}}
\newcommand{\SGSTD}[2]{\SG{#1}{#2}{Std}}
\newcommand{\SGEO}[2]{\SG{#1}{#2}{EO}}
\newcommand{\SGEL}[2]{\SG{#1}{#2}{EL}}
\newcommand{\Partition}{\textsc{Partition}\xspace}
\newcommand{\PartitionEqual}{\textrm{\Partition-$n/2$}\xspace}
\newcommand{\ProductPartition}{\textsc{ProductPartition}\xspace}
\newcommand{\SquareProductPartition}{\textsc{SquareProductPartition}\xspace}
\newcommand{\ProductPartitionEqual}{\textrm{\ProductPartition-{$n/2$}}\xspace}
\newcommand{\SquareProductPartitionEqual}{\textrm{\textsc{SquareProductPartition}-{$n/2$}}\xspace}
\newcommand{\SetProductPartitionBound}[1]{\textsc{SetProductPartitionBound-\ensuremath{#1}\xspace}}
\newcommand{\TPartition}{\textsc{3-Partition}\xspace}
\newcommand{\TPartitionEqual}{\textsc{3-Partition}\-3\xspace}
\newcommand{\plustimes}{\mathpalette\plustimesinner\relax}
\newcommand{\plustimesinner}[2]{%
  \mathbin{\vphantom{+}\ooalign{$#1+$\cr\hidewidth$#1\times$\hidewidth\cr}}%
}
\newcommand{\defineproblemrefcomment}[5]{%
  \begin{problem}[\unboldmath #1]\rm~

  \unskip
  \begin{quote}
  \noindent\textbf{Instance:} #2

  \noindent\textbf{Question:} #3
  
  \ifthenelse{\isempty{#4}}{}{\noindent\textbf{Reference:} #4}
  
  \ifthenelse{\isempty{#5}}{}{\noindent\textbf{Comment:} #5}
  \end{quote}\end{problem}%
}
\newcommand{\defineproblemcomment}[4]{\defineproblemrefcomment{#1}{#2}{#3}{}{#4}}
\newcommand{\defineproblem}[3]{\defineproblemrefcomment{#1}{#2}{#3}{}{}}
\newcounter{section-preserve}
\newcounter{theorem-preserve}
\newcommand{\blank}[1]{}
\newtoks\magicAppendix
\newtoks\magictoks
\newif\iflater
\long\def\later#1{\magictoks={#1}%
  \edef\magictodo{\noexpand\magicAppendix={\the\magicAppendix
    \the\magictoks%
  }}
  \magictodo}
\long\def\both#1{\magictoks={#1}%
  \edef\magictodo{\noexpand\magicAppendix={\the\magicAppendix
    \noexpand\setcounter{theorem-preserve}{\noexpand\arabic{theorem}}%
    \noexpand\setcounter{theorem}{\arabic{theorem}}%
    \noexpand\setcounter{section-preserve}{\noexpand\arabic{section}}%
    \noexpand\setcounter{section}{\arabic{section}}%
    \noexpand\let\noexpand\oldsection=\noexpand\thesection
    \noexpand\def\noexpand\thesection{\thesection}
    \noexpand\let\noexpand\oldlabel=\noexpand\label
    \noexpand\let\noexpand\label=\noexpand\blank
    \the\magictoks%
    \noexpand\setcounter{theorem}{\noexpand\arabic{theorem-preserve}}%
    \noexpand\setcounter{section}{\noexpand\arabic{section-preserve}}%
    \noexpand\let\noexpand\thesection=\noexpand\oldsection
    \noexpand\let\noexpand\label=\noexpand\oldlabel
  }}
  \magictodo
  \the\magictoks}
\def\magicappendix{\latertrue \the\magicAppendix}
\newcommand{\CSAIL}{MIT CSAIL, Cambridge, MA, USA}
\newcommand{\MIT}{MIT, Cambridge, MA, USA}
\newcommand{\HVARD}{Harvard University, Cambridge, MA, USA}
\newcommand{\BU}{Boston University, Boston, MA, USA}
\title{Arithmetic Expression Construction}
\author{
Leo Alcock\thanks{\HVARD} \hspace{10pt}
Sualeh Asif\thanks{\MIT} \hspace{10pt}
Jeffrey Bosboom\thanks{\CSAIL} \hspace{10pt}
Josh Brunner\footnotemark[3] \hspace{10pt}
Charlotte Chen\footnotemark[2] \\
Erik D. Demaine\footnotemark[3] \hspace{10pt}
Rogers Epstein\footnotemark[3] \hspace{10pt}
Adam Hesterberg\footnotemark[1] \hspace{10pt}
Lior Hirschfeld\footnotemark[2] \\
William Hu\footnotemark[2] \hspace{10pt}
Jayson Lynch\footnotemark[3] \hspace{10pt}
Sarah Scheffler\thanks{\BU} \hspace{10pt}
Lillian Zhang\footnotemark[2]
}
\date{}
\begin{document}

\maketitle

\begin{abstract}
When can $n$ given numbers be combined using arithmetic operators from a given subset of $\{+,-,\times,\div\}$ to obtain a given target number?
We study three variations of this problem of \emph{Arithmetic Expression Construction}:
when the expression
(1)~is unconstrained;
(2)~has a specified pattern of parentheses and operators (and only the numbers need to be assigned to blanks); or
(3)~must match a specified ordering of the numbers (but the operators and parenthesization are free).
For each of these variants, and many of the subsets of $\{+,-,\times,\div\}$, we prove the problem NP-complete,
sometimes in the weak sense and sometimes in the strong sense.
Most of these proofs make use of a \emph{rational function framework} which proves equivalence of these problems for values in rational functions with values in positive integers.
\end{abstract}

\section{Introduction}

\emph{Algebraic complexity theory} \cite{AroraBarak,Gathen-survey}
is broadly interested in the smallest or fastest arithmetic circuit to compute
a desired (multivariate) polynomial.
An \emph{arithmetic circuit} is a directed acyclic
graph where each source node represents an input and every other node is an
arithmetic operation, typically among $\{+,-,\times,\div\}$, applied to
the values of its incoming edges, and one sink vertex represents the output.
One of the earliest papers on this topic is Scholz's 1937 study of
minimal addition chains \cite{scholzchains}, which is equivalent to finding
the smallest circuit with operation $+$ that outputs a target value~$t$.
Scholz was motivated by efficient algorithms for computing $x^n \bmod N$.
Minimal addition chains have been well-studied since; in particular,
the problem is NP-complete \cite{minadditionchains}.


Algebraic computation models serve as a more restrictive model of computation,
making it easier to prove lower bounds.  In cryptography, a common model is to
limit computations to a group or ring \cite{Maurer}.
For example, Shoup~\cite{Shoup} proves an exponential lower bound for discrete
logarithm in the generic group model, and Aggarwal and Maurer
\cite{AggarwalMaurer} prove that RSA is equivalent to factoring in the
generic ring model.
Minimal addition chains is the same problem as minimal group exponentiation in
generic groups, and thus the problem has received a lot of attention in
algorithm design \cite{FastExponentiation}.


In our paper, we study a new, seemingly simpler type of problem,
where the goal is to design an \emph{expression} instead of a \emph{circuit},
i.e., a \emph{tree} instead of a \emph{directed acyclic graph}.
Specifically, the main Arithmetic Expression Construction (AEC) problem
is as follows:

\defineproblem{\SGSTD{\mathbb{L}}{\ops} / \textrm{Standard}}
{A multiset of values $A = \{a_1, a_2, \dots, a_n\}\subseteq \mathbb{L}$ and a target value $t\in\mathbb{L}$.}
{Does there exist a parenthesized expression using any of the operations in $\ops$ that contains each element of $A$ exactly once and evaluates to $t$?}

The problem \SGSTD{\nats}{\{+,-,\times,\div\}} naturally generalizes
two games played by humans.
The 24 Game \cite{24game-wiki} is a card game dating back to the 1960s,
where players race to construct an arithmetic expression using four cards
with values 1--9 (a standard deck without face cards) that evaluates to 24.
In the tabletop role-playing game Pathfinder, the Sacred Geometry feat
requires constructing an arithmetic expression using dice rolls that evaluate
to one of a specified set of prime constants.

In this paper, we prove that this problem is NP-hard
when the input values are in $\mathbb N$
or the algebraic extension $\mathbb{N}[x_1, \ldots, x_k]$.%
\footnote{To clarify the notation: all values are in the field extension
$\mathbb{Q}(x_1, \ldots, x_k)$, but the
\emph{input} values are restricted to $\mathbb{N}[x_1, \ldots, x_k]$, i.e., have nonnegative integer coefficients.}


\subsection{Problem Variants and Results}\label{sec:variants-basic-orientation}

Expressions can be represented as trees with all operands at leaf nodes and operators at internal nodes using Dijkstra's shunting yard algorithm \cite{dijkstra1961algol}.  Similarly, an expression tree can be converted into a parenthesized expression by concatenating the operands and operators as they are encountered with an inorder traversal, adding an opening parenthesis when descending the tree and a closing parenthesis when ascending. 

\definecolor{op}{rgb}{.72,.64,.77}
\definecolor{leaf}{rgb}{.98,.91,.52}

\begin{figure}[H] 
  \centering
  \begin{tikzpicture}[
    every node/.style={minimum width=1.8em,inner sep=0,draw,circle},
    level distance=0.8cm,
    level 1/.style={sibling distance=3cm},
    level 2/.style={sibling distance=1.5cm},
    level 3/.style={sibling distance=1.5cm},
  ]
  \node[label=79,fill=op] at (0,0) {$\boldsymbol+$}
    child { node[label=77,fill=op] {$\boldsymbol\times$}
      child { node[fill=leaf] {$11$} }
      child { node[fill=leaf] {$7$} }
    }
    child { node[label=2,fill=op] {$\boldsymbol\div$}
      child { node[fill=leaf] {$4$} }
      child { node[label=2,fill=op] {$\boldsymbol-$}
        child { node[fill=leaf] {$3$} }
        child { node[fill=leaf] {$1$} }
      }
    };
  \end{tikzpicture}
  \caption[Arithmetic Expression Construction results]
    {An example expression tree for $7 \times 11 + (4 \div (3 - 1)) = 79$. The numbers above the internal nodes indicate their values.}
  \label{fig.exptree}
\end{figure}
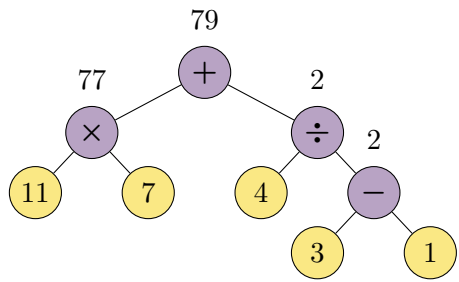

We also consider following two variants of AEC which impose additional constraints (specified by some data we denote by $D$) on the expression trees:

\defineproblem{\SGEL{\mathbb{L}}{\ops} / \textrm{Enforced Leaves}}
{A target value $t\in \mathbb{L}$ and a multiset of values $A = \{a_1,\ldots, a_n\}\subseteq \mathbb{L}$ with the leaf order encoded by $D:A\rightarrow [n]$.}
{Can an expression tree be formed such that each internal node has an operation from $\ops$, and the leaves of the tree are the list $A$ in order $D$, where the tree evaluates to $t$?}

\defineproblem{\SGEO{\mathbb{L}}{\ops} / \textrm{Enforced Operations}}
{A multiset of values $A = \{a_1, a_2, \dots, a_n\}\subseteq \mathbb{L}$, a target $t\in \mathbb{L}$, and an expression tree $D$ with internal nodes each containing an operation from $\ops$ and empty leaf nodes.}{Can the expression tree be completed by assigning each value in $A$ to exactly one leaf node such that the tree evaluates to $t$?}

The first variant fixes the ordering of leaf nodes of the tree, and asks whether an expression can be formed which reaches the target.  The second variant constrains the shape of the tree and internal node operations, and asks whether an ordering of the leaves can be found which evaluates to the target.
We represent all instances of these variants by triples $(A, t, D)$
where $A = \{a_1, a_2, \ldots a_n\}$ is a multiset of values,
$t$ is the target value, and $D$ is additional data for the instance:
a leaf ordering for EL, and an expression tree for EO.

In this paper, we prove hardness results in all of these variants by 
reduction from \Partition and related problems listed in Appendix \ref{sec:related}, and
develop polynomial or pseudopolynomial algorithms where appropriate.
Table~\ref{table:summary} summarizes our results.
In particular, we prove NP-hardness with $\mathbb{L} = \mathbb{N}$
for the Standard and EO variants for all subsets of operations $\{+,-,\times,\div\}$.
Note that all of these problems are in NP:
simply evaluate the expression given as a certificate.

\begin{table}[htbp]
\centering
\footnotesize
\tabcolsep=2pt
\begin{minipage}{\textwidth}%
\centering
\begin{tabular}{|c|c|c|c|} \hline
\textbf{Operations}  &  \textbf{Standard}  &  \textbf{Enforced Operations}  &  \textbf{Enforced Leaves} \\ \hline
$\{+\}$  &  $\in \text{P}$ (\S\ref{standard:plus})  &  $\in \text{P}$ (\S\ref{ops:plus})  &  $\in \text{P}$ (\S\ref{leaves:plus}) \\ \hline
$\{-\}$  &  \textbf{weakly NP-complete} (\S\ref{standard:minus})  &  \textbf{weakly NP-complete} (\S\ref{ops:minus})  &  \textbf{weakly NP-complete} (\S\ref{leaves:minus}) \\ \hline
$\{\times\}$  &  $\in \text{P}$ (\S\ref{standard:times})  &  $\in \text{P}$ (\S\ref{ops:times})  &  $\in \text{P}$ (\S\ref{leaves:times}) \\ \hline
$\{\div\}$  &  \textbf{strongly NP-complete} (\S\ref{standard:div})  &  \textbf{strongly NP-complete} (\S\ref{ops:div})  &  \textbf{strongly NP-complete}  (\S\ref{leaves:div}) \\ \hline
$\{+,-\}$  &  \textbf{weakly NP-complete} (\S\ref{standard:plus-minus})  &  \textbf{weakly NP-complete} (\S\ref{ops:plus-minus})  &  \textbf{weakly NP-complete} (\S\ref{leaves:plus-minus}) \\ \hline
$\{+,\times\}$  &  {weakly NP-complete (\S\ref{standard:plus-times})}  &  weakly NP-complete\footnote{Strong in all variables except the target $t$} (\S\ref{ops:plus-times})  &  {  \textbf{  weakly NP-complete}} (\S\ref{appendix-leaves:plus-times}) \\ \hline
$\{+,\div\}$  &  {weakly NP-complete (\S\ref{standard:plus-div})}  &  \textbf{strongly NP-complete} (\S\ref{ops:plus-div})  & Open \\ \hline
$\{-,\times\}$  &  weakly NP-complete (\S\ref{standard:minus-times})  &  \textbf{strongly NP-complete} (\S\ref{ops:minus-times})  & weakly NP-complete (\S\ref{appendix-leaves:minus-times})  \\ \hline
$\{-,\div\}$  &  {weakly NP-complete} (\S\ref{standard:minus-div})  &  \textbf{strongly NP-complete} (\S\ref{ops:minus-div})  &  Open \\ \hline
$\{\times,\div\}$  &  \textbf{strongly NP-complete} (\S\ref{standard:times-div})  &  \textbf{strongly NP-complete} (\S\ref{ops:times-div})  &  \textbf{strongly NP-complete} (\S\ref{leaves:times-div}) \\ \hline
$\{+,-,\times\}$  &  {  weakly NP-complete (\S\ref{standard:plus-minus-times})}  &  \textbf{strongly NP-complete} (\S\ref{ops:plus-minus-times})  &  { weakly NP-complete (\S\ref{leaves:plus-minus-times})} \\ \hline
$\{+,-,\div\}$  &  {weakly NP-complete} (\S\ref{standard:plus-minus-div})  &  \textbf{strongly NP-complete} (\S\ref{ops:plus-minus-div})  &  Open \\ \hline
$\{+,\times,\div\}$  &  {weakly NP-complete (\S\ref{standard:plus-times-div})}    &  \textbf{strongly NP-complete} (\S\ref{ops:plus-times-div})  &  {  weakly NP-complete} (\S\ref{leaves:plus-times-div}) \\ \hline
$\{-,\times,\div\}$  &  {    weakly NP-complete (\S\ref{standard:minus-times-div})}  &  \textbf{strongly NP-complete} (\S\ref{ops:minus-times-div})  &  Open \\ \hline
$\{+,-,\times,\div\}$  &  {weakly NP-complete (\S\ref{standard:plus-minus-times-div})}  &  \textbf{strongly NP-complete} (\S\ref{ops:plus-minus-times-div})  &  Open \\ \hline
\end{tabular}%




\caption{Our results for Arithmetic Expression Construction.  Bold font indicates NP-completeness results that are tight; for weakly NP-complete results, this means that we have a corresponding pseudopolynomial-time algorithm.  The proof is given in the section in parentheses. }

\label{table:summary}

\end{minipage}

\end{table}

Our first step is to show that, for any $k$ and $k'$, there is a polynomial-time
reduction from the $k$-variable variant to the $k'$-variable variant.
Such a reduction is trivial for $k \le k'$ by leaving the instance unchanged.
For the converse, we present the \emph{Rational Function Framework}
in Section~\ref{sec:rational}, which provides a polynomial-time construction of
a positive integer $B$ on an instance $I$
(i.e., set of values $\{a_i\}, t\in \mathbb{N}[x_1,\ldots, x_{k}]$)
such that replacing $x_{k} = B$ yields a solvable instance if and only if
$I$ is solvable. That is, for all variants $\variant \in \{\textsc{Std}, \textsc{EO}, \textsc{EL}\}$,
we obtain a simple reduction
$$\SG{\mathbb{N}[x_1, \ldots, x_k]}{\ops}{\variant} \rightarrow\SG{\mathbb{N}[x_1, \ldots, x_{k-1}]}{\ops}{\variant}$$
%
Because this reduction preserves algebraic properties, it yields
interesting positive results in addition to hardness results.
For example, this result demonstrates that
\SGSTD{\mathbb{N}[x_1,\ldots, x_k]}{\{+,-\}} has a pseudopolynomial-time
algorithm via a chain of reductions to \SGSTD{\mathbb{N}}{\{+,-\}}
which is equivalent to the classic \Partition problem.

\subsection{Notation}




Beyond the $I = (A, t, D)$ instance notation introduced above,
we often use the variable $E$ to denote an expression; the Standard variant is to decide whether $\exists E : E(A) = t$.
We also use ``$\ev(\cdot)$'' to denote the value of an expression at a node of an expression tree (i.e., the evaluation of the subtree whose root is that node).

\subsection{Outline of Paper}

In Section~\ref{sec:rational}, we describe the Rational Function Framework which demonstrates equivalence between AEC variants over different numbers of free variables. In Section~\ref{standard}, we present the structure theorem which will be used to prove hardness of the nontrivial cases of Standard and we present a proof of the full case with it. In Section~\ref{sec:OtherStandardCases}, we use two similar reductions to cover all remaining nontrivial cases of Standard. In Section~\ref{leaves}, we present the nontrivial proofs for the Enforced Leaves variant. In Section~\ref{ops}, we prove an interesting reduction for Enforced Operations. In Section~\ref{basic}, we present the remainder of our hardness proofs, which more straightforward and do not use the rational framework, along with pseudopolynomial algorithms for some weakly NP-hard problems. Appendix~\ref{sec:related} lists the problems we reduce from for our hardness proofs.

\section{Rational Function Framework}
\label{sec:rational}

\newcommand{\rats}{\ensuremath{\mathbb{Q}}\xspace}
\newcommand{\RatPolyxk}{\ensuremath{\rats(x_1, \ldots, x_k)}\xspace}

In this section, we present the rational function framework. This framework proves the polynomial-time equivalence of all Arithmetic Expression Construction variants with values 
as ratios of polynomials with integer coefficients, that is, \RatPolyxk, for differing $k$. This equivalence also allows us to restrict to $\nats[x_1,\ldots, x_k]$ and critically will make proving hardness for variants over $\nats$ easier by allowing us to reduce to $\nats[x_1,\ldots, x_k]$ versions.

\begin{theorem}
\label{theorem:full-ratl}
For all $\ops \subseteq \{+,-,\times,\div\}$, for all variants $\variant$, for all integers $k > 0$, there exists an efficient algorithm $\sufficientalg$ mapping instances $I$ to positive integers $\sufficientalg(I)$ such that a polynomial-time reduction 
$$\SG{\mathbb{Q}(x_1, \ldots, x_k)}{\ops}{\variant} \rightarrow\SG{\mathbb{Q}(x_1, \ldots, x_{k-1})}{\ops}{\variant}$$
is given by substituting $x_k = B$ in an instance $I$ for any $B \in \nats$ satisfying $B \ge  \sufficientalg (I)$. 
\end{theorem}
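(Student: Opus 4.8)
The plan is to prove that substituting a sufficiently large integer for $x_k$ preserves the yes/no answer of an AEC instance. The forward direction (if $I$ is solvable over $\mathbb{Q}(x_1,\ldots,x_k)$ then the substituted instance is solvable) is easy: take a witnessing expression $E$ with $E(A)=t$ as an identity of rational functions, and apply the substitution $x_k=B$; as long as $B$ is not a root of any denominator appearing in $E$ — and there are only polynomially many distinct denominators, each of bounded degree — the substitution is valid and $E$ witnesses solvability of the substituted instance. The real content is the converse: if the substituted instance (for \emph{every} large $B$, or even for one suitably chosen $B$) is solvable, then $I$ itself is solvable over the function field. So first I would make precise what $\sufficientalg(I)$ needs to dominate.

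The key idea is a \emph{gap / identity-testing} argument. For a fixed variant, the number of candidate expression trees (shapes, operator assignments, and leaf assignments consistent with $D$) is finite and, crucially, bounded independently of the numeric magnitudes — it depends only on $n$ and $|\ops|$. For each such candidate expression $E$, the equation $E(A) = t$ becomes, after clearing denominators, a polynomial identity $P_E(x_1,\ldots,x_k) = 0$ in $\mathbb{Z}[x_1,\ldots,x_k]$ (where $P_E$ is the numerator of $E(A) - t$). Either $P_E \equiv 0$ as a polynomial — meaning this candidate solves $I$ over the function field — or $P_E \not\equiv 0$, in which case, viewing $P_E$ as a polynomial in $x_k$ with coefficients in $\mathbb{Q}(x_1,\ldots,x_{k-1})$, it has at most $\degx(P_E)$ roots, and more to the point its leading coefficient (in $x_k$) is a nonzero element of $\mathbb{Q}(x_1,\ldots,x_{k-1})$. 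I would then argue: if $B$ is large enough to exceed all these root bounds for all the finitely many bad candidates $E$, then any candidate that solves the substituted instance must have had $P_E \equiv 0$, hence already solves $I$. The crux is to get a polynomial-time-computable upper bound on the degree of $x_k$ in $P_E$ and on the magnitudes of its coefficients, so that a classical root bound (a nonzero integer polynomial of degree $d$ with coefficient magnitudes at most $M$ in one variable has no roots beyond, say, $1 + M$, or more carefully we need to handle that the "coefficients" are themselves polynomials in the other variables that could vanish).

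Here is where I expect the main obstacle, and how I would handle it. Because the other variables $x_1,\ldots,x_{k-1}$ are still formal, the leading coefficient of $P_E$ in $x_k$ is a polynomial that could in principle be made to vanish — so "$P_E \ne 0$ as a polynomial in $x_k$ over the function field" is the right statement, and I need that substituting a numeric $B$ for $x_k$ keeps the resulting element of $\mathbb{Q}(x_1,\ldots,x_{k-1})$ nonzero. This follows because a univariate nonzero polynomial over an infinite field (here $\mathbb{Q}(x_1,\ldots,x_{k-1})$) has finitely many roots, and I want $B$ to avoid them; since the coefficients of $P_E$ (as a polynomial in $x_k$) are integer polynomials in the remaining variables, the relevant bound is again on the $x_k$-degree and on the integer coefficient sizes of $P_E$, both of which I would bound via: (i) a bound on the degree and coefficient-size blowup of $\ev$ at each node of an expression tree on $n$ leaves — each operation at most doubles degrees and multiplies coefficient bounds, so over a tree of depth $\le n$ we get singly-exponential bounds in $n$ but polynomial-size \emph{descriptions}, hence $\log$ of the bound is polynomial; and (ii) observing that $\sufficientalg(I)$ only needs to be this singly-exponential quantity, whose bit-length is polynomial in the input size, so it is efficiently computable. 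I would then define $\sufficientalg(I)$ to be (one plus) the maximum over all candidate expressions $E$ of this coefficient-magnitude bound, conclude that for $B \ge \sufficientalg(I)$ no spurious solution can appear, and package the forward and converse directions into the claimed reduction. A minor additional point to check is that the substituted values $a_i(x_1,\ldots,x_{k-1},B)$ and $t(\ldots,B)$ genuinely lie in $\mathbb{Q}(x_1,\ldots,x_{k-1})$ (indeed in $\mathbb{N}[x_1,\ldots,x_{k-1}]$ when the inputs were in $\mathbb{N}[x_1,\ldots,x_k]$ and $B \in \mathbb{N}$), which is immediate, and that the reduction is computable in polynomial time, which reduces to the bit-length bound on $\sufficientalg(I)$ established above.
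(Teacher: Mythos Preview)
Your proposal is correct and is essentially the paper's approach: the paper tracks numerator/denominator pairs $(f,g)$ explicitly (the ``paired model''), bounds their coefficient growth inductively over the expression tree exactly as you sketch, and then uses base-$B$ uniqueness (the Basis Representation Theorem) in place of your Cauchy-type root bound to conclude that $fg_t = f_tg$ holds as a polynomial identity iff it holds after substituting $x_k = B$ whenever all relevant coefficients have absolute value below $B/2$. The paired model also cleanly sidesteps your worry about vanishing denominators, since one never actually divides.
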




To formalize the idea of a ``big enough'' $B$, we introduce the concept of \emph{sufficiency} of integers for instances of AEC. Let $B$ be a positive integer and let $I$ be a $\SG{\mathbb{Q}(x_1,\ldots,x_k)}{\ops}{\variant}$ instance. Loosely, we consider $B$ to be \emph{$(I,\ops,\variant)$-sufficient} if substituting $x_k = B$ in instance $I$ creates a valid reduction on $I$, as in Theorem~\ref{theorem:full-ratl}.


We will shorten the terminology and call this \emph{$I$-sufficient} or \emph{sufficient for $I$} when $\ops$ and $\variant$ are clear from context. Theorem~\ref{theorem:full-ratl}~says there is an efficient algorithm that produces sufficient integers. Note that this definition is not yet rigorous. To remedy this we introduce the paired model of computation on rational functions.

In the paired model of computation, objects are given by pairs $(f,g)$ of integer-coefficient polynomials $f,g\in \mathbb{Z}[x_1,\ldots, x_k]$.  
Intuitively, the paired model simulates rational functions by $(f,g) \leftrightarrow f/g$.
We define operations ($+,-,\times,\div$) and equivalence relation ($\sim$) on pairs $(a,b)$ and $(f,g)$ as follows:
\begin{align*}
    (f,g) + (a,b) &= (fb+ga, gb)\\
    (f,g) - (a,b) &= (fb-ga, gb)\\
    (f,g)\times (a,b)&= (fa, gb)\\
    (f,g) \div (a,b) &= (fb, ga)\\
    (f,g)\sim (a,b) &\Leftrightarrow fb = ga
\end{align*}

As mentioned, the intuition is that $f$ is the numerator and $g$ is the denominator of a ratio of polynomials with integer coefficients.
The utility of the model is that it keeps track of rational functions as \textit{specific} quotients of integer coefficient polynomials. This will remove the ambiguity of representation of elements in $\mathbb{Q}(x_1,\ldots,x_n)$. Such a model allows us to make arguments about which polynomials can occur in the numerator and denominator of a rational function, such as by defining the range of these polynomials, as in the proof in Section~\ref{leaves:plus-times-div}. 

We can define Arithmetic Expression Construction in the paired model for all variants by changing target and values into pairs and using all the operations as defined above. 
An instance in the paired model is solvable if there exists a valid expression $E$ in values from $A$ and such that given $(f,g) = E(A)$, we have $(f,g)\sim (f_t, g_t) = t$.
We add restrictions on the solution in an additional variable, $D$.
For example, in enforced leaves, the entries of leaves of $E$ must be in the order specified by $D$,
and in enforced order, the expression $E$ is already specified and one must reorder $A$. The only difference is that we now compute in the paired model rather than with rational functions. 


Similarly, note that one can convert instances in the paired model to the nonpaired model via mapping entries $(f_i,g_i)\mapsto f_i/g_i$ and for a nonpaired model, one can always write $r\in \mathbb{Q}(x_1,\ldots, x_k)$ as $f_i/g_i$ where $f_i,g_i$ have integer coefficients.\footnote{Note that this representation is not unique!} A paired instance of AEC is solvable if and only if it's nonpaired variant is solvable. We now rigorously define sufficiency in Definition~\ref{definition:rigorous-I-sufficiency} and characterize its use in Lemma~\ref{lem:eval-replace-B}.




\begin{definition}
\label{definition:rigorous-I-sufficiency}
Let $B$ be a positive integer, and $I = (A, t=f_t/g_t, D)$ be an instance of \SG{\mathbb{Q}(x_1,\ldots,x_k)}{\ops}{\variant}. Represent $I$ in the paired model. 
Suppose that, for every evaluation $(f,g) = E(A)$ of a valid AEC expression $E$ (as restricted by $D$) in the paired model, the norms of the coefficients of $fg_t$ and $f_tg$ are all less than $B/2$. Then $B$ is $(I,\ops, \variant )$-sufficient.  
\end{definition}

\begin{restatable}{lemma}{ratlemeval}
\label{lem:eval-replace-B}
Given an instance $I = (A,t=f_t/g_t,D)$ of 
$\SG{\mathbb{Q}(x_1,\ldots, x_k)}{\ops}{\variant}$ and $\bignum\in \mathbb{N}$ which is $I$-sufficient as defined above. 
Let $\Ev (\cdot)$ be some expression from a valid $\ops$ expression tree according to $D$. Then, for every evaluation of $\Ev$ over the polynomials in $A$, one has:
\begin{multline*}
\Ev\left( \{(a_i(x_1,\ldots, x_k)\}_{a_i\in A} \right) = t(x_1,\ldots, x_k) \\ \Leftrightarrow \Ev \left(  \{a_i(x_1,\ldots, x_{k-1},B)\}_{a_i\in A}\right) = t(x_1,\ldots, x_{k-1},B).
\end{multline*}
\end{restatable}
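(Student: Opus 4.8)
The plan is to reduce the biconditional to a single polynomial identity and then squeeze out that identity from the sufficiency bound via a base-$\bignum$ digit-uniqueness argument. I would work throughout in the paired model, so $I = (A, t = f_t/g_t, D)$ with every $a_i$ and with $f_t, g_t$ lying in $\mathbb{Z}[x_1,\ldots,x_k]$, and the fixed valid expression $\Ev$ evaluates on $A$ to a specific pair $(f,g) = \Ev(A)$, $f,g \in \mathbb{Z}[x_1,\ldots,x_k]$. By the definition of $\sim$, the equation $\Ev(A) = t$ is \emph{literally} the polynomial identity $f g_t = f_t g$, with no division involved. The first step is the observation that evaluating $\Ev$ and then substituting $x_k = \bignum$ gives the same pair as first substituting $x_k = \bignum$ in each $a_i$ and then evaluating $\Ev$; this follows by induction on the expression tree, since the substitution map $\phi\colon \mathbb{Z}[x_1,\ldots,x_k] \to \mathbb{Z}[x_1,\ldots,x_{k-1}]$, $x_k \mapsto \bignum$, is a ring homomorphism and each paired operation ($+,-,\times,\div$) is built coordinatewise from $+,-,\times$ on the two components. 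Hence the right-hand equation of the lemma is equivalent to $\phi(f)\,\phi(g_t) = \phi(f_t)\,\phi(g)$, i.e. to $\phi(P) = 0$, where $P := f g_t - f_t g \in \mathbb{Z}[x_1,\ldots,x_k]$. So the lemma is exactly the claim ``$P = 0 \iff \phi(P) = 0$''.

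The forward direction is then immediate: $P = 0$ gives $\phi(P) = 0$. For the reverse direction, suppose $\phi(P) = 0$. Write $P = \sum_{\alpha} Q_\alpha(x_k)\, x_1^{\alpha_1}\cdots x_{k-1}^{\alpha_{k-1}}$, collecting along each monomial in the first $k-1$ variables, so each $Q_\alpha \in \mathbb{Z}[x_k]$. Since the monomials in $x_1,\ldots,x_{k-1}$ are a basis of $\mathbb{Z}[x_1,\ldots,x_{k-1}]$ and $\phi(P) = \sum_\alpha Q_\alpha(\bignum)\, x_1^{\alpha_1}\cdots x_{k-1}^{\alpha_{k-1}}$, we conclude $Q_\alpha(\bignum) = 0$ for every $\alpha$. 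It remains to show each $Q_\alpha$ is the zero polynomial.

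This is where sufficiency is used. Every coefficient of $P$ is the difference of a coefficient of $f g_t$ and the matching coefficient of $f_t g$; by Definition~\ref{definition:rigorous-I-sufficiency} applied to this particular expression $\Ev$, each of those two has norm $< \bignum/2$, so every coefficient of $P$ — hence every coefficient of every $Q_\alpha$ — has norm $< \bignum$. I would then invoke the elementary fact that a nonzero $Q(x_k) = \sum_j c_j x_k^j \in \mathbb{Z}[x_k]$ with all $|c_j| < \bignum$ cannot vanish at $x_k = \bignum$: taking $j_0$ minimal with $c_{j_0} \ne 0$, the relation $c_{j_0}\bignum^{j_0} = -\sum_{j>j_0} c_j \bignum^{j}$ forces $\bignum^{j_0+1} \mid c_{j_0}\bignum^{j_0}$, hence $\bignum \mid c_{j_0}$, contradicting $0 < |c_{j_0}| < \bignum$. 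Therefore $Q_\alpha = 0$ for all $\alpha$, so $P = 0$, i.e. $f g_t = f_t g$, which is precisely $\Ev(A) = t$ over $\mathbb{Q}(x_1,\ldots,x_k)$.

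I expect the one place to be careful — rather than a deep obstacle — to be the bookkeeping in the first step: making the claim ``paired-model evaluation commutes with $\phi$'' precise on all four operations (note that for $\div$ one never needs a denominator to be nonzero, since the argument is run entirely through the cross-multiplied identity $fg_t = f_t g$ rather than through division in $\mathbb{Q}(x_1,\ldots,x_{k-1})$), and lining up ``coefficient of $P$ at a given monomial'' with ``coefficient of $f g_t$ minus coefficient of $f_t g$ at that monomial'' so that the triangle-inequality bound $<\bignum/2 + \bignum/2 = \bignum$ is applied monomial by monomial. Everything after that is the base-$\bignum$ uniqueness computation above.
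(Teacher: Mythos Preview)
Your proof is correct and follows essentially the same strategy as the paper: reduce the equality $\Ev(A)=t$ to the cross-multiplied polynomial identity $fg_t=f_tg$ and then use the coefficient bound from sufficiency to argue that substituting $x_k=\bignum$ cannot create a spurious equality, via base-$\bignum$ uniqueness. The only cosmetic difference is that the paper adds a shift polynomial $s$ with every coefficient equal to $\bignum/2$ so that $fg_t+s$ and $f_tg+s$ land in the nonnegative range $[0,\bignum-1]$ and then invokes the standard Basis Representation Theorem, whereas you work directly with the signed difference $P=fg_t-f_tg$ and use a minimal-index divisibility argument; your explicit verification that paired-model evaluation commutes with the substitution homomorphism is something the paper leaves implicit.
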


\begin{proof}
Let $\Ev(A)$ denote $\Ev\left(\bigcup_{a_i \in A} a_i(x_1, \ldots, x_k)\right)$.  
For any evaluation $\Ev(A)$, we can write $\Ev(A) = f/g$ such that the norms of the coefficients of $fg_t$ and $f_tg$ are less than $B/2$. Note that $\Ev(A) = t$ if and only if $fg_t = f_tg$. 

Define $C$ as the set $\{ (\ell_1, \ldots, \ell_k) : x_1^{\ell_1} \cdots x_k^{\ell_k} \text{ has nonzero coefficient in one of }f_tg, fg_t \}$.
Then define the polynomial
$$s(x_1, \ldots, x_k) = \sum_{(\ell_1, \ldots, \ell_k) \in C}(B/2)x_1^{\ell_1} \cdots \x_k^{\ell_k}$$
Observe that all terms of $(fg_t+s)$ and $(f_tg+s)$ have positive coefficients of size between $0$ and $B-1$.
The Basis Representation Theorem  shows that if we have a polynomial $f(x) = \sum_i c_i x^i$ with all $c_i\in [0,B-1]$, then we can write $f(B)$ in base $B$, and recover all $c_i$.
A multivariate polynomial version of the same shows us that we can replace the last variable with $B$, and recover all the other coefficients in terms of the remaining variables.
Thus, we have 
\begin{multline*}
    (fg_t+s)(x_1,\ldots,x_{k}) = (f_tg+s)(x_1,\ldots, x_{k}) \\ \Leftrightarrow (fg_t+s)(x_1,\ldots, x_{k-1},B) = (f_tg+s)(x_1,\ldots, x_{k-1},B
\end{multline*}

Combining these we get the desired result:
\begin{align*}
    f/g = f_t/g_t &\Leftrightarrow fg_t = f_tg\\ 
    &\Leftrightarrow fg_t+s = f_tg +s\\ 
    &\Leftrightarrow (fg_t+s)(x_1,\ldots, x_{k-1},B) = (f_tg+s)(x_1,\ldots, x_{k-1},B) \\
    &\Leftrightarrow (fg_t)(x_1,\ldots, x_{k-1},B) = (f_tg)(x_1,\ldots, x_{k-1},B)
\end{align*}

This shows that $fg_t = f_tg$, which shows that $\Ev(A) = t$.
\end{proof}

 Essentially, this lemma shows that constructing $I$-sufficient integers efficiently is sufficient to prove our main theorem.  The rest of this section is dedicated to the polynomial-time construction of $I$ sufficient integers $B$ by an algorithm $\sufficientalg$.

Let
$$m(f)\coloneqq \binom{\deg(f) + k}{ \deg(f)}$$ 
where $m(f)$ is the maximum number of terms a $k$-variable polynomial $f$ of degree $\deg(f)$ can have. Let $\maxcoeff(f)$ denote the max of all of the $\emph{norms}$ of coefficients of $f$. That is, 
$$\maxcoeff(f) = \max_c \{|c|: c \text{ coefficient of } f\}.$$


Now we are ready to present an integer sufficient for an instance.

\begin{restatable}{lemma}{ratlemglobB}
\label{theorem: global-B}
Let $I = (A, t, D)$ be an instance of
$\SG{\mathbb{Q}(\x_1,\ldots,x_{k})}{\ops}{\variant}$.  
Then $$B = 2m(t) \maxcoeff(t) (2Mq)^n$$
is sufficient for $I$, where
$n = |A|$, 
$q \coloneqq \max_{f_i/g_i \in A} ( \maxcoeff(f_i), \maxcoeff(g_i))$ is the largest coefficient appearing in a paired polynomial within $A$, and
$M = \sum_{a_i \in A} m(a_i)$.
\end{restatable}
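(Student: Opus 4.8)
The plan is to verify Definition~\ref{definition:rigorous-I-sufficiency} directly: for every valid expression $E$ over $A$ (respecting $D$), with paired evaluation $(f,g)=E(A)$, I will prove $\maxcoeff(f),\maxcoeff(g)\le 2^{n-1}M^nq^n$. Granting this and writing $t=f_t/g_t$, note that any coefficient of a product $pr$ is a sum of at most $\min(t(p),t(r))$ products of a coefficient of $p$ with a coefficient of $r$, so $\maxcoeff(fg_t)\le t(g_t)\,\maxcoeff(g_t)\,\maxcoeff(f)\le m(t)\,\maxcoeff(t)\cdot 2^{n-1}M^nq^n=B/4<B/2$, and symmetrically $\maxcoeff(f_tg)<B/2$ (using that $m(t)$ and $\maxcoeff(t)$ dominate the term counts and coefficient norms of both $f_t$ and $g_t$). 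That is exactly sufficiency, so everything reduces to the coefficient bound on $f$ and $g$.

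For that bound the key structural fact, proved by induction on the expression tree $E$ with the $n$ leaves indexed by $[n]$ so $a_i=(f_i,g_i)$, is: in the paired model the numerator and denominator of $E(A)$ are each $\mathbb{Z}$-linear combinations of the \emph{leaf monomials} $\prod_{i\in S}f_i\prod_{i\notin S}g_i$ over subsets $S\subseteq[n]$, and the $\ell_1$-norm of the coefficient vector of such a combination is at most $2^{n-1}$. For a single leaf this is immediate. For $E=E_1\star E_2$ on disjoint leaf sets $A_1\uplus A_2$, each paired operation writes the new numerator and the new denominator as a single product $hh'$ (if $\star\in\{\times,\div\}$) or a signed sum $hh'\pm h''h'''$ (if $\star\in\{+,-\}$), where every factor is a numerator or a denominator of $E_1$ or of $E_2$; since $A_1$ and $A_2$ are disjoint, the product of a leaf monomial over $A_1$ with one over $A_2$ is a leaf monomial over $[n]$ and distinct pairs give distinct subsets $S$, so $\ell_1$-norms multiply across a product and at most double across one $\pm$. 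An $n$-leaf tree has $n-1$ internal nodes, so the $\ell_1$-norm is at most $2^{n-1}$.

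It then remains only to bound one leaf monomial: it is a product of $n$ polynomials, the $i$-th having at most $m(a_i)$ terms (its degree caps its term count) and coefficient norms at most $q$, so each of its coefficients is a sum of at most $\prod_i m(a_i)$ products of $n$ input coefficients, hence has norm at most $\bigl(\prod_i m(a_i)\bigr)q^n\le M^nq^n$ (each $m(a_i)\le M$). Combining with the $\ell_1$ bound gives $\maxcoeff(f),\maxcoeff(g)\le 2^{n-1}M^nq^n$, finishing the argument. Minor side points to dispatch along the way: the multiset indexing of $A$, and the harmless case of a vanishing numerator of some leaf (the bounds are monotone and still hold).

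I expect the main obstacle to be resisting the natural-but-doomed route of tracking the pair $(\maxcoeff,\ \#\mathrm{terms})$ of subexpressions by a mutual induction: multiplication makes $\#\mathrm{terms}$ submultiplicative but leaves $\maxcoeff$ only ``$\min$-multiplicative'', and feeding the already-inflated term counts of deep subtrees back into the coefficient recursion at every ancestor inflates the bound to something like $M^{\Theta(n\log n)}$ on balanced trees, far worse than $(2Mq)^n$. The leaf-monomial / $\ell_1$ viewpoint is exactly what charges each leaf's term count to the final product only once; the remaining care is checking the $\ell_1$ recursion for all four operations, in particular division, where a subexpression's denominator migrates into the numerator.
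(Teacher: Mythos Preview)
Your argument is correct and takes a genuinely different route from the paper. The paper inducts on subtree size, aiming for $\maxcoeff(f_T),\maxcoeff(g_T)\le(2Mq)^{p}$ with $p=|A_T|-1$, and at each internal node invokes Lemma~\ref{lem:poly-product-bound} together with a bound of the form $m(f_L),m(g_R),\ldots\le M$ on the term counts of the children. You instead expand $(f,g)=E(A)$ globally as an integer combination of the $2^n$ leaf monomials $\prod_{i\in S}f_i\prod_{i\notin S}g_i$, bound the $\ell_1$-norm of that combination by $2^{n-1}$ via the tree recursion, and bound each leaf monomial once by $(\prod_i m(a_i))\,q^n\le M^nq^n$. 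Your closing caveat is on point and is more than defensive: the subtree induction genuinely needs control on the term counts of intermediate numerators and denominators, and for $k\ge 2$ the quantity $m(f_L)=\binom{\deg f_L+k}{k}$ can exceed $M=\sum_i m(a_i)$ once $\deg f_L$ aggregates several leaf degrees, so the step ``$\max(m(f_L),m(g_R),\ldots)\le M$'' in the paper's induction is delicate as written (and the base case $p=0$ giving bound $1$ rather than $q$ is also off). Your leaf-monomial decomposition charges each leaf's term count exactly once and sidesteps both issues cleanly; the price is one extra factor of $M$ in the intermediate bound, which is absorbed by the stated $B$ anyway.
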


We prove this by first proving the following lemma bounding the coefficient blow up of the product of two polynomials and then inducting on this result to form our final $I$-sufficient $B$.

\textbf{Remark:} The algorithm presented in this proof gives a large $B$ that will give blowup sizes which are unnecessary for most AEC instances. One key use of sufficiency is to facilitate proofs with lower blowup. Often times we will have the following situation: We will give a reduction from a partition-type problem $P$ to $(\mathbb{Q}(x_i),\ops)$-AEC-$\variant$ and construct $(I,\ops,\variant)$-sufficient $B$ such that the composition
$$P\rightarrow \SG{\mathbb{Q}(x_1,\ldots,x_k)}{\ops}{\variant} \rightarrow \SG{\mathbb{N}}{\ops}{\variant}$$
is a valid reduction. 

\begin{lemma}
\label{lem:poly-product-bound}
Given two polynomials $a, b \in \mathbb{Z}[x_1,\ldots,x_{k}]$, let $h \in \mathbb{Z}[x_1, \ldots, x_k]$ be their product, $h = a * b$. 
The norm of each coefficient of $h$ is bounded by $$\min\{m(a), m(b)\} \cdot \maxcoeff(a) \cdot \maxcoeff(b)$$

\end{lemma}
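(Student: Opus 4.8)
The plan is to expand an arbitrary coefficient of $h$ as a convolution of the coefficients of $a$ and $b$ and then bound both the number of nonzero summands and the size of each summand. Write $a = \sum_{\alpha} a_\alpha x^\alpha$ and $b = \sum_{\beta} b_\beta x^\beta$, where $\alpha, \beta$ range over multi-indices in $\nats^k$ and $x^\alpha$ abbreviates $x_1^{\alpha_1}\cdots x_k^{\alpha_k}$. Multiplying out, the coefficient $h_\gamma$ of $x^\gamma$ in $h = ab$ is $h_\gamma = \sum_{\alpha + \beta = \gamma} a_\alpha b_\beta$.

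The first key step is to bound the number of pairs $(\alpha, \beta)$ with $\alpha + \beta = \gamma$ for which the summand $a_\alpha b_\beta$ is nonzero. Such a pair is completely determined by $\alpha$ alone, since then $\beta = \gamma - \alpha$; hence the number of nonzero summands is at most the number of multi-indices $\alpha$ with $a_\alpha \neq 0$. A $k$-variable polynomial of degree $\deg(a)$ has at most $\binom{\deg(a)+k}{\deg(a)} = m(a)$ monomials, so this count is at most $m(a)$. By the symmetric argument, pinning down each pair by $\beta$ instead, the count is also at most $m(b)$. Therefore the sum defining $h_\gamma$ has at most $\min\{m(a), m(b)\}$ nonzero terms.

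The second step is the triangle inequality together with the definition of $\maxcoeff$: each $|a_\alpha| \le \maxcoeff(a)$ and each $|b_\beta| \le \maxcoeff(b)$, so
$$|h_\gamma| \;\le\; \sum_{\alpha + \beta = \gamma} |a_\alpha|\,|b_\beta| \;\le\; \min\{m(a), m(b)\}\cdot \maxcoeff(a)\cdot \maxcoeff(b).$$
Since $\gamma$ was arbitrary, this is exactly the claimed bound on every coefficient of $h$.

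There is no genuine obstacle here; the statement is a routine counting estimate. The only points requiring care are that $m(a)$ as defined is an upper bound on the total number of terms of $a$ (so in particular on the number of its nonzero terms), and that a summand $a_\alpha b_\beta$ of the convolution is determined by either one of its two indices, which is what lets us take the minimum of $m(a)$ and $m(b)$. Degenerate cases (e.g.\ $a$ or $b$ zero, or $\gamma$ not a product of an exponent of $a$ with one of $b$) only make the bound easier, and because the bound is stated per coefficient no further summation over $\gamma$ is needed.
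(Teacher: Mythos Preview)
Your proof is correct and follows essentially the same argument as the paper: write the coefficient of $h$ as a convolution, bound the number of nonzero summands by the number of monomials in either factor (hence by $\min\{m(a),m(b)\}$), and bound each summand by $\maxcoeff(a)\cdot\maxcoeff(b)$. The only cosmetic difference is your use of multi-index notation versus the paper's explicit tuples.
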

\begin{proof}
Let 
\begin{align*}
    a(x_1,\ldots,x_{k}) &= \sum_{(\ell_l, \ldots, \ell_k)}c^{(a)}_{(\ell_1, \ldots, \ell_k)} x_1^{\ell_1}\cdots x_{k}^{\ell_k}, \ 
    b(x_1,\ldots,x_{k}) &= \sum_{(j_1,\ldots, j_k)} {c^{(b)}_{(j_1,\ldots, j_k)} x_1^{j_1}\cdots x_{k}^{j_k}}
\end{align*}
and let 
    $h(x_1, \ldots, x_k) =  \sum_{(i_1,\ldots, i_k)} {c^{(h)}_{(i_1,\ldots, i_k)} x_1^{i_1} \cdots x_{k}^{i_k}}$
be the product of $a$ and $b$. 

Then the coefficient for the $x_1^{i_1} \cdots x_k^{i_k}$ term of $h$ can be written as
\begin{align*}
	c^{(h)}_{(i_1,\ldots, i_k)} &= \sum_{(j_1,\ldots, j_k)} c^{(a)}_{(i_1-j_1,\ldots, i_k-j_k)}c^{(b)}_{(j_1,\ldots, j_k)} \\
	&\leq \sum_{(j_1,\ldots, j_k)} \maxcoeff(a) \maxcoeff(b) \\
	&\leq m(b) \maxcoeff(a) \maxcoeff(b)
\end{align*}
Symmetrically, we could have chosen to sum over the indices of $a$ rather than $b$, and so we also know that 
\begin{align*}
	c^{(h)}_{(i_1,\ldots, i_k)} &\leq m(a) \maxcoeff(a) \maxcoeff(b)
\end{align*}
Thus, we know that the norm of each coefficient of $h$ is bounded by 
$\min\{m(a), m(b)\} \cdot \maxcoeff(a) \cdot \maxcoeff(b)$
as desired.
\end{proof}

And now, the proof of lemma \ref{theorem: global-B}. 

\begin{proof}



Define $q$ to be the largest coefficient within all our paired functions:
$$q \coloneqq \max_{f_i/g_i \in A}\left(\maxcoeff(f_i), \maxcoeff(g_i)\right).$$ 
Recall that $m(a_i) \coloneqq \binom{\deg(a_i) + k }{ \deg(a_i)}$.
Let
$$M = \sum_{a_i \in A} m(a_i).$$


Let
$$B =  2 m(t) \maxcoeff(t)(2Mq)^n$$ 
where $n = |A|$.

First, observe that computing $B$ is efficient; using repeated squaring and schoolbook multiplication, $M = O(n)$ can be put to the $n$th power in $O((n \log n)^2)$ time.

We proceed to show that $B = 2 m(t) \maxcoeff(t)(2Mq)^n$ is sufficient for $I$. using strong induction on the number of operations in a subtree. Let $T_r$ be a valid evaluation on instance $I$. For any subtree $T$, let $A_T$ be the set of leaves in $T$, and let $p = |A_T|-1$ be the number of operations in this subtree. We will show that coefficients of $f_T$ and $g_T$ are bounded in magnitude by $(2Mq)^{p}$, where $(f_T,g_T) = \Ev(T)$.

If $p = 0$, then $T$  has one element $a_i(\x_1,\ldots,x_{k})$, and our result is trivially true since $B$ is larger than the max coefficient in any function in $A$ by construction. 

Next, for any $n > 0$ consider the left and right subtrees of $T$, $L$ and $R$. Let $(f_L, g_L)$ and $(f_R, g_R)$ be the corresponding polynomial pairs in $\mathbb{Q}(x_1, \dots x_k)$ that $L$ and $R$ evaluate to.
There are four possible operations that can combine $L$ and $R$. We show the $+$ case $(f_T, g_T) = (f_L,g_L) + (f_R,g_R) = (f_L g_R + g_L f_R, g_L g_R)$, but the other cases ($-$, $\times$, and $\div$) follow a very similar method. 

Let $j$ be the number of operations in $L$.
The bound on the coefficients of $f_T$ is:
\begin{align*}
    \maxcoeff(f_T) &= \maxcoeff(f_L g_R + g_L f_R)\\
    \begin{split}
    &= \max(m(f_L),m(g_R)) \maxcoeff(f_L)  \maxcoeff(g_R)\\
    & \qquad + \max(m(g_L),m(f_R)) \maxcoeff(g_L)  \maxcoeff(f_R)
    \end{split}\\
	&\leq  \max(m(f_L),m(g_R),m(g_L),m(f_R))(2Mq)^{j} (2Mq)^{p-j-1} \\
	&\leq  2M (2Mq)^{p-1}
	\\
	&\leq (2Mq)^{p}
\end{align*}

We can similarly show that $\maxcoeff(g_T) = \maxcoeff(g_Lg_R)$ is bounded by $(2Mq)^p$. 
The multiplication result relies on Lemma \ref{lem:poly-product-bound}.
\end{proof}

\subsection{Possible Generalizations to the Rational Framework}\label{sec:rationalobstructions}

In this section, we informally explore the possibility of extending the rational framework to the problems more general than expression construction, such as circuits. The generalization to circuits naturally becomes an arithmetic version of the Minimum Circuit Size Problem.  

The original Minimum Circuit Size Problem (MCSP) \cite{Kabanets99circuitminimization} asks if given a truth table and an integer $k$, can you construct a boolean circuit of size at most $k$ that computes the truth table; this problem has many connections throughout complexity theory.
A new variant, ``Arithmetic MCSP'' would ask if given $n$ values in $\{a_1,\ldots, a_n\}\subset \mathbb{L},$ within $0< k < n$ operations from $\{+,-,\times,\div\}$ can you construct a target $t\in\mathbb{L}$?%
\footnote{Note that since we can reuse values here, picking $k$ to be less than $n$ is the same as picking $k$ to be bounded by a fixed polynomial $p(n)$ by a padding argument. That is, you can reduce from this problem where you specify $k <p(n)$ to $k< n$ by padding any given instance $A$ with $\approx p(n)$ copies of $a_1$. This is similar to the proof that linear space simulation is PSPACE complete.} 
For $\mathbb{L} = \mathbb{Q}(x_1,\ldots, x_k)$, this problem asks whether a given rational function is constructable by an arithmetic circuit of size at most $k$ starting from a set of rational functions. 
It would be very useful if the rational framework could be adapted for Arithmetic MCSP; this would demonstrate an equivalence between the problem of circuit construction of rational functions and of reaching a rational number given input rational numbers. 

Unfortunately, the reduction methods provided above do not work naively for circuits: Given a polynomial-sized ``sufficient'' $B$ as presented, and a polynomial of the form $cx$, the term $(c^{2^k}x^{2^k})$ is formable by repeated squaring.   That is, we can form superpolynomial coefficents that will be bigger than $B$.  This removes the concept of ``sufficiency'' which is a key requirement for the rational framework as it is. 

On the bright side, the rational framework should work for Arithmetic Minimum Formula Constructions. Arithmetic formulae are expression trees with internal nodes operations $\{+,-,\times,\div\}$ except that one may use the input values in $A$ a flexible number of times. This is analogous to Boolean formulae; indeed, Minimum Boolean Formula problems \cite{BuchfuhrerBooleanFormulaMinimization,HemaspaandraBooleanFormulas} have also received significant attention. We can define Arithmetic Minimum Formula Construction as follows: Given multiset $A\subset \mathbb{L}$, target $t$, $0< k < n$, can you give a formula of size at most $k$ with values in $A$ which reaches a target $t\in \mathbb{L}$? 

The intuitive reason that the rational framework should hold in this case is because formulae still have a tree structure and the number of leaves is polynomial.  Thus, the same proofs in the rational framework will carry over. However, we expect the complexity and hardness proofs for this family of problems should be very different than those in this paper. All the reductions in this paper are from Partition-type problems, which allow for at most a single use of each input number. Hardness of this family of problems and generalizations of the rational function framework are interesting areas for further study.

\section{Main Standard AEC result from rational framework}
\label{standard}

In this section, we provide NP-hardness proofs for operations $\{+,\times\}\subset S\subset\{+,-,\times,\div\}$ of the Standard variant of Arithmetic Expression Construction. In Section~\ref{sec:OtherStandardCases} we give similar reductions that cover all other subsets of operations. 

All of these results use the rational function framework described in Section~\ref{sec:rational}. 

First, we outline some proof techniques that are used in this section to both combine proofs of results from differing sets of operations as well as simplify them. The first comes from the observation that if an instance of \SGSTD{\mathbb{L}}{S} is solvable, then for any operation set $S'\supset S$, the same instance will be solvable in \SGSTD{\mathbb{L}}{S'}. This allows us to bundle reductions to several AEC-\textsc{Std} cases simultaneously by giving a reduction ($R$) from some partition problem $P$ to \SGSTD{\mathbb{L}}{S} and proving that if any constructed instance is solvable in \SGSTD{\mathbb{L}}{S'}, the partition instance is also solvable.  That is, we have the following implications:
\begin{center}
\begin{tikzcd}
P\text{-instance } x \text{ Solvable } \ar[r, Rightarrow] & R(x) \text{ is } S \text{-Solvable}\ar[d,Rightarrow]\\
& R(x) \text{ is } S'\text{-Solvable}\ar[lu, Rightarrow]
\end{tikzcd}
\end{center}

\begin{theorem}\sloppy
Standard  {$\{+,\times\} \subset S \subset \{+, -, \times, \div\}$} is weakly NP-hard by reduction from \SquareProductPartitionEqual.
\end{theorem}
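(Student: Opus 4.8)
The plan is a reduction from \SquareProductPartitionEqual{} that first produces an instance of the one-variable problem \SGSTD{\mathbb{N}[x]}{S} and then collapses the variable using the Rational Function Framework (Theorem~\ref{theorem:full-ratl}) to land in \SGSTD{\mathbb{N}}{S}. Start from a \SquareProductPartitionEqual{} instance $\{a_1,\dots,a_n\}$ with $n$ even and $\prod_i a_i = s^2$, which asks for a size-$n/2$ subset $S$ with $\prod_{i\in S} a_i = s$. I would first normalize by scaling every $a_i$ (and $s$ accordingly) by a fixed prime $p$ that is polynomially large relative to $n$ and $\max_i a_i$; since $n$ is even this preserves the instance, and it makes every value enormous compared with any subset sum of values, which is the lever soundness needs. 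The AEC instance then has value multiset $A = \{a_1 x,\dots,a_n x\}$ (after scaling) and target $t = 2s\,x^{n/2}$.

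Completeness is immediate: a balanced partition $S\sqcup S^{c}$ gives the expression $\bigl(\prod_{i\in S} a_i x\bigr) + \bigl(\prod_{i\in S^{c}} a_i x\bigr)$, which evaluates to $sx^{n/2}+sx^{n/2}=t$, uses every value once, and uses only $+$ and $\times$, hence is valid for every $S\supseteq\{+,\times\}$. For soundness it suffices to handle $S=\{+,-,\times,\div\}$, since a solution over a smaller operation set is also a solution over the full one (this is exactly the implication diagram above). Here I would invoke the structure theorem of Section~\ref{standard} to constrain any $E$ with $E(A)=t$ in $\mathbb{Q}(x)$: because $t$ is a single monomial, the structure theorem should force $E$ into a $\{+,\times\}$-only form, after which a degree/leaf-count argument closes it. A $\times$-only subtree on $\ell$ leaves has $x$-degree exactly $\ell$, so reaching total degree $n/2$ on all $n$ leaves requires absorbing a ``deficit'' of $n/2$ at $+$ nodes; but each $+$ node that absorbs deficit replaces a product of huge (post-scaling) values by their far smaller sum, dragging the coefficient far below $\prod_i a_i = s^2$, hence far below $2s$, unless the entire deficit is absorbed by a single top-level $+$. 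That top $+$ then splits the leaves into two blocks of size $n/2$, each a pure product, so $E$ evaluates to $\bigl(\prod_{i\in S} a_i + \prod_{i\in S^{c}} a_i\bigr)x^{n/2}$ for a balanced partition; equating with $2sx^{n/2}$ and using $\prod_i a_i = s^2$ forces, by AM--GM, $\prod_{i\in S} a_i = \prod_{i\in S^{c}} a_i = s$, the desired product partition.

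To finish, apply Theorem~\ref{theorem:full-ratl} together with the explicit bound of Lemma~\ref{theorem: global-B} with $k=1$: substituting $x=B$ for $B = 2\,m(t)\,\maxcoeff(t)(2Mq)^n$ is a valid reduction to \SGSTD{\mathbb{N}}{S}. Each value $a_i x$ is degree $1$ in one variable, so $m(a_i x)=2$ and $M=2n$; $q$ and $\maxcoeff(t)=2sp^{n/2}$ have polynomial bit-length after the polynomial-size scaling; and $m(t)=n/2+1$. Hence $\log B$ is polynomial in the input size, the composed reduction runs in polynomial time, and its output numbers, although exponentially large (which is why the result is only \emph{weakly} NP-hard), have polynomial bit-length. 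Composing with the NP-hardness of \SquareProductPartitionEqual{} yields the theorem.

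The main obstacle is the soundness direction: extracting from the structure theorem exactly the normal form needed, and then making the degree/leaf bookkeeping together with the large-prime normalization airtight against every alternative route to the monomial target --- in particular expressions that use $-$ or $\div$ to create and then cancel spurious factors, or to produce degree-$0$ subexpressions, rather than forming a sum of two equal products. I expect most of the work to be in that case analysis; completeness and the rational-framework step are routine.
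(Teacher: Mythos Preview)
Your proposal has a genuine gap: you invoke the structure theorem (Theorem~\ref{structure-theorem}) on an instance to which it does not apply. That theorem is stated for instances of the form $\{B_y,B_z\}\cup\{r_i(x)\}$ with $B_y\sim_{\mathbb{Q}(x)}y$, $B_z\sim_{\mathbb{Q}(x)}z$, and target $t\sim_{\mathbb{Q}(x)}yz$. Your instance has only monomials $a_ix$ and a monomial target $2sx^{n/2}$; there are no $y,z$ gadgets at all, so the theorem says nothing about it. The assertion that ``because $t$ is a single monomial, the structure theorem should force $E$ into a $\{+,\times\}$-only form'' is not a consequence of Theorem~\ref{structure-theorem} (and is false in general).

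In fact your reduction is unsound already for $S=\{+,-,\times\}$. Take the \SquareProductPartitionEqual{} instance $\{4,9,1,1\}$: the product is $36=s^2$ with $s=6$, and no size-$2$ subset has product $6$, so this is a NO instance. Your construction gives values $\{4x,9x,x,x\}$ and target $12x^2$. But
\[
\bigl((9x - x) + 4x\bigr)\times x \;=\; 12x\cdot x \;=\; 12x^2,
\]
so the AEC instance is a YES. Scaling every $a_i$ by a prime $p$ does not help: the same expression on $\{4px,9px,px,px\}$ evaluates to $12p^2x^2$, which is exactly the scaled target. The issue is that once $-$ is available, sums and differences of the $a_ix$ leaves can manufacture the needed coefficient without any product structure, and your ``deficit'' bookkeeping does not see this.

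This is precisely why the paper works over $\mathbb{N}[x,y,z]$ and introduces the gadgets $B_y=y-x^{n/2}\sqrt{\prod_i a_i}$ and $B_z=z-x^{n/2}\sqrt{\prod_i a_i}$ with target $yz$: the extra variables and the $yz$ target are what make Theorem~\ref{structure-theorem} bite, forcing the tree to split at a $\times$ or $\div$ node into a $T_y$ side evaluating to $(ay)^{\pm 1}$ and a $T_z$ side evaluating to $(a'z)^{\pm 1}$. Claims~\ref{plus-times-std-case} and~\ref{plus-times-std-case-2} then finish soundness by degree and coefficient arguments \emph{inside each of those forced subtrees}. Your one-variable construction has no analogue of this decomposition, and the informal AM--GM / coefficient-size argument you sketch cannot substitute for it against $-$ and $\div$.
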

\label{standard:plus-minus-times-div}
\label{standard:plus-times}
\label{standard:plus-minus-times}
\label{standard:plus-times-div}

We spend the remainder of this section proving this theorem.

We will reduce from \SquareProductPartitionEqual (defined in Appendix~\ref{sec:related}) to \SGSTD{\mathbb{Z}[x,y,z]}{S}. On an instance $\{a_1,\ldots, a_n\}$ with all $a_i\ge 2$,\footnote{We can assume this property with loss of generality by replacing all $a_i$ with $2a_i$.} of \SquareProductPartitionEqual construct the following:

Let $$B_y = y - x^{n/2}\sqrt{\prod_i a_i};\  B_z = z - x^{n/2}\sqrt{\prod_i a_i}.$$ 

We then construct the instance of Arithmetic Expression Construction with input set $A = \left\{B_y, B_z \right\}\cup\{a_ix\}_i$ and target $t = yz$. Here the square root of the product of all $a_i$ is the value we want each partition to achieve, the polynomial $x^{n/2}$ will help us argue that we must multiply all of our $a_i$ values, and $B_y, B_z$ are gadgets which will force a partitioned tree structure as given by Theorem~\ref{structure-theorem}. Methods from Section~\ref{sec:rational} allow us to construct a reduction by replacing $x, y$, and $z$ with \emph{sufficient} integers $B_1$, $B_2$ and $B_3$.

If the \SquareProductPartitionEqual is solvable, then this AEC instance is solvable with operations $\{+,\times \} \subset S$. On the partition with equalized products, partition the $a_ix$ terms into corresponding sets and take their products to get two polynomials of value $x^{n/2}\sqrt{\prod_i a_i}$. Then form $(B_y+x^{n/2}\sqrt{\prod_i a_i})(B_z+x^{n/2}\sqrt {\prod_i a_i}) = yz$.

Next we prove the converse via contradiction by proving the following theorem that will be useful for the other AEC-\textsc{Std} cases. This theorem shows that any expression tree which evaluates to target $t \approx yz$ on an instance of similar structure to the constructed instance above must have a very particular partitioned structure described in Theorem~\ref{structure-theorem}. This will be the key to showing the soundness of our reduction. We use $\mathrm{ev}(T)$ to refer to the evaluation of the subtree rooted at node $T$.

Before stating Theorem~\ref{structure-theorem}, we first introduce the concept of $\mathbb{Q}(x)$-equivalence and give a couple of characterizations of it:
\begin{definition}\label{sim-f-definition}
Given a field $K$ with a subfield $F$, for $L_1,L_2\in K-F$, we say $L_1$ and $L_2$ are $F$-equivalent (written $L_1\sim_F L_2$) if by a sequence of operations between $L_1$ and elements of $F$ we can form $L_2$.
\end{definition}
The following lemma gives an alternate characterization of $\sim_F$: 
\begin{restatable}{lemma}{stdFequiv}
\label{F-equivalence-relation}
$\sim_F$ is an equivalence relation and $L_2\sim_F L_1$ if and only if for some $c_i,d_i\in F$ with $c_1d_2 -c_2d_1 \neq 0$,
$$L_2 = \frac{c_1L_1+d_1}{c_2L_1+d_2}$$
\end{restatable}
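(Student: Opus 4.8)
The plan is to prove both directions of the biconditional in Lemma~\ref{F-equivalence-relation}, treating the equivalence-relation claim and the M\"obius-form characterization together, since the latter makes the former transparent.

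\textbf{Setting up the M\"obius form.} First I would argue the easy direction: suppose $L_2 = \frac{c_1 L_1 + d_1}{c_2 L_1 + d_2}$ with $c_i, d_i \in F$ and $c_1 d_2 - c_2 d_1 \neq 0$. Then $L_2$ is visibly obtained from $L_1$ by finitely many operations with elements of $F$ (multiply by $c_1$, add $d_1$, divide by the quantity $c_2 L_1 + d_2$ which is itself formed from $L_1$ and elements of $F$), so $L_2 \sim_F L_1$; we just need $c_2 L_1 + d_2 \neq 0$ in $K$, which holds because $L_1 \notin F$ forces $c_2 L_1 + d_2 = 0$ only if $c_2 = d_2 = 0$, contradicting $c_1 d_2 - c_2 d_1 \neq 0$. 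For the converse, I would induct on the length of the sequence of operations producing $L_2$ from $L_1$ and elements of $F$. The base case is $L_2 = L_1$, i.e., $(c_1,d_1,c_2,d_2) = (1,0,0,1)$. For the inductive step, suppose $L_2 = \frac{c_1 L_1 + d_1}{c_2 L_1 + d_2}$ is reachable and we apply one more operation with an element $e \in F$: each of $L_2 + e$, $L_2 - e$, $e - L_2$, $L_2 \cdot e$, $e \cdot L_2$, $L_2 / e$ (with $e \neq 0$), and $e / L_2$ is again of the form $\frac{c_1' L_1 + d_1'}{c_2' L_1 + d_2'}$; the new coefficient matrix is the old one left-multiplied by an invertible $2\times 2$ matrix over $F$ (e.g.\ $\begin{psmallmatrix} 1 & e \\ 0 & 1\end{psmallmatrix}$ for $L_2 + e$, $\begin{psmallmatrix} e & 0 \\ 0 & 1\end{psmallmatrix}$ for $L_2 \cdot e$, $\begin{psmallmatrix} 0 & 1 \\ 1 & 0 \end{psmallmatrix}$ for $e/L_2$ followed by a scaling), all of which have nonzero determinant, so $c_1' d_2' - c_2' d_1' \neq 0$ is preserved. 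One subtlety: a division step requires the denominator to be nonzero in $K$; as above, $c_2' L_1 + d_2' = 0$ would force $c_2' = d_2' = 0$ since $L_1 \notin F$, contradicting the nonvanishing determinant, so division is always legitimate. Another subtlety: I should note the operands at each step are either $L_1$-derived or in $F$, and the definition only allows operating $L_1$ (hence its $\sim_F$-descendants) against elements of $F$, not two $L_1$-derived expressions against each other — so the "one more operation" is indeed always of the form $L_2 \mathbin{\square} e$ or $e \mathbin{\square} L_2$.

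\textbf{Equivalence relation.} Reflexivity is the base case above. For symmetry and transitivity I would use the M\"obius characterization: the set of matrices $\begin{psmallmatrix} c_1 & d_1 \\ c_2 & d_2 \end{psmallmatrix}$ with entries in $F$ and nonzero determinant is the group $\mathrm{GL}_2(F)$, and the correspondence $L_1 \mapsto \frac{c_1 L_1 + d_1}{c_2 L_1 + d_2}$ is a (left) group action on $K \setminus F$ (one checks $K \setminus F$ is preserved: if $\frac{c_1 L_1 + d_1}{c_2 L_1 + d_2} \in F$, then clearing denominators gives a nontrivial $F$-linear relation in $L_1$, forcing $L_1 \in F$). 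Then $L_1 \sim_F L_2$ iff $L_2$ and $L_1$ lie in the same orbit, and "same orbit" is an equivalence relation. I'd state the cocycle/composition identity for M\"obius maps explicitly enough to justify this — composing $\begin{psmallmatrix} c_1 & d_1 \\ c_2 & d_2\end{psmallmatrix}$ with $\begin{psmallmatrix} c_1' & d_1' \\ c_2' & d_2'\end{psmallmatrix}$ corresponds to the matrix product — and note $\det \neq 0$ lets us invert for symmetry.

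\textbf{Main obstacle.} The genuine content is not the M\"obius algebra but carefully honoring Definition~\ref{sim-f-definition}: "a sequence of operations between $L_1$ and elements of $F$" must be parsed correctly — at each stage we have a single evolving expression built from one "seed" copy of $L_1$ and elements of $F$, combined one operation at a time, and we never combine two independent $L_1$-expressions. (If the definition permitted arbitrary arithmetic trees with multiple leaves equal to $L_1$, the M\"obius form would fail — $L_1^2$ is reachable but not M\"obius in $L_1$.) I would make this reading explicit at the start of the proof, perhaps with a one-line remark that this is the intended meaning and is the one used in Theorem~\ref{structure-theorem}. Modulo that clarification, the induction and the $\mathrm{GL}_2$ bookkeeping are routine, so the writeup should be short; the care goes into the nonvanishing-determinant invariant and the repeated use of $L_1 \notin F$ to rule out zero denominators and to keep expressions outside $F$.
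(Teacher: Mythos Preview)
Your proposal is correct and follows essentially the same approach as the paper: both arguments identify each single operation with an element of $F$ as a linear fractional transformation, use the $2\times 2$ matrix representation to handle composition (hence the forward direction and the equivalence-relation claims), and exhibit an explicit decomposition of an arbitrary M\"obius map into basic operations for the converse. Your write-up is more careful than the paper's on the edge cases (nonvanishing denominators via $L_1\notin F$, and the reading of Definition~\ref{sim-f-definition} as a chain of operations against $F$-elements rather than an arbitrary tree), which is a welcome addition but not a different strategy.
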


\begin{proof}\label{sim-f-proof}
Maps of the form $z\mapsto \frac{az+b}{cz+d}, ad-bc\neq 0$ are called linear fractional transformations. For a general reference on linear fractional transformations see \cite{linearfractional}. One can first note that for any operation with $c\in F-\{0\}$ with operations $\{+,-,\times,\div\}$ is a linear fractional transformation. E.g. 
$$c\div z = \frac{0z+c}{1z+0}, z+c = \frac{1z+c}{0z+1}$$
Another thing worth noting is that the composition of two linear fractional transformation is also a linear fractional transformation. In fact, composition of two fractional linear transformations described by matrices. 
$$\begin{pmatrix}a & b\\ c & d \end{pmatrix}, \begin{pmatrix}a' & b' \\ c' & d' \end{pmatrix}$$ is given by matrix composition.\footnote{In fact, the group of linear fractional transformation is $\mathrm{PSL}_2(F)$}
Thus, we can conclude that any $F$-equivalent element is of the form of linear fractional in $L_1$. However, to show that any linear fractional in $L_1$ is $F$-equivalent to $L_1$ we simply write an arbitrary linear fractional as a sequence of operations in $\{+,-,\times,\div\}$:
$$\frac{aL_1+b}{cL_1+d} = (a/d)L_1+b/d\text{ if }c=0, \frac{aL_1+b}{cL_1+d} = \frac{bc-ad}{c^2(L_1+d/c)}+a/c \text{ if }c\neq 0$$
\end{proof}

We will refer to $\mathbb{Q}(x)$ equivalence with respect to $\mathbb{Q}(x)$ as a subfield of $\mathbb{Q}(x,y,z)$.

We will also define the notion of degree of elements in the field $K(x)$, of rational functions with coefficients in $K$, and prove a lemma which will be useful in the proof of the structure theorem below. 

\begin{definition} Let $K$ be a field, consider the field $K(x)$ and the function $\deg_x: K(x)-0\rightarrow \mathbb{N}$ defined as $\deg_x(p/q) = \max (\deg_xp, \deg_xq)$ where $r = p/q$ is written in lowest terms. 
\end{definition}

\begin{lemma}\label{lemma:degree-props}
The function $\deg_x$ is subadditive in all the operations $+,-,\times, \div$. 
\end{lemma}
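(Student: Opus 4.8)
The plan is to unwind the definition of $\deg_x$ and reduce the statement to two elementary facts about the polynomial ring $K[x]$: that $K[x]$ is an integral domain, so $\deg_x(fg) = \deg_x f + \deg_x g$ for nonzero polynomials $f,g$; and that for \emph{any} (not necessarily reduced) representation $r = p/q$ of a nonzero rational function one has $\deg_x(r) \le \max(\deg_x p, \deg_x q)$, since passing to lowest terms only cancels a common divisor and hence cannot raise the degree of either numerator or denominator. I would record this second observation first, in a line.

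Then I would fix $a = p_1/q_1$ and $b = p_2/q_2$ in lowest terms, so that $\deg_x a = \max(\deg_x p_1, \deg_x q_1)$ and $\deg_x b = \max(\deg_x p_2, \deg_x q_2)$, and in particular $\deg_x p_i \le \deg_x a$, $\deg_x q_i \le \deg_x a$, and likewise with $b$. For each of the four operations I write the obvious (generally non-reduced) representation of the result — $ab = (p_1 p_2)/(q_1 q_2)$, $a \div b = (p_1 q_2)/(q_1 p_2)$, and $a \pm b = (p_1 q_2 \pm q_1 p_2)/(q_1 q_2)$ — and bound the degrees of numerator and denominator using the product-degree identity and the inequalities above: in every case the numerator has $\deg_x$ at most $\deg_x a + \deg_x b$ (for $+$ and $-$ this uses $\deg_x(p_1 q_2 \pm q_1 p_2) \le \max(\deg_x p_1 + \deg_x q_2, \deg_x q_1 + \deg_x p_2)$), and the denominator likewise, being $\deg_x q_1 + \deg_x q_2$ (or $\deg_x q_1 + \deg_x p_2$ for division). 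Applying the first observation then gives $\deg_x(a \circ b) \le \max(\deg_x(\text{num}), \deg_x(\text{den})) \le \deg_x a + \deg_x b$, which is exactly subadditivity.

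The main obstacle is nothing more than careful bookkeeping around degenerate cases and the hypotheses on $K$: one should note that $K$ being a field is what makes $K[x]$ a domain (hence the product-degree identity) and what makes lowest-terms representations exist and be unique up to a nonzero scalar, so that $\deg_x$ is well defined; and one should check that no intermediate numerator or denominator vanishes. Since $a,b \ne 0$, all of $p_1,p_2,q_1,q_2$ are nonzero, so the only possible vanishing is of $p_1 q_2 \pm q_1 p_2$ in the $\pm$ cases — but that occurs precisely when $a \pm b = 0$, which lies outside the domain of $\deg_x$, so there is nothing to prove there. Modulo these remarks the whole argument is a short computation.
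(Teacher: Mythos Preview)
Your proposal is correct and follows essentially the same approach as the paper: both arguments take lowest-terms representations $p_1/q_1$, $p_2/q_2$, form the obvious (unreduced) fraction for each operation, and bound the resulting $\max(\deg_x\text{num},\deg_x\text{den})$ by $\deg_x a + \deg_x b$. Your version is slightly more careful in explicitly noting that passing to lowest terms can only lower the max-degree (the paper writes equalities where $\le$ is what is actually justified) and in handling the degenerate $a\pm b = 0$ case, but the underlying idea is identical.
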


\begin{proof}
\begin{enumerate}
    \item \begin{align*}
        \deg_x\left (\frac{p_1}{q_1}*\frac{p_2}{q_2}\right )
        &= \deg_x\left (\frac{p_1p_2}{q_1q_2}\right ) \\
        &= \max\left (\deg_x\left (p_1p_2\right ),\deg_x\left (q_1q_2\right )\right ) \\
        &= \max\left (\deg_x\left (p_1\right )+\deg_x\left (p_2\right ),\deg_x\left (q_1\right )+\deg_x\left (q_2\right )\right ) \\
        &\le \max\left (\deg_x\left (p_1\right ),\deg_x\left (q_1\right )\right ) + \max\left (\deg_x\left (p_2\right ),\deg_x\left (q_2\right )\right )\\ 
        &= \deg_x\left (\frac{p_1}{q_1}\right ) + \deg_x\left (\frac{p_2}{q_2}\right ).
    \end{align*}
    \item \begin{align*}
        \deg_x\left (\frac{p_1}{q_1}\bigg/\frac{p_2}{q_2}\right ) 
        &= \deg_x\left (\frac{p_1}{q_1}*\frac{q_2}{p_2}\right ) \\
        &\le \deg_x\left (\frac{p_1}{q_1}\right )+\deg_x\left (\frac{q_2}{p_2}\right ) \\
        &= \deg_x\left (\frac{p_1}{q_1}\right )+\deg_x\left (\frac{p_2}{q_2}\right ).
    \end{align*}
    \item \begin{align*}
        \deg_x\left (\frac{p_1}{q_1}\pm \frac{p_2}{q_2}\right ) 
        &= \deg_x\left (\frac{p_1q_2 \pm p_2q_1}{q_1q_2}\right ) \\
        &\le \max\left (\deg_x\left (p_1q_2 \pm p_2q_1\right ),\deg_x\left (q_1q_2\right )\right ) \\
        &\le \max\left (\deg_x\left (p_1q_2\right ), \deg_x\left (p_2q_1\right ),\deg_x\left (q_1q_2\right )\right ) \\
        &\le \max\left (\deg_x\left (p_1\right ),\deg_x\left (q_1\right )\right ) + \max\left (\deg_x\left (p_2\right ),\deg_x\left (q_2\right )\right ) \\
        &= \deg_x \left  (\frac{p_1}{q_1}  \right ) + \deg_x  \left (\frac{p_2}{q_2} \right ).
    \end{align*}
\end{enumerate}
Also, if $\deg_x(p\pm q) \le 0$, $\deg_x(p*q) \le 0$, or $\deg_x(p/q) \le 0$, then $\deg_x(p) = \deg_x(q)$.
\end{proof}

We now state our structure theorem: 

\begin{theorem}
\label{structure-theorem}
For any $S\subset \{+,-,\times,\div\}$, let $I$ be a solvable  \SGSTD{\mathbb{Q}(x,y,z)}{S} instance with entries of the form $\{B_y, B_z\}\cup \{r_i(x)\}_i$ where $B_y\sim_{\mathbb{Q}(x)}y, B_z\sim_{\mathbb{Q}(x)} z$, and $r_i\in \mathbb{Q}[x]$ and target $t$ with $t \sim_{\mathbb{Q}(x)} yz$. Then any solution expression tree for $I$ has the form depicted in Figure~\ref{SGall}: The operation at the least common ancestor of leaves $B_y$ and $B_z$, denoted $N$, is $\times$ or $\div$, and $\mathrm{ev}(N) = (lyz)^{\pm 1}, l\in \mathbb{Q}(x)$. For $T_y, T_z$ the children of $N$ containing $B_y, B_z$ respectively, $\mathrm{ev}(T_y) = (ay)^{\pm 1},\mathrm{ev}(T_z) = (a'z)^{\pm 1}$, where $a,a'\in \mathbb{Q}(x)$.
\end{theorem}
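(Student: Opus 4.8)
The plan is to fix the base field $F = \mathbb{Q}(x)$ and view $\mathbb{Q}(x,y,z) = F(y,z)$, then analyze a solution tree by tracking, at each node $v$, which of the two special leaves $B_y$ and $B_z$ lie in the subtree rooted at $v$. The first step is an induction up from the leaves establishing: (i) if the subtree of $v$ contains neither $B_y$ nor $B_z$, then $\mathrm{ev}(v)\in F$, since its leaves are all of the form $r_i(x)\in F$ and $F$ is closed under the operations; (ii) if it contains $B_y$ but not $B_z$, then $\mathrm{ev}(v)\sim_{\mathbb{Q}(x)} y$; and (iii) symmetrically, if it contains $B_z$ but not $B_y$, then $\mathrm{ev}(v)\sim_{\mathbb{Q}(x)} z$. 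The inductive step for (ii): one child's subtree contains $B_y$ (so its value is $\sim_F y$ by induction) and the other contains neither (so its value lies in $F$ by (i)); combining an element of $K\setminus F$ with an element of $F$ by any operation either falls back into $F$ or stays in the same $\sim_F$-class, because operating with a constant is a linear fractional transformation over $F$ (this is exactly the content of Lemma~\ref{F-equivalence-relation}).

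Next I would pin down the node $N$, the least common ancestor of $B_y$ and $B_z$. Let $T_y,T_z$ be the children of $N$ containing $B_y,B_z$ respectively; by the choice of $N$, $T_y$ does not contain $B_z$ and $T_z$ does not contain $B_y$, so by step (ii)--(iii) and Lemma~\ref{F-equivalence-relation} we may write $\mathrm{ev}(T_y) = \frac{c_1y+d_1}{c_2y+d_2}$ and $\mathrm{ev}(T_z) = \frac{e_1z+f_1}{e_2z+f_2}$ with $c_1d_2-c_2d_1\neq 0$ and $e_1f_2-e_2f_1\neq 0$. Every node strictly above $N$ has its off-path child's subtree containing neither $B_y$ nor $B_z$, so $\mathrm{ev}(\mathrm{root}) = t$ is obtained from $\mathrm{ev}(N)$ by a sequence of operations with elements of $F$; since $t\sim_{\mathbb{Q}(x)} yz$ and linear fractional transformations over $F$ form a group, this forces $\mathrm{ev}(N)\sim_{\mathbb{Q}(x)} yz$, i.e. $\mathrm{ev}(N) = \frac{\alpha yz+\beta}{\gamma yz+\delta}$ with $\alpha,\beta,\gamma,\delta\in F$ and $\alpha\delta-\beta\gamma\neq 0$.

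The core of the argument is then to confront the two descriptions of $\mathrm{ev}(N)$, namely $\mathrm{ev}(N) = \mathrm{ev}(T_y)\circ\mathrm{ev}(T_z)$ for some $\circ\in\{+,-,\times,\div\}$ versus $\mathrm{ev}(N) = \frac{\alpha yz+\beta}{\gamma yz+\delta}$, by regarding both as linear fractional transformations in the \emph{single} variable $y$ over the field $F(z)$. Written as $\mathrm{ev}(N) = \frac{(\alpha z)y+\beta}{(\gamma z)y+\delta}$ this is a genuine (nondegenerate) transformation, so it has a unique zero and unique pole in $y$, at $y=-\beta/(\alpha z)$ and $y=-\delta/(\gamma z)$ in $F(z)\cup\{\infty\}$; each of these is either $0$, $\infty$, or of the shape $c/z$ with $c\in F^{*}$. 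If $\circ\in\{\times,\div\}$, then $\mathrm{ev}(N)$ equals $\mathrm{ev}(T_y)$ times an element of $F(z)^{*}$, hence has the same zero and pole in $y$ as $\mathrm{ev}(T_y)=\frac{c_1y+d_1}{c_2y+d_2}$, namely $-d_1/c_1$ and $-d_2/c_2$ in $F\cup\{\infty\}$; matching with the list above forces both to lie in $\{0,\infty\}$, and since they are distinct, $\{\text{zero},\text{pole}\}=\{0,\infty\}$, which on reading off coefficients gives $\mathrm{ev}(T_y)=(ay)^{\pm1}$ and $\mathrm{ev}(N)=(lyz)^{\pm1}$ with $a,l\in F$. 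If instead $\circ\in\{+,-\}$, the denominator of $\mathrm{ev}(N)$ as a transformation in $y$ is still $c_2y+d_2$, so its pole sits at $-d_2/c_2\in\{0,\infty\}$; inspecting the two subcases $c_2=0$ and $d_2=0$ and comparing the $y$-expansions of $\mathrm{ev}(T_y)\pm\mathrm{ev}(T_z)$ and of $\frac{\alpha yz+\beta}{\gamma yz+\delta}$ forces $\mathrm{ev}(T_z)\in F$, contradicting $\mathrm{ev}(T_z)\sim_F z$. So the operation at $N$ is $\times$ or $\div$. Finally, repeating the identical analysis with $y$ and $z$ interchanged (linear fractional transformations in $z$ over $F(y)$, now using the known form of $\mathrm{ev}(N)$) yields $\mathrm{ev}(T_z)=(a'z)^{\pm1}$ with $a'\in F$, which is the form claimed in Figure~\ref{SGall}.

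I expect the main obstacle to be a degenerate case in the first step: if some input $r_i(x)$ is the zero polynomial, a multiplication could collapse the $B_y$-subtree to $0\in F$, invalidating claim (ii). I would dispatch this by showing such a collapse cannot occur in a solution — if $v$ were the lowest ancestor of $B_y$ below $N$ with $\mathrm{ev}(v)\in F$, then $\mathrm{ev}(T_y)\in F$, hence $\mathrm{ev}(N)$ and therefore $t$ would be $\sim_F z$ or lie in $F$, impossible since $t\sim_{\mathbb{Q}(x)}yz$. The remaining work (the subcase inspection when $\circ\in\{+,-\}$, and verifying the nondegeneracy side conditions at each step, which is precisely where the hypotheses $c_1d_2-c_2d_1\neq0$, $e_1f_2-e_2f_1\neq0$, $\alpha\delta-\beta\gamma\neq0$ are consumed) is routine bookkeeping rather than a conceptual difficulty.
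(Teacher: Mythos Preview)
Your proposal is correct and takes a genuinely different route from the paper. Both arguments begin identically: they establish $\mathrm{ev}(T_y)\sim_F y$, $\mathrm{ev}(T_z)\sim_F z$, and $\mathrm{ev}(N)\sim_F yz$ by chasing linear fractional transformations along the root--$N$ and $N$--leaf paths, and then write these three values in Möbius form with $F$-coefficients. The divergence is in the core casework. The paper clears denominators and works in the unique factorization domain $\mathbb{Q}(x)[y,z]$: it uses irreducibility of $eyz+f$ to force divisibility relations such as $eyz+f\mid gyz+h$, then compares degrees and coefficients termwise to pin down which of $a,b,c,d,a',b',c',d',e,f,g,h$ must vanish; the $\pm$ case is dispatched by the same kind of explicit coefficient chase. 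Your approach instead treats $\mathrm{ev}(N)$ as a Möbius transformation in the \emph{single} variable $y$ over $F(z)$ and matches its unique zero and pole against those of $\mathrm{ev}(T_y)$: since the latter lie in $F\cup\{\infty\}$ while the former lie in $\{0,\infty\}\cup (F^{*}/z)$, the only overlap is $\{0,\infty\}$, which reads off the monomial form directly; the $\pm$ case falls out because the pole location is unchanged by an additive $F(z)$-shift.

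What each buys: the paper's UFD argument is completely elementary and self-contained, requiring nothing beyond polynomial arithmetic, at the cost of a longer case analysis with several polynomial identities to verify. Your zero/pole argument is shorter and more structural, and makes transparent \emph{why} the monomial form is forced (a Möbius map is determined by its zero and pole up to scaling), but it leans more heavily on the reader being comfortable with the projective-line picture over a non-algebraically-closed base. You also explicitly address the collapse-to-$F$ degeneracy (multiplication by a zero subtree), which the paper's proof omits; this is a genuine gap in the paper's argument at the stated level of generality, since the theorem allows arbitrary $r_i\in\mathbb{Q}[x]$ and subtrees of the $r_i$ can evaluate to $0$ even when no individual $r_i$ does.
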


\begin{figure}[H]
  \centering
  \begin{tikzpicture}[
    every node/.style={minimum width=1.8em,inner sep=0,draw,circle},
    label distance=-0.36cm,
    level distance=0.8cm,
    level 1/.style={sibling distance=6cm, level distance=1.3cm},
    level 2/.style={sibling distance=5cm, level distance=1.3cm},
    level 3/.style={sibling distance=3cm, level distance=0.8cm},
    level 4/.style={sibling distance=1.5cm, level distance=0.8cm},
  ]
  \node[fill=op] at (0,0) {}
    child { node[label=$(lyz)^{\pm 1}$,fill=op] {$N$}
      child { node[label=$(a y)^{\pm 1}$,fill=op] {}
        child { node[fill=op] {}
          child { node[fill=leaf] {$x a_1$} }
          child { node[fill=leaf] {$x a_2$} }
        }
        child { node[fill=leaf] {$B_y$} }
      }
      child { node[label=$(a' z)^{\pm 1}$,fill=op] {}
        child { node[fill=op] {}
          child { node[fill=leaf] {$B_z$} }
          child { node[fill=op] {}
            child { node[fill=leaf] {$x a_3$} }
            child { node[fill=leaf] {$x a_4$} }
          }
        }
        child { node[fill=leaf] {$x a_5$} }
      }
    }
    child { node[fill=op] {}
      child[level 4] { node[fill=leaf] {$x a_6$} }
      child[level 4] { node[fill=leaf] {$x a_7$} }
    };
  \draw (-8.25,-1.8) rectangle (-3.5,-4.7);
  \draw (-8.25,-1.8) node[below right,draw=none] { $T_y$ };
  \draw (-3.25,-1.8) rectangle (1.5,-5.5);
  \draw (-3.25,-1.8) node[below right,draw=none] { $T_z$ };
  \end{tikzpicture}
  \caption{Example expression tree for standard $\{+, -, \times, \div\}$.}
  \label{SGall}
\end{figure}
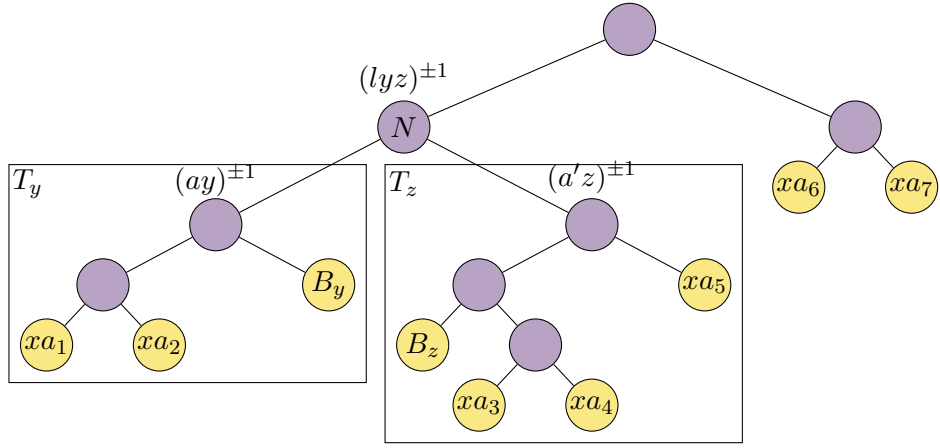

\begin{proof}
In our expression tree $T$, $N$ is the least common ancestor between $B_y$ and $B_z$. One has that 
$$ \mathrm{ev}({N}) = \frac{eyz + f}{gyz + h}, eh - gf \neq 0, e,f,g,h\in \mathbb{Q}(x)$$
since $\mathrm{ev}(N)$ is combined with a sequence of operations with elements in $\mathbb{Q}(x)$ to form $t$. That is, it is $\mathbb{Q}(x)$ equivalent to $yz$.

Let $T_y$ be the child of $N$ containing $B_y$ as a leaf and $T_z$ the child of $N$ containing $B_z$.
A priori we know 
$$\mathrm{ev}(T_y) = \frac{ay +b}{cy+d}, \mathrm{ev}(T_z) = \frac{a'z+b'}{c'z+d'}, \mathrm{ev}(N) = \frac{eyz + f}{gyz +h},$$
$$ad-bc \neq 0, a'd'-b'c' \neq 0, eh-fg \neq 0, a,b,c,d,a',b',c',d',e,f,g,h\in \mathbb{Q}(x)$$
by similar $\mathbb{Q}(x)$-equivalence arguments. The rest of the proof is casework done via trying out different operations at $N$. We will see that if the operation is $\times,\div$ then the evaluations must be of the form described in the statement of the theorem and that if the operation is $\pm$ then we reach a contradiction.

First we check the case that the operation at $N$ is $\times$. For this argument we'll reduce to a set of equations in $\mathbb{Q}(x)[y,z]$ and make some divisibility arguments using the fact that this is a unique factorization domain. 
\begin{align*}
    \frac{ay+b}{cy+d}\cdot \frac{a'z+b'}{c'z+d'} &= \frac{eyz+f}{gyz+h}\\
    \Rightarrow  (ay+b)(a'z+b')(gyz+h) &= (cy+d)(c'z+d')(eyz+f)
\end{align*}
If both $e,f\neq 0$, then $eyz + f$ is irreducible and since $eyz+f| (ay+b)(a'z+b')(gyz+h)$ we find that $eyz+f|gyz+h$ and $\frac{eyz+f}{gyz+h} = l\in \mathbb{Q}(x)$. However, this would contradict $\mathrm{ev}({N})\sim_{\mathbb{Q}(x)} yz$. We conclude that exactly one of $e,f$ is nonzero. A similar argument with $gyz+h$ allows us to conclude that at most one of $g,h$ is nonzero. We cannot have $g = 0$ and $e = 0$, or we would have $\mathrm{ev}(N)\in \mathbb{Q}(x)$. This reduces us to the case that $\mathrm{ev}(N) = (lyz)^{\pm 1}$. We now have one of the two cases:
\begin{align}
    (ay+b)(a'z+b') &= lyz(cy+d)(c'z+d')\\
    lyz(ay+b)(a'z+b')&= (cy+d)(c'z+d')
\end{align}
For the first case to hold one must have $c = c'=0$ for the degrees in $y$ and $z$ to match up. Given $c = c' = 0$, one must also have $b = b' = 0$ so that the right hand side of the equation is divisible by $yz$. A similar argument for the second case yields $a = a' = d = d'= 0$. For multiplication, this case is covered. If the operation is division, one gets the relation:
$$\frac{ay+b}{cy+d}\div \frac{a'y+b'}{c'y+d'} = \frac{ay+b}{cy+d}\cdot \frac{c'y+d'}{a'y+b'} = \frac{eyz+f}{gyz+h}$$
and the same argument follows through.

Next we show that the operation at $N$ can not be $+$:
\begin{equation}
\begin{split}
    \frac{ay+b}{cy+d} + \frac{a'z+b'}{c'z+d'} &= \frac{eyz+f}{gyz+h}\\
    ((ac' + a'c)yz + (ad'+b'c)y+(bc'+a'd)z+(bd'+b'd&))(gyz+h) \\ &= (cy+d)(c'z+d')(eyz+f)
\end{split}
\end{equation}
Starting with a similar divisibility argument, if $g,h\neq 0$, we find that $gyz+h$ is irreducible and that $gyz+h|eyz+f, \frac{eyz+h}{gyz+f}\in\mathbb{Q}(x)$. Thus either $g = 0$ or $h = 0$. 

Suppose $g = 0$.  Then we must have $e\neq 0$ to maintain $\mathrm{ev}(N)\sim_{\mathbb{Q}(x)} yz$. With nonzero $e$, one must have that $c = c' = 0$ so that the RHS of equation (9) has degree no bigger than the left hand side. The coefficient of $yz$ on the LHS of the equation is $(ac' + a'c)h = 0$ and the coefficient of $yz$ on the RHS is $edd'$ which must be nonzero and thus we get a contradiction.

Suppose $h = 0$. We must have $g,f\neq 0$ to maintain $\mathrm{ev}(N)\sim_{\mathbb{Q}(x)}yz$. The LHS of the equation is divisible by $yz$. Thus $yz|(cy+d)(c'z+d')(eyz+f)$ and this can only occur if $d = d' = 0$ and $c,c'\neq 0$. Expanding the equations now and looking at the coefficient of $yz$ in the LHS and RHS we find: $0\neq cc'f = g(bd'+b'd) = 0$. This concludes the proof of our helper theorem. 
\end{proof}

Now we will return to our proof of the soundness of the reduction to AEC-\textsc{Std}. Suppose that the constructed instance $I$ is solvable and the product partition instance is not solvable. Then for some $S \in \{\mathrm{leaves}(T_y) \cap \{a_i x\}, \mathrm{leaves}(T_z) \cap \{a_i x\}\}$, either
\begin{enumerate}
    \item $S$ contains $<n/2$ leaves $a_i x$.
    \item $S$ contains $n/2$ leaves $a_i x$ with product $\alpha x^{n/2}$ with $\alpha < \sqrt{\prod_ia_i}$. 
\end{enumerate}
WLOG let this set be $\mathrm{leaves}(T_y) \cap \{a_i x\}$. In the next two claims, we prove that in neither of these two cases can a subtree evaluate to an expression of the form $(ay)^{\pm 1}$ as Theorem~\ref{structure-theorem} requires.

\begin{claim}
\label{plus-times-std-case}
If $T_y$ contains $<n/2$ leaves $\{a_i x\}$ and $y' = y - x^{n/2}\sqrt{\prod a_i}$, then $\mathrm{ev}(T_y)$ is not of the form $(ay)^{\pm 1}$ for any $a \in \mathbb{Q}[x]$.
\end{claim}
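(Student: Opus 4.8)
The plan is to run a degree-in-$x$ count, but on a transformed form of the instance. A direct count in the variable $y$ does not work (it only bounds $\deg(a)$ from above), so the first move is to ``free'' the leaf $B_y$, trading its large $x$-degree onto the target side. Write $p := \sqrt{\prod_i a_i}\in\mathbb{Z}$, so that the distinguished leaf of $T_y$ is $B_y = y - p\,x^{n/2}$ (the ``$y'$'' of the statement). Since $T_y$ is the child of $N$ containing $B_y$ but not $B_z$, its leaves are $B_y$ together with $\{a_ix : i\in S_y\}$ for some $S_y$ with $|S_y| = m < n/2$; in particular $\mathrm{ev}(T_y)\in\mathbb{Q}(x,y)$. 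I would argue by contradiction, assuming $\mathrm{ev}(T_y) = (ay)^{\pm1}$ for some $a\in\mathbb{Q}[x]\setminus\{0\}$.

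First, I would free $B_y$: it occurs exactly once among the leaves of $T_y$ and all other leaves lie in $\mathbb{Q}[x]$, so I can apply the $\mathbb{Q}(x)$-automorphism of $\mathbb{Q}(x,y)$ sending $y\mapsto u + p\,x^{n/2}$ (hence $B_y\mapsto u$, each $a_ix$ fixed). Since evaluation of an expression tree commutes with field automorphisms, the hypothesis becomes, as an identity in $\mathbb{Q}(u)(x)$,
$$\mathrm{ev}(T_y) \;=\; \bigl(a\cdot(u + p\,x^{n/2})\bigr)^{\pm1},$$
with $T_y$ now having leaves $u$ and the $a_ix$, $i\in S_y$.

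Then I would sandwich $\deg_x(\mathrm{ev}(T_y))$, taking $\deg_x$ over the coefficient field $\mathbb{Q}(u)$. For the upper bound I would show, by induction on subtrees using Lemma~\ref{lemma:degree-props}, that every subtree $T'$ of $T_y$ satisfies either $\mathrm{ev}(T') = 0$ or $\deg_x(\mathrm{ev}(T'))\le\sum_{\ell\text{ leaf of }T'}\deg_x(\ell)$ (in the inductive step, if a child evaluates to $0$ the combining operation returns $0$ or $\pm$ the other child, and otherwise subadditivity of $\deg_x$ applies directly). Since $\mathrm{ev}(T_y) = (ay)^{\pm1}\ne 0$, and the leaves of $T_y$ are $u$ (with $\deg_x(u) = 0$) and the $m$ polynomials $a_ix$ (each of $x$-degree $1$), this yields $\deg_x(\mathrm{ev}(T_y))\le m < n/2$. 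For the lower bound I would note that $a\cdot(u+p\,x^{n/2})$, read as a polynomial in $x$ over $\mathbb{Q}[u]$, has leading term $\bigl(p\operatorname{lc}(a)\bigr)x^{\deg(a) + n/2}$ with nonzero coefficient, so $\deg_x\bigl(a\cdot(u+p\,x^{n/2})\bigr) = \deg(a) + n/2\ge n/2$; as $\deg_x(f^{-1}) = \deg_x(f)$, the same holds for the $-1$ case, so $\deg_x(\mathrm{ev}(T_y))\ge n/2$ — contradicting the upper bound.

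The one place that needs care — and the crux of the whole argument — is precisely this order of operations: the degree count must be made \emph{after} the substitution $y = u + p\,x^{n/2}$, because run directly in $y$ subadditivity of $\deg_x$ over $\mathbb{Q}(y)$ gives only $\deg(a)\le m + n/2$, which is consistent, not contradictory. (As a bonus, the same count yields the sharper statement with $a\in\mathbb{Q}(x)$ — which is what Theorem~\ref{structure-theorem} literally produces — after checking that $u + p\,x^{n/2}$ is irreducible over $\mathbb{Q}(u)$, e.g.\ because the $u$-adic valuation sees $u$ to order exactly $1$; then $a\cdot(u+p\,x^{n/2})^{\pm1}$ is already in lowest terms over $\mathbb{Q}(u)[x]$ and still has $\deg_x\ge n/2$. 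The companion claim for case~(2), where $T_y$ has $n/2$ leaves $a_ix$ with product $\alpha x^{n/2}$ and $\alpha<p$, should go the same way, now forced to compare the \emph{coefficient} of the top $x$-power, since the degree count alone only shows $a$ is a constant.)
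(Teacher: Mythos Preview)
Your proof is correct and is essentially the paper's own argument: the paper also changes coordinates from $(x,y)$ to $(x,y')$ with $y'=B_y$ (your $u$), so that the single non-$a_ix$ leaf has $\deg_x=0$, and then runs the identical subadditivity-of-$\deg_x$ induction via Lemma~\ref{lemma:degree-props} to get $\deg_x(\mathrm{ev}(T_y))<n/2$, contradicting $\deg_x\bigl(a(y'+p\,x^{n/2})\bigr)^{\pm1}\ge n/2$. Your framing via an explicit $\mathbb{Q}(x)$-automorphism and your side remark extending to $a\in\mathbb{Q}(x)$ are slightly more explicit than the paper, but the idea and the key step are the same.
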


\begin{proof}
The value of any subtree can be written in the form $\frac{p(x,y')}{q(x,y')}$ for polynomials $p$ and $q$. Let $\deg_x(\frac{p(x,y')}{q(x,y')}) = \max(\deg_x(p(x,y')),\deg_x(q(x,y')))$. This degree is subadditive for the four arithmetic operations ($+,-,\times, \div$), given by Lemma~\ref{lemma:degree-props}.  Also, if $\deg_x(p\pm q) \le 0$, $\deg_x(p*q) \le 0$, or $\deg_x(p/q) \le 0$, then $\deg_x(p) = \deg_x(q)$.

By induction, the degree in $x$ (respectively to $y'$) at a node $A$ is at most the number of leaves of $A$'s subtree of the form $a_ix$. This is true for the leaves ($\deg_x (a_ix) = 1$), and subadditivity proves it for the inductive step. 

Hence $\mathrm{ev}(T_y)$ has degree at most 1 in $y'$ and less than $n/2$ in $x$. If $\mathrm{ev}(T_y) = (ay)^{\pm 1} = (a(y'+x^{n/2})^{\pm1})$ for nonzero $a\in \mathbb{Q}(x)$, then it has degree at least $n/2$ in $x$, a contradiction.
\end{proof}

\begin{claim}
\label{plus-times-std-case-2}
If $T_y$ contains $n/2$ leaves $a_i x$ with $\prod_ia_i = \alpha <\sqrt{\prod_ia_i}$ and $y' = y - x^{n/2}\sqrt{\prod a_i}$, then $\mathrm{ev}(T_y)$ is not of the form $(ay)^{\pm 1}$ for any $a \in \mathbb{Q}(x)$.
\end{claim}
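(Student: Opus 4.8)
The plan is to push the degree argument of Claim~\ref{plus-times-std-case} one notch further, tracking the \emph{leading $x$-coefficient} and not merely the $x$-degree. Write $\alpha=\prod_i a_i$ for the product of the values attached to the leaves $a_ix$ lying in $T_y$, so $0<\alpha<\sqrt{\prod_i a_i}=:P$, and the leaf $B_y$ has value $y'=y-Px^{n/2}$, i.e.\ $y=y'+Px^{n/2}$. Assume for contradiction that $\mathrm{ev}(T_y)=(ay)^{\pm1}$ with $a\in\mathbb{Q}(x)\setminus\{0\}$. Since $T_y$ contains $B_y$ as a leaf, $\mathrm{ev}(T_y)$ is obtained from $B_y$ by repeatedly combining (via $+,-,\times,\div$) with values of $B_y$-free subtrees in $\mathbb{Q}(x)$; hence, as in Lemma~\ref{F-equivalence-relation}, $B_y\mapsto\mathrm{ev}(T_y)$ is a linear fractional transformation $M$ over $\mathbb{Q}(x)$, and $M=g_m\circ\cdots\circ g_1$ where $B_y=U_0,U_1,\dots,U_m=T_y$ is the path from the leaf $B_y$ to the root, $V_0,\dots,V_{m-1}$ are the subtrees hung off this path, and $g_j\colon z\mapsto z\circ_j\mathrm{ev}(V_{j-1})$ is the induced translation/scaling/inversion. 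The $V_j$ partition the leaves $a_ix$, so $\sum_j\#\{a_ix\in\mathrm{leaves}(V_j)\}=n/2$. If $\mathrm{ev}(T_y)=(ay)^{+1}$ then $M$ sends $B_y=-Px^{n/2}$ (i.e.\ $y=0$) to $0$, so $M^{-1}(0)=-Px^{n/2}$; if $\mathrm{ev}(T_y)=(ay)^{-1}$ then $M$ sends $B_y=-Px^{n/2}$ to $\infty$, so $M^{-1}(\infty)=-Px^{n/2}$. In either case the right side is an element of $\mathbb{Q}(x)$ of $x$-degree exactly $n/2$ with leading coefficient $-P$.

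Now compute $M^{-1}=g_1^{-1}\circ\cdots\circ g_m^{-1}$ applied to $0$ (resp.\ $\infty$) by walking down the path and tracking $\deg_x$. Each $g_j^{-1}$ has one of the forms $z\mapsto z\pm s,\ s-z,\ sz,\ z/s,\ s/z$ with $s=\mathrm{ev}(V_{j-1})$; by Lemma~\ref{lemma:degree-props} applying it raises $\deg_x$ by at most the ``budget'' $\#\{a_ix\in\mathrm{leaves}(V_{j-1})\}$. Using the elementary fact that a $B_y$-free subtree on $k$ leaves $a_ix$ has $\deg_x\le k$ with equality iff it is (up to sign) the pure product $\pm\prod(a_ix)=\pm(\prod a_i)x^{k}$, one checks that an increase by exactly that budget is possible only when $\mathrm{ev}(V_{j-1})$ is such a pure product and $g_j^{-1}$ is multiplication by it: once the running value already has positive $x$-degree, the forms $z\pm s,\ s-z,\ z/s,\ s/z$ all increase $\deg_x$ strictly less than the budget. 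Starting from $0$ one leaves $\{0,\infty\}$ at the first step whose operation is $+$ or $-$ (and only then), wasting no leaves iff the topmost operation $\circ_m$ is $+$ or $-$; starting from $\infty$, however, $\infty\circ_m s$ is $\infty$ or $0$ regardless of $s$, so the leaves of $V_{m-1}$ are wasted. Hence $\deg_x(M^{-1}(\infty))\le n/2-1$, while $\deg_x(M^{-1}(0))\le n/2$ with equality forcing $M^{-1}(0)$ to be the single product $\pm(\prod_i a_i)x^{n/2}=\pm\alpha x^{n/2}$. But $-Px^{n/2}$ has $x$-degree $n/2$ and leading coefficient $-P\ne\pm\alpha$ (since $0<\alpha<P$), so neither $M^{-1}(0)=-Px^{n/2}$ nor $M^{-1}(\infty)=-Px^{n/2}$ can hold --- a contradiction, so $\mathrm{ev}(T_y)$ is not of the form $(ay)^{\pm1}$.

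The step I expect to require the most care is ruling out the degenerate configurations that the bookkeeping above glosses over: side subtrees whose value is $0$ (possible, since the $a_i$ need not have pairwise distinct subproducts), divisions by subexpressions that vanish, and intermediate values of $M^{-1}$ that revisit $\{0,\infty\}$. In each such case some leaf $a_ix$ fails to influence the final value, so the configuration cannot occur along a composition that reaches an element of $x$-degree $n/2$; once this is pinned down, the degree/leading-coefficient tracking is routine. (Equivalently one can phrase the whole argument as: substituting $B_y=-Px^{n/2}$ into $T_y$ yields a well-defined nonzero element of $\mathbb{Q}(x)$, because every leaf becomes a nonzero monomial in $x$ and --- here $\alpha\ne P$ is what is essential --- no partial sum can cancel to $0$ and no denominator can vanish without wasting a leaf, which contradicts $\mathrm{ev}(T_y)=(ay)^{+1}$ vanishing, resp.\ $\mathrm{ev}(T_y)=(ay)^{-1}$ blowing up, at $y=0$.)
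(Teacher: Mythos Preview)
Your approach---tracking the linear fractional transformation $M$ along the $B_y$-to-root path and showing that $M^{-1}(0),M^{-1}(\infty)\neq -Px^{n/2}$---is genuinely different from the paper's. The paper instead bounds the absolute value of every coefficient of $\mathrm{ev}(T_y)$, viewed as a rational function in the variables $x$ and $y'$, by $\alpha=\prod_{a_ix\in T_y}a_i$ (using that all $a_i\ge 2$, so the product dominates any arithmetic combination of the $a_i$), and then observes that any reduced representation of $a\,(y'+Px^{n/2})$ must carry a coefficient of magnitude at least $P>\alpha$.

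Your argument has a gap not covered by the degeneracies you flag in the last paragraph. The per-step claim ``once the running value already has positive $x$-degree, the forms $z\pm s,\ s-z,\ z/s,\ s/z$ all increase $\deg_x$ strictly less than the budget'' is false in general. For instance, if a side subtree on three leaves evaluates to $s_m=a_1x/(a_2x\cdot a_3x)=a_1/(a_2a_3x)$ and this is the first step out of $0$, the running value becomes $v=-a_1/(a_2a_3x)$ with $\deg_x v=1$; a subsequent additive step with side value $a_4x$ then yields $v-a_4x=-(a_1+a_2a_3a_4x^2)/(a_2a_3x)$ of $\deg_x 2$, an increase equal to that step's budget of~$1$. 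The fix is to strengthen the induction: show that if \emph{every} prior step has met its full budget, then the running value is a genuine polynomial monomial $\pm(\prod a_i)\,x^d$ with $d>0$; for such $v$ one can indeed verify (using that a side subtree attains $\deg_x=k$ only as the pure product $\alpha'x^k$, hence with trivial denominator) that $z\pm s$, $s-z$, $z/s$, $s/z$ all fall strictly short, while multiplication by a pure product again produces a polynomial monomial. In the counterexample above the very first step already falls short of its budget of~$3$, so the cumulative bound is never threatened. With this correction your equality characterisation $M^{-1}(0)=\pm\alpha x^{n/2}$ goes through.
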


\begin{proof}
First, we rewrite our target $\mathrm{ev}(T_y)$ in terms of $y'$, yielding $\mathrm{ev}(T_y) = (a(y' + x^{n/2}\sqrt{\prod a_i}))^{\pm 1}$. We will first show that regardless of the value of $a$, the maximum coefficient of the rational function $\mathrm{ev}(T_y)$ is at least $\sqrt{\prod a_i}$. Note that since $y'$ is not in $\mathbb{Q}(x)$, $(y' + x^{n/2}\sqrt{\prod a_i})$ is an irreducible polynomial in $x$, so the denominator of $a$ will never cancel out with anything. Thus, we only consider the numerator of $a$. Consider the leading coefficient of the numerator of the product. This leading coefficient must be exactly the product of the leading coefficient of the numerator of $a$ and $x^{n/2}\sqrt{\prod a_i}$. Since the leading coefficient of the numerator of $a$ is an integer, it must be at least 1, so the leading coefficient of the numerator of $a(y' + x^{n/2}\sqrt{\prod a_i})$ must be at least $x^{n/2}\sqrt{\prod a_i}$.

From our reduction we have that all the $a_i$ are at least $2$, and the largest possible integer that can be generated from the $a_i$ and arithmetic operations is their product $\alpha$. Every coefficient of $\mathrm{ev}(T_y)$ is some combinations of arithmetic operations of the $a_i$ since it is comprised of the $a_ix$ and $y'$ and arithmetic operations. Thus, it is not possible for $\mathrm{ev}(T_y)$ to ever have a coefficient of at least $x^{n/2}\sqrt{\prod a_i}$. Thus, from the above argument it cannot be of the form  $(a(y' + x^{n/2}\sqrt{\prod a_i}))^{\pm 1}$.
\end{proof}

Note that the proof of this claim yields a reduction from $\SquareProductPartitionEqual$ to \SGSTD{\mathbb{Z}[x,y,z]}{S} for all $\{+,\times\}\subset S\subset \{+,-,\times,\div\}$. Using our rational function framework, we get a reduction from \SGSTD{\mathbb{Z}[x,y,z]}{S} to \SGSTD{\mathbb{Z}}{S} by replacements%
\footnote{Note that this denotes replacing with $B_i$ which are $I$-sufficient but since this is done via three reductions the instance $I$ changes. Therefore, when replacing with $B_2$, you need $B_2$ to be $I(B_1)$ sufficient (i.e., the instance $I$ with $x=B_1$ replaced). Similar requirements hold for $B_3$.} 
based on instance $I$ with 
$$x = B_1 = \sufficientalg(I), y = B_2 = \sufficientalg (I(B_1)), z = \sufficientalg(I(B_1,B_2)).$$

However, since the reduction is of the form 
$$\{y - \alpha x^{n/2}, z- \alpha x^{n/2}\}\cup \{a_ix\},$$ if we replace $B_2$ with $ B_2' =\max (B_2, 1+\alpha B_1^{n/2})$, and $B_3$ with $B_3' = \max (\sufficientalg(I(B_1,B_2')), 1+\alpha (\sufficientalg((B_1,B_2')))^{n/2})$ this will yield still sufficient $B_2, B_3$ such that the composition of these maps is a reduction from  $\ProductPartitionEqual$ to \SGSTD{\mathbb{N}}{S}.


\section{Other Standard AEC Results}
 \label{sec:OtherStandardCases}

To cover the rest of the cases for standard Arithmetic Expression Construction, we will give two more reductions of essentially the same structure and then use the Structure Theorem~\ref{structure-theorem} for reductions of this type and prove two analogues of the final claim in the proof from Claim~\ref{plus-times-std-case} above. 

\subsection{\texorpdfstring{$\{-,\times\}\subset S \subset \{+,-,\times,\div\}$}{\{−, ×\} ⊆ S ⊆ \{+, −, ×, ÷\}} AEC Standard} 

\label{standard:minus-times}
\label{standard:minus-times-div}

This case's reduction and hardness proof are identical to that of the proof above with only a couple of sign flips. On instances $\{a_i\}$ of \SquareProductPartitionEqual, produce instances $I$:
$$I = \left (\left\{y+x^{n/2}\sqrt{\prod_ia_i},z+x^{n/2}\sqrt{\prod_ia_i}\right\},\cup \left\{a_ix\right\}, t = yz \right)$$
of \SGSTD{\mathbb{N}[x,y,z]}{S}. If the product partition instance is solvable the constructed instance is solvable. Suppose the constructed instance is solvable in $\{+,-,\times,\div\}$ and the product partition instance is not solvable. 

First, we apply Theorem~\ref{structure-theorem} again, to obtain that the solution tree has the form of Figure~\ref{SGall}. The analogs of the Claims~\ref{plus-times-std-case} and~\ref{plus-times-std-case-2} with $y' = y+x^{n/2}\sqrt{\prod_ia_i}$ (instead of $y' = y-\sqrt{\prod_ia_i}$) and the proofs of the analogs are virtually identical. For an analog of Claim~\ref{plus-times-std-case}, the proof is identical since the proof only requires a degree in $x$ argument which is blind to the sign of $\prod_ia_i$. The proof of the analog of Claim~\ref{plus-times-std-case-2} refers to the maximal sizes of coefficients and thus goes through as well. This will reduce us down to the implication that if the constructed instance is solvable there must be a product partition and finishes the proof. 

\subsection{\texorpdfstring{$\{-,\div\},\{+,\div\},\{+,-,\div\}$}{\{−, ÷\}, \{+, ÷\}, \{+, −, ÷\}} AEC Standard}

\label{standard:minus-div}
\label{standard:plus-minus-div}
\label{standard:plus-div}

The form of the reduction in this section is similar to the previous reductions however we will reduce from \SquareProductPartition (in a way that simulates \SquareProductPartitionEqual) to \SGSTD{\mathbb{Q}(x,y,z)}{S} for $\{+,\div\}\subset S\subset \{+,-,\times,\div\}$. This will not affect the application of the Structure Theorem~\ref{structure-theorem} but our analogs of Claims~\ref{plus-times-std-case} and~\ref{plus-times-std-case-2} will be different. 

On an instance of \ProductPartition, $\{a_i\}_{i=1}^n$, construct the following instance: 
$$I = \left (\left\{y-x^{n+1}\sqrt{\prod_ia_i},z-x^{n+1}\sqrt{\prod_ia_i}\right\},\cup \left\{a_ix\right\}\cup\{x\}\ast [n+2], t = \frac{yz}{x^{2n}\prod_i a_i} \right).$$

We add $n+2$ $x$ monomials into the instance. If the product partition instance is solvable, you can partition $S = \{a_ix\}\cup \{x\}\ast [n+2]$ into two sets of $n+1$ monomials with product $x^{n+1}\sqrt{\prod_i a_i}$. Call these sets $S_1,S_2$. Divide $B_y$ by all but one of the elements of the form $x$ in $S_1$ and then add that element to it:
$$\frac{B_y}{a_1x\cdot a_2x\cdots } + x = \frac{y}{x^n\sqrt{\prod_ia_i}}$$
do the same with $B_z$ and then take the product of the output to reach the target. 
 
For the converse, assume the \SquareProductPartition instance is unsolvable but $I$ has a solution tree $T$: 

By the Theorem~\ref{structure-theorem}, we have a have two subtrees $T_y,T_z$ containing $B_y, B_z$ respectively which must evaluate to $(ay)^{\pm 1}, (a'z)^{\pm 1}$ respectively $a,a'\in\mathbb{Q}(x)$ with leaves disjoint subsets of $S = \{a_ix\}\cup \{x\}\ast [n+2]$. Suppose one of the two subtrees' leave's (WLOG $T_y$'s) has less than $n+1$ of the elements of $S$. Then this subtree cannot evaluate to $(ay)^{\pm 1}$ since has $\mathrm{deg}_x (ay)^{\pm 1}$ is higher than than the sum of $\mathrm{deg}_x$ of the subtree's leaves (where $\mathrm{deg}_x$ is defined relative to writing all rational functions with respect to $B_y, x$). Thus, this leaves the case where both $T_y$ and $T_z$ have $n+1$ monomials. 

\begin{claim}
Let $y' = y\pm \alpha x^{n+1}$,$\alpha\in \mathbb{N}$, and let $S$ be a (multi)set of $n+1$ elements of the form $a_ix, a_i\in \mathbb{N}$ satisfying 
the product of the elements of $S$ is $\beta x^{n+1}$ where $|\beta|<|\alpha|$.

Then all expressions in $S\cup y$ with operations $\{+,-,\div\}$ which yield $\deg_x = n+1$ are of the form:

\begin{enumerate}
    \item $\frac{y'}{\left(\prod_{a_i\in S}a_i\right) x^{n+1}}$\\
    \item $\frac{y'\pm \beta x^{n+1}}{\left(\prod_{i\neq k}a_i \right)x^n}$
\end{enumerate}

Namely, it will not be of the form $(ay)^{\pm 1}$ for $a\in \mathbb{Q}(x)$ since  $\deg_x (ay)^{\pm 1}\ge n+1$ (w.r.t. $(x,y')$). 
\end{claim}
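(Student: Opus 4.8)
The plan is to show that demanding $\deg_x(E)=n+1$ forces $E$ into an extremely rigid shape (here and below $\deg_x$ is taken treating $y'$ as an independent variable, so $\deg_x(y')=0$ and $\deg_x(a_ix)=1$). Consider the path $\pi$ from the root of the expression tree of $E$ down to the leaf $y'$. At each node of $\pi$ the running value is combined with the subtree hanging off $\pi$ on the other side; these hanging subtrees are expressions using only monomials from $S$, and their leaf-multisets $S_1,\dots,S_m$ partition $S$. By subadditivity of $\deg_x$ under all four operations (Lemma~\ref{lemma:degree-props}), each hanging subtree $h_j$ has $\deg_x(h_j)\le|S_j|$, and combining along $\pi$ gives $\deg_x(E)\le 0+\sum_j\deg_x(h_j)\le\sum_j|S_j|=n+1$; so equality forces $\deg_x(h_j)=|S_j|$ and each combining step on $\pi$ to be additive in $\deg_x$. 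A short induction on $s$ (base $s\le 2$ by inspecting one-operation expressions; inductive step by splitting $E_1\circ E_2$ and applying subadditivity to whichever side has $\ge 2$ monomials) shows an expression built from $s$ monomials $cx$ with operations $\{+,-,\div\}$ has $\deg_x\le\max(1,s-1)$, hence $\deg_x=s$ only when $s=1$. Thus every $|S_j|=1$, $\pi$ has exactly $n+1$ internal nodes, and $E$ is obtained from $y'$ by successively combining with single monomials $c_1x,\dots,c_{n+1}x$ in some order; writing $G_0=y'$, $G_j=G_{j-1}\circ_j(c_jx)$, we also get $\deg_x(G_j)=j$ for all $j$.

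Next I would analyze how $\deg_x$ can jump from $j-1$ to $j$ when combining $G_{j-1}=p/q$ (in lowest terms in $\mathbb{Q}[x,y']$, with $\max(\deg_x p,\deg_x q)=j-1$) with the degree-one polynomial $c_jx$. For $\pm$ the result is $(p\pm c_jxq)/q$, still in lowest terms, whose $x$-degree is $j$ only if $\deg_x q=j-1$ (the numerator alone cannot reach $j$ since $\deg_x p\le j-1$), and the new denominator is still $q$, of degree $j-1$. For $G_{j-1}\div(c_jx)=p/(c_jxq)$ one needs $x\nmid p$ (otherwise a cancellation of a factor $x$ drops the degree below $j$) together with $\deg_x q=j-1$, after which the new denominator $c_jxq$ has degree $j$. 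For $(c_jx)\div G_{j-1}=c_jxq/p$ one likewise needs $x\nmid p$ and $\deg_x q=j-1$, but now the new denominator is $p$, of degree $\le j-1$. So whenever a later step exists (i.e.\ when $j\le n$), only $G_{j-1}\div(c_jx)$ produces a denominator of degree $j$, which the same analysis shows is \emph{necessary} for step $j+1$ to keep increasing the degree; using $+$, $-$, or the reversed division at a step $j\le n$ would make $\deg_x=n+1$ permanently unreachable, contradicting $\deg_x(E)=n+1$. Inducting on $j$, every step $j\le n$ is $G_{j-1}\div(c_jx)$, the numerator stays $y'$ (so $x\nmid$ it, keeping the induction alive), and $G_j=\dfrac{y'}{(\prod_{i\le j}c_i)\,x^{j}}$.

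For the last step $j=n+1$ every operation is permitted, and one reads off that, with $\beta=\prod_{a_i\in S}a_i$, the expression $E$ is — up to taking the reciprocal and up to an overall sign — either $\dfrac{y'}{\beta x^{n+1}}$ (form 1) or $\dfrac{y'\pm\beta x^{n+1}}{(\prod_{i\neq k}c_i)\,x^{n}}$, where $c_k$ is the last monomial used (form 2). It then remains to check none of these equals $(ay)^{\pm1}$ for $a\in\mathbb{Q}(x)$. Form 1 and its reciprocal are excluded directly: $(ay)^{+1}=ay'\mp a\alpha x^{n+1}$ has a nonzero $y'$-free term (as $a\neq0$ and $\alpha\neq0$) while $\tfrac{y'}{\beta x^{n+1}}$ has none, and $\tfrac{y'}{\beta x^{n+1}}=(ay)^{-1}$ would produce a $(y')^2$ term on cross-multiplying (symmetrically for $\tfrac{\beta x^{n+1}}{y'}$). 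For form 2 and its reciprocal, using $\tfrac{\beta x^{n+1}}{(\prod_{i\neq k}c_i)x^n}=c_kx$ rewrites form 2 as $\tfrac{y'}{(\prod_{i\neq k}c_i)x^n}\pm c_kx$; matching this against $(ay)^{+1}$ forces $a=\big((\prod_{i\neq k}c_i)x^n\big)^{-1}$ and then $|\alpha|=|\beta|$, contradicting $|\beta|<|\alpha|$, while matching against $(ay)^{-1}$ again produces an impossible $(y')^2$ term. This gives the ``namely'' conclusion and finishes the proof.

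The main obstacle is the degree-jump analysis of the second paragraph: one must handle lowest-terms representations carefully — in particular track exactly when a factor of $x$ cancels — and exploit the forward-propagation phenomenon, namely that an ``early'' use of $+$, $-$, or the reversed division is not merely suboptimal but makes the maximal degree $n+1$ unattainable, which is what pins the structure down to the two listed forms. A minor point I would flag explicitly is that the statement as written should be read up to reciprocal and sign (e.g.\ $\tfrac{\beta x^{n+1}}{y'}$ also achieves $\deg_x=n+1$); this does not affect the conclusion, since those variants are still not of the form $(ay)^{\pm1}$.
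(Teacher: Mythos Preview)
Your proof is correct and follows essentially the same approach as the paper: both first argue that every subtree not containing $y'$ must be a single monomial (the paper phrases this as ``any operation between two elements $ax,bx$ loses at least one in $\deg_x$''; you phrase it as the bound $\deg_x\le\max(1,s-1)$ for $s$ monomials), and then both track the numerator/denominator degrees along the spine to force every step but the last to be $G_{j-1}\div(c_jx)$. You are more careful than the paper in two respects---you explicitly handle the lowest-terms cancellations and the reversed-division case at the final step (which the paper omits), and you actually verify the ``namely'' conclusion that none of the resulting forms equals $(ay)^{\pm1}$, which the paper states but does not prove.
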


\begin{proof}
We first note that any evaluation tree that evaluates to an element $v$ with $\deg_x v = n+1$ cannot have a nontrivial subtree with leaves consisting only of elements of the form $ax,a\in \mathbb{Z}$. This can be seen by noting that any operation between two elements $ax, bx$ is loses at least $1$ in $\deg_x$:
\begin{align*}
    ax\pm bx &= (a\pm b)x\\
    ax \div bx &= a/b
\end{align*}

This means that any evaluation tree evaluating to something of $deg_x = n+1$ must be a sequence of operations between $y'$ and the elements $\{a_ix\}$. Thus, our evaluation tree can be modeled as:
$$y'_0 = y', y_k' = {y}_{k-1}' \pm a_{k}x, \text{ or }{y}_k' = {y}_{k-1}'/(a_kx), \text{ or } {y}_k' = (a_kx)/{y}_{k-1}';\: \mathrm{ev}(T_y) = y_{n+1}'.$$

Suppose $y'_k = p(y',x)/(q(y',x)), k< n+1$,  with $\deg_x p > \deg_x q$, then the next operation loses at least one in $\deg_x$ and cannot yield an evaluation with $\deg_xv = n+1$:
\begin{align*}
    \frac{p}{q}\pm ax &= \frac{p\pm axq}{q};\\
    \frac{p}{q}\div ax &= \frac{p}{axq};\\
    ax\div \frac{p}{q}&= \frac{axq}{p}.
\end{align*}

Thus we have the invariant that $\deg_x q\ge \deg_x p$ for $y'_i, i <n+1$. However, from the above equations, we can also note that if $\deg_x q \ge \deg_x p$, only the operation $p/q \div ax$ maintains this property. Thus we conclude that 
$$y'_n = \frac{y'}{(\prod_{i\neq k}{a_i})x^n}$$
given a sequence of $n$ $a_ix$ divisions with $a_ix \in S$. The last operation can be an addition, subtraction or division by $a_ix$ and any of these operations will give the two cases from the statement of the claim.  
\end{proof}

Note that if the product partition instance is not solvable, then one of the subtrees has monomials $a_ix$ whose product is less than $\sqrt{\prod_ia_i}x^{n+1}$ and thus by this claim and the structure theorem we finish the proof of the converse. 

We note at this point that the exact same argument will apply for $S = \{-,\div\}$ where you flip the signs in $B_y,B_z$ and proceed similarly. 

The last piece of housekeeping is translating this reduction to AEC with entries in $\mathbb{N}$. As it stands, we can use the algorithm $\sufficientalg$ to produce $B_i$ which will create a valid reduction to AEC with entries in $\mathbb{Q}$. In the $\{-,\div\}$ construction these replacements will yield positive coefficients already. For $\{+,\div\}$, on an instance $I$ we can do replacements:
$$x = B_1 = \sufficientalg(I), y = B_2' = \max \left(B_2, 1 +B_1^{n+1}\sqrt{\prod_ia_i}\right), z = B_3' = \max \left(B_3, 1 + \sqrt{\prod_ia_i}B_1^{n+1}\right)$$ to achieve positivity. After these replacements, the only element that may still not be an integer is 
$$t = \frac{yz}{(\prod_i a_i) x^{2n}}$$
and one can do a final replacement $z = B_3'' = (\prod_i a_i) B_1^{2n}B_3'$. This finishes the hardness proofs for all of the standard AEC cases.

\section{Enforced Leaves AEC Results from Rational Framework}
\label{leaves}
Recall that an instance of the Enforced Leaves (EL) AEC variant has a fixed ordering of leaves (operands), and the goal is to arrange the internal nodes of the expression tree such that the target $t$ is the result of the tree's evaluation.
In this section, we present hardness proofs for operation sets $\{+,\times\}, \{+, -, \times\}, \{-, \times \}, \{+,\times, \div\}$ of the Enforced Leaves variant.


\subsection{Weak NP-completeness of AEC Enforced Leaves \texorpdfstring{$\{+,\times\}$}{\{+, ×\}}}
\label{appendix-leaves:plus-times}

\begin{claim}
$\AECEL{\nats[x]}{\{+,\times\}}$ is weakly NP-hard.
\end{claim}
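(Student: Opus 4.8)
The plan is to reduce from \Partition, using the variable $x$ purely as a device for detecting illegitimate multiplications. Given a \Partition instance $\{s_1,\dots,s_n\}$ with $\sum_i s_i = 2\sigma$, I would produce the $\AECEL{\nats[x]}{\{+,\times\}}$ instance whose enforced leaf list, in order, is
$$s_1x,\ 0,\ s_2x,\ 0,\ \dots,\ s_nx,\ 0,$$
and whose target is $t=\sigma x$. The idea is that each consecutive pair $(s_ix,\,0)$ is an ``item gadget'': combining it with $\times$ makes it $0$ (``item $i$ not selected''), combining it with $+$ leaves $s_ix$ (``item $i$ selected''), and the gadget values are then summed. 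The key steps are: (1) the construction above; (2) the completeness direction; (3) soundness, via a degree/coefficient argument; and (4) noting NP membership and polynomial-time computability.

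\emph{Completeness} is straightforward: given $T\subseteq[n]$ with $\sum_{i\in T}s_i=\sigma$, I would form each pair $(s_ix,0)$ into a two-leaf subtree using $\times$ when $i\notin T$ (value $0$) and $+$ when $i\in T$ (value $s_ix$), then combine these $n$ block-subtrees left-to-right by $+$; this is a legal expression tree in the prescribed leaf order over $\{+,\times\}$ and it evaluates to $\sum_{i\in T}s_ix=\sigma x$.

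\emph{Soundness} is the main step. Suppose a valid expression tree evaluates to $\sigma x$. I would track the coefficients of $x^0$ and $x^1$ at every node by induction from the leaves up. Every leaf ($s_ix$ or $0$) has zero constant term, and both $+$ and $\times$ preserve this property, so $[x^0]\ev(u)=0$ at every node $u$. Hence at each $\times$-node $u=v\cdot w$ we get $[x^1]\ev(u)=[x^0]\ev(v)\cdot[x^1]\ev(w)+[x^1]\ev(v)\cdot[x^0]\ev(w)=0$, so the $x^1$-coefficient is annihilated at every multiplication and only added at $+$-nodes. Consequently $[x^1]$ of the root equals $\sum_{i\in T}s_i$, where $T$ is the set of indices $i$ for which the path from the leaf $s_ix$ to the root uses only $+$-nodes; since the root value is $\sigma x$, this forces $\sum_{i\in T}s_i=\sigma$, i.e.\ the \Partition instance is a yes-instance. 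Combined with the obvious NP membership (evaluate the witness expression) and the polynomial size of the construction, this yields weak NP-hardness---indeed weak NP-completeness, matching the pseudopolynomial algorithm stated elsewhere in the paper.

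The one delicate point in the design---and what I would be most careful about---is that the $0$'s must be \emph{interleaved} with the payloads, one adjacent to each $s_ix$: because any subtree spans a contiguous block of the enforced leaf order, a payload $s_ix$ can only be zeroed by a $0$ sitting immediately beside it, so bunching all the $0$'s together would fail to realize arbitrary selected subsets. Beyond that, the entire argument is the $x^0$/$x^1$ bookkeeping above, which is exactly what prevents a product of two payloads---which necessarily has $x$-degree at least $2$---from contributing to the degree-$1$ target.
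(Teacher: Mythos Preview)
Your argument is correct. The $[x^0]$/$[x^1]$ bookkeeping is airtight: every node has vanishing constant term, so every $\times$-node kills the linear coefficient, and hence $[x^1]$ at the root is exactly $\sum_{i\in T}s_i$ for the set $T$ of payload leaves whose root-paths are purely additive. Since the target fixes $[x^1]=\sigma$, this yields a valid \Partition certificate. The only cosmetic omission is the standard ``if $\sum_i s_i$ is odd, output a trivial no-instance'' guard so that $t=\sigma x$ actually lies in $\nats[x]$.

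Your route is genuinely different from the paper's. The paper reduces from \PartitionEqual using the leaf pattern $a_1x^3,\,x,\,a_2x^3,\,x,\ldots$ and a three-term target $(x^4+x^3)\tfrac{\sum a_i}{2}+(\tfrac{n}{2}-1)x$; it then proves three structural lemmas (no nested $\times$, every $\times$ has a leaf child, every $\times$ has \emph{two} leaf children) by degree-counting in $x$, ultimately forcing every $\times$ to be exactly an $(x\times a_ix^3)$ pair. Your construction sidesteps all of that by exploiting $0$ as an ``erase'' gadget and tracking only two coefficients, which makes the soundness argument a few lines instead of three lemmas. What the paper's approach buys in exchange is that it never uses the value~$0$: all leaves are nonzero monomials with positive coefficients, so the same reduction would survive a restriction of the problem to strictly positive inputs, and its degree-based lemmas are closer in spirit to the machinery reused for the $\{+,-,\times\}$ and $\{-,\times\}$ EL cases. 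Under the paper's stated definition ($\mathbb{L}=\nats[x]$, which contains~$0$), however, your simpler proof is perfectly valid.
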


Before we prove this claim, we state and prove some useful lemmas that utilize the following $\AECEL{\nats[x]}{\{+,\times\}}$ instance structure:

Given an instance of \Partition-$n/2$ with set of positive integers $A = \{a_1, a_2, \ldots, a_n\}$, let $I_A$ be an instance of $\AECEL{\nats[x]}{\{+,\times\}}$ with 
polynomials of the form $a_i x^3$ interspersed with $n-1$ polynomials $x$ in the leaf order 
$$a_1 x^3 \enspace x\enspace a_2 x^3 \enspace x\enspace a_3 x^3 \cdots x \enspace a_n x^3$$
and with target $t(x) = (x^4+x^3)\frac{\sum_ia_i}{2} + (\frac{n}{2} -1)x$.

\begin{figure}[!tbp]
  \centering
  \begin{minipage}[b]{0.42\textwidth}
    \centering
    \begin{tikzpicture}[
      every node/.style={minimum size=1.8em,inner sep=0,draw,ellipse},
      label distance=-0.1cm,
      level distance=1cm,
      level 1/.style={sibling distance=4cm},
      level 2/.style={sibling distance=1.5cm},
      level 3/.style={sibling distance=1.5cm},
    ]
    \node[label={$m$, $\degree(m) \geq 5$},fill=op] at (0,0) {$\times$}
      child { node[label={$c$, $\degree(c) \geq 1$},fill=leaf] {$x$} }
      child { node[label={$d$, $\degree(d) \geq 4$},fill=op] {$\times$}
        child { node[label={$g_1$},fill=leaf] {$a_i x^3$} }
        child { node[label={$g_2$},fill=leaf] {$x$} }
      };
    \end{tikzpicture}
    \caption{Proof of Lemma \ref{lem:leaves:plus-times1}: a $\times$ node cannot have a descendant $\times$ node.}
    \label{fig:leaves:plus-times1}
  \end{minipage}
  \hfill
  \begin{minipage}[b]{0.5\textwidth}
    \centering
    \begin{tikzpicture}[
      every node/.style={minimum size=1.8em,inner sep=0,draw,ellipse},
      label distance=-0.1cm,
      level distance=1cm,
      level 1/.style={sibling distance=4cm},
      level 2/.style={sibling distance=1.5cm},
      level 3/.style={sibling distance=1.5cm},
    ]
    \node[label={$m$, $\degree(m) \geq 6$},fill=op] at (0,0) {$\times$}
      child { node[label={$c_1$, $\degree(c_1) \geq 3$},fill=op] {$+$}
        child { node[fill=leaf] {$a_{i-1} x^3$} }
        child { node[fill=leaf] {$x$} }
      }
      child { node[label={$c_2$, $\degree(c_2) \geq 3$},fill=op] {$+$}
        child { node[fill=leaf] {$a_i x^3$} }
        child { node[fill=leaf] {$x$} }
      };
    \end{tikzpicture}
    \caption{Proof of Lemma \ref{lem:leaves:plus-times2}: a $\times$ node cannot have two children $+$ nodes.}
    \label{fig:leaves:plus-times2}
  \end{minipage}
\end{figure}

\begin{lemma}
\label{lem:leaves:plus-times1}
All solutions of $I_A$ that meet the target have the property that no $\times$ operator node has a $\times$ operator descendant node.
\label{eq:lem-leaves-PM-1}
\end{lemma}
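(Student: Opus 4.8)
The plan is a degree-in-$x$ argument powered by the absence of $-$ and $\div$. With only $+$ and $\times$, and every leaf polynomial ($a_ix^3$ or $x$) having $\deg_x\ge 1$, an easy induction shows every subtree value is a nonzero polynomial with nonnegative coefficients; moreover $\deg_x(a+b)=\max(\deg_x a,\deg_x b)$ and $\deg_x(a\cdot b)=\deg_x a+\deg_x b\ge\deg_x a$, so $\deg_x(\mathrm{ev}(\cdot))$ is nondecreasing as one moves from a node to its parent. In particular, since the root evaluates to $t(x)=(x^4+x^3)\tfrac{\sum_ia_i}{2}+(\tfrac n2-1)x$, which has degree exactly $4$ (the $a_i$ being positive), every node $v$ of a solution tree satisfies $\deg_x(\mathrm{ev}(v))\le 4$.

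The second step is to show that every $\times$ node $v$ in a solution tree has $\deg_x(\mathrm{ev}(v))\ge 4$. The subtree rooted at $v$ has at least two leaves, and these leaves form a contiguous block in the fixed leaf order $a_1x^3,\,x,\,a_2x^3,\,x,\dots,x,\,a_nx^3$; hence the block contains two consecutive entries, which are never both of the form $x$, so the block contains some leaf $a_ix^3$ and thus a leaf of degree $3$. That leaf lies in one of the two child subtrees of $v$, making that child's value have $\deg_x\ge 3$, while the other (nonempty) child's value has $\deg_x\ge 1$; since $v$ is a $\times$ node, $\deg_x(\mathrm{ev}(v))$ equals the sum of the two children's $x$-degrees, which is at least $3+1=4$.

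Combining the two steps gives the lemma. Suppose some $\times$ node $p$ had a $\times$ descendant $q$, and let $p'$ be the child of $p$ that is an ancestor of $q$ (or equals $q$). By monotonicity and the second step, $\deg_x(\mathrm{ev}(p'))\ge\deg_x(\mathrm{ev}(q))\ge 4$, while the other child of $p$ has value of $\deg_x\ge 1$; as $p$ is a $\times$ node this forces $\deg_x(\mathrm{ev}(p))\ge 5$, contradicting the degree bound $\le 4$ from the first step. This is precisely the obstruction depicted in Figure~\ref{fig:leaves:plus-times1}. The only subtle ingredient --- and the crux of the whole argument --- is the observation in the second step that two consecutive leaves in the prescribed order are never both $x$, which is what guarantees every $\times$ node already attains $\deg_x\ge 4$; the rest is routine subadditivity bookkeeping.
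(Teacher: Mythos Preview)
Your proof is correct and follows essentially the same approach as the paper's: both argue that with only $+$ and $\times$ on polynomials with positive coefficients the $x$-degree is monotone along root-to-leaf paths, that any $\times$ node already has degree at least $4$ because its (contiguous) block of $\ge 2$ leaves must contain some $a_ix^3$, and hence a $\times$ node with a $\times$ descendant would force degree $\ge 5$ at the ancestor, exceeding the target's degree $4$. Your version packages the ``any $\times$ node has $\deg_x\ge 4$'' step as a standalone observation before using it, whereas the paper applies it inline to the descendant node, but the argument is the same.
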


\begin{proof}[Proof of Lemma~\ref{lem:leaves:plus-times1}]
Refer to Figure~\ref{fig:leaves:plus-times1}.
With the operations $\{+, \times\}$ and leaves that are positive powers of $x$ with positive coefficients, the value at a node cannot be a polynomial of higher degree than any of its ancestors.
For the sake of contradiction, consider some node $m$ with the $\times$ operator and a descendant $d$ also with the $\times$ operator. Let  $g_1, g_2$ be the children of $d$, and let $c$ be the child of $m$ that is not $d$ or an ancestor of $d$.
Then $c$ must evaluate to at least $x$; similarly, one of $g_1$ and $g_2$ evaluates to at least $x$ and the other to at least $x^3$. However this implies that the parent $\times$ node evaluates to at least $x^5$, which is greater than the target. Therefore, any node with the $\times$ operator can only have leaves or $+$ operators as descendants. 
\end{proof}

\begin{lemma}
\label{lem:leaves:plus-times2}
All solutions to $I_A$ that meet the target have the property that there is at least one leaf child of all internal $\times$ nodes.
\end{lemma}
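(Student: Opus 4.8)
The plan is to argue by contradiction, reusing the degree-monotonicity principle behind Lemma~\ref{lem:leaves:plus-times1}: since every leaf of $I_A$ is a monomial in $x$ with a positive coefficient and the only operations are $+$ and $\times$, the value $\ev(v)$ at any node $v$ is a polynomial with nonnegative coefficients, and its degree is at most the degree of $\ev(v')$ for every ancestor $v'$; in particular $\degree(\ev(v)) \le \degree(t) = 4$ at every node. I would first record this bound and then rule out the bad configuration.

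So suppose, for contradiction, that some internal $\times$ node $m$ has \emph{no} leaf child, meaning both children $c_1$ and $c_2$ of $m$ are internal. By Lemma~\ref{lem:leaves:plus-times1} applied to $m$, no descendant of $m$ is a $\times$ node, hence $c_1$ and $c_2$ are $+$ nodes and the subtrees rooted at them consist entirely of $+$ operators; thus each $\ev(c_i)$ is literally the sum of the leaves below $c_i$. Because $I_A$ is an Enforced-Leaves instance, the leaves below $c_i$ form a contiguous block of the prescribed leaf order — in which the monomials $a_jx^3$ strictly alternate with copies of $x$ — and since $c_i$ is internal this block has at least two leaves. Any block of two or more consecutive leaves of that order contains at least one leaf $a_jx^3$ (two $x$-leaves are never adjacent), so, summing that leaf with the remaining positive-coefficient leaves of degree $\le 3$, we get $\degree(\ev(c_i)) = 3$ for $i\in\{1,2\}$.

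Consequently $\degree(\ev(m)) = \degree(\ev(c_1)) + \degree(\ev(c_2)) = 6 > 4 = \degree(t)$, contradicting the monotonicity bound. Hence every internal $\times$ node of any solution of $I_A$ has at least one leaf child, which is the claim (and matches the picture in Figure~\ref{fig:leaves:plus-times2}).

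The single delicate point — the ``main obstacle'', such as it is — is the structural observation that in the Enforced-Leaves model the leaf-descendants of a node form a contiguous run of the fixed leaf order, together with the remark that such a run of length $\ge 2$ must include an $a_jx^3$ leaf; both follow immediately from the strict alternation of the order. Everything else is the same degree bookkeeping already carried out for Lemma~\ref{lem:leaves:plus-times1}.
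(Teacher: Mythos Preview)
Your proof is correct and follows essentially the same argument as the paper: assume both children of a $\times$ node are internal, use Lemma~\ref{lem:leaves:plus-times1} to conclude they are $+$ nodes, then use the alternating leaf order to guarantee each child's subtree contains some $a_jx^3$, forcing the product to have degree $\ge 6 > 4$. Your explicit remark that the leaves under a node form a contiguous run of the enforced order is a nice clarification the paper leaves implicit.
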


\begin{proof}[Proof of Lemma~\ref{lem:leaves:plus-times2}]
Refer to Figure~\ref{fig:leaves:plus-times2}.
Assume for the sake of contradiction that $m$ is a $\times$ operator node and
neither of its children $c_1,c_2$ are leaf nodes.
Then, by Lemma~\ref{lem:leaves:plus-times1}, both children must be $+$ operator nodes and thus have at least two descendants;
Because of the alternating ordering of $x$ and $a_i x^3$ leaves, both of these subtrees must have an $a_i x^3$ term. The product of these sums must then have a term of order $x^6$, which is a contradiction. 
\end{proof}

\begin{figure}[h!tbp]
  \centering
  \begin{subfigure}[b]{0.3\textwidth}
    \centering
    \begin{tikzpicture}[
      every node/.style={minimum size=1.8em,inner sep=0,draw,ellipse},
      label distance=-0.1cm,
      level distance=1cm,
      level 1/.style={sibling distance=3cm},
      level 2/.style={sibling distance=1.5cm},
    ]
    \node[label={$a_i x^4$},fill=op] at (0,0) {$\times$}
      child { node[fill=leaf] {$x$} }
      child { node[fill=leaf] {$a_i x^3$} };
    \end{tikzpicture}
    \caption{Case 1: Both children are leaf nodes.}
  \end{subfigure}%
  \hfill
  \begin{subfigure}[b]{0.3\textwidth}
    \centering
    \begin{tikzpicture}[
      every node/.style={minimum size=1.8em,inner sep=0,draw,ellipse},
      label distance=-0.1cm,
      level distance=1cm,
      level 1/.style={sibling distance=3cm},
      level 2/.style={sibling distance=1.5cm},
    ]
    \node[label={$c_0 x^4 + c_1 x^2$},fill=op] at (0,0) {$\times$}
      child { node[fill=leaf] {$x$} }
      child { node[label=$c_0 x^3 + c_1 x$,fill=op] {$+$}
        child { node[fill=leaf] {$a_i x^3$} }
        child { node[fill=leaf] {$x$} }
      };
    \end{tikzpicture}
    \caption{Case 2: One child is a leaf node with value $x$.}
  \end{subfigure}%
  \hfill
  \begin{subfigure}[b]{0.3\textwidth}
    \centering
    \begin{tikzpicture}[
      every node/.style={minimum size=1.8em,inner sep=0,draw,ellipse},
      label distance=-0.1cm,
      level distance=1cm,
      level 1/.style={sibling distance=3cm},
      level 2/.style={sibling distance=1.5cm},
    ]
    \node[label={$c_2 x^6 + c_3 x^4$},fill=op] at (0,0) {$\times$}
      child { node[fill=leaf] {$a_i x^3$} }
      child { node[label=$c_0 x^3 + c_1 x$,fill=op] {$+$}
        child { node[fill=leaf] {$x$} }
        child { node[fill=leaf] {$a_{i+1} x^3$} }
      };
    \end{tikzpicture}
    \caption{Case 3: One child is a leaf node with value $a_ix^3$.}
  \end{subfigure}
  \caption{Proof of Lemma~\ref{lem:leaves:plus-times3}.}
  \label{fig:leaves:plus-times3}
\end{figure}

\begin{lemma}
\label{lem:leaves:plus-times3}
In all solutions to $I_A$ that meet the target, any internal $\times$ operator node has exactly two children, both of which are leaves.
\end{lemma}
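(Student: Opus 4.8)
The plan is to combine the two preceding lemmas with a degree-and-coefficient argument tailored to the specific target $t(x)$. Fix a solution of $I_A$ meeting the target and let $m$ be an arbitrary internal $\times$ node; since the expression tree is binary, $m$ already has exactly two children, so the only thing to prove is that neither child is an internal node. By Lemma~\ref{lem:leaves:plus-times2} at least one child, say $\ell$, is a leaf, and by Lemma~\ref{lem:leaves:plus-times1} the other child $c$ is not a $\times$ node; so if $c$ is also a leaf we are done, and I would assume for contradiction that $c$ is a $+$ node.

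Before the case analysis I would record two facts. First, every ancestor of $m$ must be a $+$ node: a $\times$ ancestor of $m$ would be a $\times$ node with a $\times$ descendant, contradicting Lemma~\ref{lem:leaves:plus-times1}; and for the same reason the subtree rooted at $c$ contains no $\times$ node, so $\mathrm{ev}(c)$ is just the sum of the leaves of $c$'s subtree. Second, because every leaf is a monomial of positive degree with positive coefficient and we only use $+$ and $\times$, all node values are polynomials with non-negative coefficients and degree $\ge 1$; hence $\deg_x$ is non-decreasing from a node to its parent, and since $\deg_x t = 4$ (the coefficient of $x^4$ in $t$ is $\tfrac12\sum_i a_i>0$), every node value has degree at most $4$. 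Now, the leaves of $c$'s subtree form a contiguous block of the fixed leaf order of size $\ge 2$; since that order strictly alternates between $a_j x^3$-type and $x$-type leaves, any such block contains at least one leaf of each type (this is the one place I would be careful, checking the boundary blocks at the two ends of the order). Therefore $\mathrm{ev}(c) = \sigma x^3 + r x$ with $\sigma\ge 1$ and $r\ge 1$.

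Finally I would split on $\ell$. If $\ell = a_i x^3$, then $\mathrm{ev}(m) = a_i x^3(\sigma x^3 + rx)$ has degree $6$, exceeding the degree bound of $4$ — contradiction. If $\ell = x$, then $\mathrm{ev}(m) = \sigma x^4 + r x^2$, whose $x^2$-coefficient is $r\ge 1$; since every ancestor of $m$ is a $+$ node and all node values have non-negative coefficients, the $x^2$-coefficient of $\mathrm{ev}(\mathrm{root}) = t$ is at least $r\ge 1$, whereas $t(x) = (x^4+x^3)\tfrac{\sum_i a_i}{2} + (\tfrac n2 - 1)x$ has $x^2$-coefficient $0$ — contradiction (the subcase $m=\mathrm{root}$ is included, with equality). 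Hence $c$ cannot be a $+$ node, so both children of $m$ are leaves, completing the lemma. The only real obstacle is bookkeeping — verifying the alternation/boundary claim and the ``no cancellation, degree non-decreasing'' facts — since once those are in place the two arithmetic contradictions (a forbidden $x^6$ term and a forbidden $x^2$ term) are immediate.
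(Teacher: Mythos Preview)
Your proposal is correct and follows essentially the same approach as the paper: after invoking Lemmas~\ref{lem:leaves:plus-times1} and~\ref{lem:leaves:plus-times2}, both you and the paper split on whether the leaf child is $a_i x^3$ (yielding a forbidden degree-$6$ term) or $x$ (yielding a forbidden $x^2$ term that survives to the root since all ancestors of $m$ are $+$ nodes). Your treatment of the alternation/boundary bookkeeping is slightly more explicit than the paper's, but the argument is the same.
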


\begin{proof}[Proof of Lemma~\ref{lem:leaves:plus-times3}]
Refer to Figure~\ref{fig:leaves:plus-times3}.
By Lemma~\ref{lem:leaves:plus-times2}, there are three possible child pairs under a $\times$ operator: (i) two leaves $x \times ax^3$ or $ax^3 \times x$, (ii) one $ax^3$ leaf and a $+$ operator, and (iii) one $x$ leaf and a $+$ operator. 
We will show that the first case is the only one that does not provide a contradiction. 

In the second case, there is at least one $ax^3$ leaf under the $+$ operator, so the degree of the polynomial evaluated at the $\times$ operator would be at least $6$, a contradiction.

Consider the third case. Let $m$ be the $\times$ operator and let $e, d$ be its children where $e$ is the $x$ leaf and $d$ is the $+$ operator. Since, by Lemma~\ref{lem:leaves:plus-times1}, $\times$ operator nodes cannot have $\times$ operator descendants, all descendants of $d$ must be leaves or $+$ operators. Due to the alternating leaf order, $d$ must have an $x$ descendant and an $a_i x^3$ descendant, so the evaluation of $d$ is of the form $c_0 x^3 + c_1 x$ where $c_0$ and $c_1$ are positive integers. Thus the evaluation of $m$ is of the form $c_0 x^4 + c_1 x^2$. We show that it is impossible to reach the target $t(x)$ if an internal node in the evaluation contains this expression.


By Lemma~\ref{lem:leaves:plus-times1}, no ancestor node $p$ of $m$ can be a $\times$ operator as $m$ is a $\times$ operator. Thus all ancestors of $m$ are $+$ operators. Thus the evaluation at $p$ must have an $x^2$ term since we are summing multiple polynomials with $x^2$ terms with non-negative coefficients. 
This remains true when $p$ is the root, so the evaluation of the entire expression tree must contain an $x^2$ term, which is a contradiction.
\end{proof}


\begin{proof}[Proof of Claim \ref{appendix-leaves:plus-times}]
This proof proceeds by reduction from \Partition- {$n/2$}.

Let $A$ be the set of positive integers for \Partition- {$n/2$}, and let $I_A$ consist of the ordering and target defined above.

If the instance of \Partition-$n/2$ has a solution, we know two complementary subsets $A_1, A_2$ of $A$ exist such that $\sum A_1 = \sum A_2 =  \frac{\sum A}{2}$ and $A_1$ contains $a_1$. Then in the AEC instance, for $a_i$ in $A_1$ (other than $a_1$) we assign operations so that $\cdots + (x + a_i x^3) + \cdots$, and for $a_i$ in $A_2$ we assign operations so that $\cdots + (x \times a_i x^3) + \cdots$. 

Let the symbol $\plustimes$ represent the choice of either $+$ or $\times$ depending on the set $a_i$ belongs to. Evaluating our expression, we get
\begin{align*}
    a_1 x^3 + (x \plustimes a_2 x^3) + \cdots + (x \plustimes a_n x^3) &= a_1x^3 + \sum_{a_i \in A_1; i \neq 1}(x+ a_i x^3) + \sum_{a_i \in A_2} a_i x^4 \\
    &= (x^4+x^3)\frac{\sum_ia_i}{2} + \left(\frac{n}{2} -1\right)x  \\
    &= t(x).
\end{align*}
Therefore, our constructed instance of AEC has a solution if the original instance of \Partition-$n/2$ has a solution.




To prove the other direction, consider a solution to $I_A$ that meets $t(x)$.
Recall that there are $(n-1)$ $x$ leaves of the instance.  The only way to achieve the $(n/2-1)$ $x$ terms in the target is for exactly $(n/2-1)$ of those leaves to have $+$ operator parents instead of $\times$, since otherwise the $x$ term is multiplied by an $ax^3$ node and (since we have only $+$ and $\times$ with non-negative coefficients) this cannot be an $x$ term in the evaluation.
Thus, the remaining $(n/2)$ $x$ terms must have $\times$ operator parents.
By  Lemma~\ref{lem:leaves:plus-times3}, the only $\times$ operators in the expression tree are parents of two leaves, and so are of the form $(x \times ax^3)$ or $(ax^3 \times x)$. 

If we let $A_2$ be the set of $a_i$ such that $a_ix^3$ was as child of a $\times$ operator, and $A_1$ be the set of all other $a_i$, then we have divided 
$A$ into two complementary subsets $A_1$ and $A_2$ such that $|A_1|=|A_2|=n/2$. Further, since the target was achieved, $x^4\sum A_2 +x^3\sum A_1 + (|A_1| -1 )x = (x^4+x^3)\frac{\sum_ia_i}{2} + (\frac{n}{2} -1)x $ so $\sum A_1 =\frac{\sum_ia_i}{2} = \sum A_2$. Thus $A_1$ and $A_2$ provide a valid partitioning for $\textsc{Partition-}n/2$, so the original instance of $\textsc{Partition}$-$n/2$ has a solution if our constructed instance of $\{+, \times \}$-\textsc{EL} has a solution.

As our instance of $\{+, \times\}$-\textsc{EL} has a solution if and only if the instance of $\textsc{Partition}$ has a solution, we have found a valid reduction from $\textsc{Partition}$ to $\{+, \times\}$-\textsc{EL}. As our reduction takes polynomial time and \Partition-$n/2$ is weakly NP-hard, $\{+, \times\}$-\textsc{EL} must also be weakly NP-hard.
\end{proof}

\begin{claim}
\label{appendix-leaves:pseudopoly-PT}
$\AECEL{\nats}{\{+,\times\}}$ is weakly NP-complete.
\end{claim}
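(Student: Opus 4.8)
The plan is to obtain NP-hardness over $\nats$ by feeding the hardness result just proven through the Rational Function Framework, and then to exhibit a pseudopolynomial-time algorithm; together these give that $\AECEL{\nats}{\{+,\times\}}$ is weakly NP-complete, which is the claimed (tight) classification.

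\emph{Hardness.} First I would recall that $\AECEL{\nats}{\{+,\times\}}$ is in NP: with the tree shape and leaf list fixed, a certificate is the assignment of $+/\times$ to internal nodes, and one evaluates bottom-up while collapsing any partial value exceeding $t$ to a symbol $\top$ (with $\top+v=\top$, $\top\times v=\top$ for $v\ge 1$, and $\top\times 0=0$), which keeps all numbers of size $O(t^2)$ and hence runs in polynomial time. For hardness I would chain two reductions: the previous claim gives a polynomial-time reduction $\PartitionEqual\to\AECEL{\nats[x]}{\{+,\times\}}$ (producing the instance $I_A$ defined above), and Theorem~\ref{theorem:full-ratl}, instantiated with $k=1$ and $\variant=\textsc{EL}$, gives a polynomial-time reduction $\AECEL{\nats[x]}{\{+,\times\}}\to\AECEL{\nats}{\{+,\times\}}$ by substituting $x=B$ with $B=\sufficientalg(I_A)$. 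I would note in passing that $B$ has polynomially many bits: on the instances $I_A$ all polynomials have constant degree, so in the notation of Lemma~\ref{theorem: global-B} one gets $M=O(n)$, $q=\max_i a_i$, $m(t)=O(1)$, and $\maxcoeff(t)=O(n\max_i a_i)$, whence $\log B=O(n\log(n\max_i a_i))$. Composing the two reductions shows $\PartitionEqual$ reduces to $\AECEL{\nats}{\{+,\times\}}$ in polynomial time, so the latter is NP-hard.

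\emph{Pseudopolynomial algorithm.} Let $a_1,\dots,a_n\in\nats$ be the fixed leaf list and $t$ the target. For every interval $1\le i\le j\le n$ I would compute, by interval dynamic programming, the set $S_{i,j}\subseteq\{0,1,\dots,t\}\cup\{\top\}$ of values attainable by a $\{+,\times\}$-expression whose leaves in order are $a_i,\dots,a_j$, where every attainable value larger than $t$ is recorded as the single symbol $\top$: set $S_{i,i}=\{a_i\}$ if $a_i\le t$ and $\{\top\}$ otherwise, and $S_{i,j}=\bigcup_{i\le k<j}\{\,u\star v:u\in S_{i,k},\ v\in S_{k+1,j},\ \star\in\{+,\times\}\,\}$, where $u\star v$ uses ordinary arithmetic but replaces any result exceeding $t$ by $\top$, and $\top$ follows $\top+v=\top$, $\top\times 0=0$, $\top\times v=\top$ for $v\ge 1$, $\top\star\top=\top$; accept iff $t\in S_{1,n}$. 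Correctness rests on the observation that, on nonnegative integers, the only way a partial value exceeding $t$ can ever return to the range $\{0,\dots,t\}$ is by being multiplied against a value $0$ (a $+$, or a $\times$ by a value $\ge 1$, keeps the result above $t$), so collapsing to $\top$ loses no reachable target value provided we keep the transition $\top\times 0=0$. With $O(n^2)$ intervals, $O(n)$ split points each, and $O(t^2)$ pairs to combine, this runs in $O(n^3t^2)$ time, which is pseudopolynomial.

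\emph{Conclusion and main obstacle.} NP-hardness together with this pseudopolynomial algorithm (which precludes strong NP-hardness unless $\mathrm P=\mathrm{NP}$) gives that $\AECEL{\nats}{\{+,\times\}}$ is weakly NP-complete. Both ingredients are routine given the machinery already developed; the one place demanding care is the handling of subexpressions that overshoot the target — both in the NP-membership evaluation and in the dynamic program — since without the rule $\top\times 0=0$ one would either blow intermediate values up superpolynomially or miss solutions that route a large value into a zero leaf. Confirming that the sufficient $B$ of Theorem~\ref{theorem:full-ratl} is polynomially bounded on the instances $I_A$ is the only other bookkeeping point, and it is immediate from Lemma~\ref{theorem: global-B}.
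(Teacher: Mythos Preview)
Your proof is correct and takes essentially the same approach as the paper: hardness via the previous claim composed with Theorem~\ref{theorem:full-ratl}, and the same $O(n^3 t^2)$ interval dynamic program for the pseudopolynomial upper bound. Your version is in fact slightly more careful---you verify that $B$ has polynomially many bits on the instances $I_A$ and handle overshooting subexpressions via a $\top$ symbol with the rule $\top\times 0=0$---whereas the paper simply discards any intermediate value exceeding $t$, which tacitly assumes no leaf is $0$.
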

\begin{proof}
We know by Claim \ref{appendix-leaves:plus-times} combined with Theorem \ref{theorem:full-ratl} that $\AECEL{\nats}{\{+,\times\}}$ is weakly NP-hard.  To show that this hardness is tight, we provide a pseudopolynomial algorithm for it.


Suppose that the enforced leaf ordering is $a_1, a_2, \ldots, a_n$ (all in $\nats$), and the target is $t \in \nats$. 
Let $F_{i,j}$ be the set of all possible values in $[t]$ that can be attained via operations for the values ordered $a_i, \ldots, a_j$, for $1 \le i \le j \le t$. The instance has a solution if and only if $F_{1,n}$ includes $t$.

We can use dynamic programming on the $F_{i,j}$ sets to compute $F_{1,n}$.
Initialize $F_{i,i} = \{a_i\}$ for all $i \in [n]$. Then each set $F_{i,j}$ with $j>i$ can be computed as the union
$$F_{i,j} = \bigcup_{k=i}^{j} 
\{\ell + r, \ell r ~\vert~ \ell \in F_{i,k}, r \in F_{k+1,j}\} \cap [t]. $$
%
Note that each of these sets is at most size $t$; we do not need to keep track of values larger than $t$ since they cannot be combined using $+$ and $\times$ operations to reach the target $t$.
Therefore, computing each set of new values from $F_{i,k}$ and $F_{k+1,j}$ takes time $O(t^2)$.  To get each $F_{i,j}$ we take the union over $O(n)$ such computed sets.  To ``reach'' $F_{1,n}$ in this way we must compute $O(n^2)$ values of $F_{i,j}$, so the total runtime of this protocol is $O(n^3 t^2)$.

The existence of this pseudopolynomial algorithm shows that $\AECEL{\nats}{\{+,\times\}}$ is not strongly NP-hard, and so combined with our hardness result, we have shown that the problem is weakly NP-complete, as desired.
\end{proof}

\subsection{Weak NP-hardness of AEC Enforced Leaves \texorpdfstring{$\{+,-, \times\}$}{\{+, −, ×\}}}
\label{leaves:plus-minus-times}
\label{appendix-leaves:plus-minus-times}

We present a proof for the weak NP-hardness of $\AECEL{\mathbb{N}[x,y]}{\{+, -, \times\}}$. Using the technique described in Section \ref{sec:rational}, this also proves NP-hardness of $\AECEL{\mathbb{N}}{\{+, -, \times\}}$.

Our proof is a reduction from \SetProductPartitionBound{K}. This strongly NP-hard problem asks if given a set (without repetition) of positive integers $A = \{a_1, a_2, \dots, a_n\}$ where all $a_i > K$ and all prime factors of all $a_i$ are also greater than $K$, we can partition $A$ into two subsets with equal products.  The problem is also defined formally in Appendix \ref{sec:related}.

\begin{restatable}{claim}{ELclaim}
\label{claim:EL-plus-minus-times}
$\AECEL{\mathbb{N}[x,y]}{\{+, -, \times\}}$ is weakly NP-hard.
\end{restatable}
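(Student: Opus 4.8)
The plan is to reduce from \SetProductPartitionBound{K}, choosing the parameter $K$ to be a polynomial in $n$ that is large enough for the number-theoretic step below. Given an instance $A = \{a_1, \ldots, a_n\}$ in which every $a_i$ and every prime factor of every $a_i$ exceeds $K$, I would build an $\AECEL{\mathbb{N}[x,y]}{\{+,-,\times\}}$ instance whose enforced leaf sequence interleaves the values $a_i$ — each scaled by an appropriate monomial in $x$, as in the $\{+,\times\}$ construction of Section~\ref{appendix-leaves:plus-times} — with ``separator'' leaves that are powers of $x$, using the second variable $y$ to mark the boundary between the two multiplicative blocks that a solution tree will be forced to build. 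The target $t(x,y)$ is a fixed low-degree polynomial chosen so that its ``leading'' monomial can only be produced by multiplying one block's scaled leaves against (the negative of) the other block's, while its remaining monomials record exactly how the separator leaves must be absorbed. The key numeric feature is that one coefficient of $t$ forces the two block products $P_1,P_2$ to satisfy $P_1 = P_2$ (equivalently $P_1 - P_2 = 0$, which is where the operation $-$ earns its keep).

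For the forward direction, given a partition $A = A_1 \sqcup A_2$ with $\prod A_1 = \prod A_2$, I would route the scaled leaves of $A_1$ into one $\times$-block and those of $A_2$ into the other, use $+$ and $-$ on the separator leaves so that their net contribution is exactly the non-leading part of $t$, and combine the two blocks; this is a direct computation showing the constructed instance is a yes-instance.

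For the converse I expect to need structure lemmas in the spirit of Lemmas~\ref{lem:leaves:plus-times1}--\ref{lem:leaves:plus-times3}: using that the leaves have bounded degree in $x$ and in $y$ and that $\deg_x$ and $\deg_y$ are sub-additive under $+,-,\times$ (Lemma~\ref{lemma:degree-props}), I would show that any solution tree decomposes into two multiplicative blocks joined along an additive/subtractive spine, that each block's scaled leaves form a partition class of $A$, and that the leftover separator leaves contribute only an additive correction $\delta$ of magnitude bounded by a polynomial in $n$. Taking $K$ larger than that bound, the block products $P_1,P_2$ are ``$K$-rough'' (all prime factors $>K$) while $|\delta| < K$; matching $t$ then forces $P_1 - P_2 = \delta$, and the standard roughness argument (a common prime factor of $P_1,P_2$ would have to divide $\delta$) forces $\delta = 0$ and $P_1 = P_2$, so $A_1,A_2$ is a genuine equal-product partition. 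A final application of the Rational Function Framework (Theorem~\ref{theorem:full-ratl}) substitutes sufficient integers for $x$ and $y$ to yield NP-hardness over $\mathbb{N}$; since the substituted target is exponentially large in $n$, this gives only \emph{weak} NP-hardness, which is exactly what is claimed.

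The main obstacle is the converse structural analysis. Because $-$ is available, degree in $x$ (or $y$) can drop through cancellation, so the clean degree-monotonicity arguments that suffice for $\{+,\times\}$ must be replaced by more careful bookkeeping of leading coefficients, showing that no cancellation can ``rescue'' a tree that fails to have the two-block shape. Designing the separator gadgets and the target so that the polynomial identities pin down the block structure \emph{and} the residual slack is an integer of magnitude below $K$ — so that $K$-roughness closes it — is the delicate part; the rest is routine.
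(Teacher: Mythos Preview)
Your proposal has a genuine structural gap that stems from the Enforced Leaves constraint itself. In the EL variant every subtree of a solution corresponds to a \emph{contiguous} interval of the fixed leaf sequence. Consequently ``two multiplicative blocks'' whose leaf sets realize the two classes $A_1,A_2$ of an arbitrary product partition simply cannot be formed: the only bipartitions of $\{a_1,\ldots,a_n\}$ you can realize as the leaf sets of two subtrees are prefix/suffix splits. Your forward direction (``route the scaled leaves of $A_1$ into one $\times$-block and those of $A_2$ into the other'') therefore fails for any partition that is not an interval, and your converse would at best show hardness of a trivial interval-partition problem. The roughness trick with $K$ polynomial in $n$ is fine number theory, but it never gets a chance to act, because the tree cannot express the quantity $P_1-P_2$ for a non-contiguous partition in the first place.

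The paper avoids this obstruction by making the partition choice \emph{local}. Each $a_i$ is represented by a pair of adjacent leaves $b_i,c_i\in\mathbb{N}[y]$ designed so that $b_i+c_i=(La_i)p_iy$ and $b_i-c_i=(L/a_i)p_iy$ for distinct auxiliary primes $p_i$; the leaf order is $b_1\,c_1\,x\,b_2\,c_2\,x\cdots x\,b_n\,c_n$ and the target is the monomial $L^n y^n x^{n-1}\prod_i p_i$. A solution multiplies all the $x$'s and all the local pairs together, and the binary choice $b_i\pm c_i$ at each pair encodes the side of the partition; the product hits the target iff $\prod_i a_i^{\varepsilon_i}=1$ for the corresponding signs $\varepsilon_i\in\{\pm1\}$, i.e.\ iff the partition has equal products. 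The structural lemmas then use the single factor of each $p_i$ in the target (not a roughness bound on $\delta$) to exclude $b_i\times c_i$ and force every $x$-containing subtree to be a monomial. If you want to salvage your approach, the key change is to encode each partition bit by a local gadget of adjacent leaves rather than by trying to assemble the two product classes as subtrees.
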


This statement is proved via reduction from \SetProductPartitionBound{3}.  Let the instance be $A = \{a_1, \ldots, a_n\}$, where all prime factors of all $a_i \in A$ (and all $a_i$ themselves) are greater than 3. 

Let $L = 2 \prod_{i \in [n]} a_i$. 
Let $p_1, \ldots, p_n$ be unique primes greater than 3 that are coprime to $\prod_{i \in [n]} a_i (a_i^2 + 1)(a_i^2 - 1)$.
For each $a_i$, construct integer-coefficient $y$ terms $b_i = \frac{1}{2} L(a_i + \frac{1}{a_i})p_i y$ and $c_i = \frac{1}{2} L(a_i - \frac{1}{a_i})p_i y$.
Observe that $b_i + c_i = (La_i)p_iy$ and $b_i - c_i = (L/a_i)p_i y$.
Also note that they both have integer coefficients because $a_i | L$ for all $i$.

Now, construct instance $I_A$ of $\AECEL{\nats[x,y]}{\{+,-,\times\}}$ which has target polynomial $t(x,y) = L^n y^n x^{n-1} \prod_{i \in [n]} p_i$, and the following order of leaves:
$$
\label{eqn:leaves:plus-minus-times:ordering}
b_1  ~\enspace~ c_1  ~\enspace~ x ~\enspace~ b_2  ~\enspace~ c_2  ~\enspace~ x~ \enspace~ \cdots ~\enspace~ x~ \enspace~ b_n ~ \enspace~ c_n.
$$

If an instance of this product partition variant is solvable, then the constructed instance evaluates to $t(x,y) = L^n x^{n-1} y^n \prod_{i \in [n]} p_i$ when we have $(b_i + c_i)$ for $a_i$ in one partition and $(b_i - c_i)$ for $a_i$ in the other, and the $\times$ operator at every other node.
The partition corresponds to whether the $a_i$ was written as a difference or a sum.

We must also show that any expression achieving the target \emph{must} take the form above. 
We restrict the set of possible forms by (1) inducting to show that each subtree of a solution must have degree in $x$ equal to its number of leaves of value $x$, (2) counting primes factors of the highest degree term to show that subtrees with no $x$ values must be of form $\{\pm b_i, \pm c_i, \pm b_i\pm c_i\}$, (3) a divisibility argument to show that sums of elements of form $\{\pm b_i, \pm c_i, \pm b_i\pm c_i\}$ as appearing in any evaluation of a subtree is nonzero, and (4) an argument on the degree of $y$ for terms with degree 0 in $x$ to show that these sums can never be canceled.

The bulk of the proof will occur in Lemma~\ref{lem:leaves:plus-minus-times:3}, which states that all subtrees of a solution to this instance evaluate to monomials.
To prove Lemma~\ref{lem:leaves:plus-minus-times:3}, we use 3 helper lemmas, which we now proceed to state and prove:



\begin{lemma}
\label{lem:leaves:plus-minus-times:1}
Let $T_{full}$ be the tree representation of a solution to instance $I_A$. All subtrees $T$ of $T_{full}$ have an evaluation with degree in $x$ equal to the number of $x$ terms in its leaves. If two subtrees both have a nonzero number of $x$ terms in its leaves, their lowest common ancestor must be a $\times$ operator.
\end{lemma}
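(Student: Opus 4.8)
The plan is to argue top-down from the root, using that $\deg_x$ is subadditive for the operations $+,-,\times$ (Lemma~\ref{lemma:degree-props}) and \emph{exactly} additive for $\times$. Since the only operations are $+,-,\times$, every evaluation lies in $\mathbb{Z}[x,y]=(\mathbb{Z}[y])[x]$, so for nonzero polynomials $\deg_x(PQ)=\deg_x(P)+\deg_x(Q)$ because $\mathbb{Z}[y]$ is an integral domain. For a subtree $T$ let $\nu(T)$ be the number of its leaves equal to $x$. Because the leaf order of $I_A$ is fixed, every subtree spans a contiguous block of the list $b_1\,c_1\,x\,b_2\,c_2\,x\cdots x\,b_n\,c_n$. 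First I would record two easy facts: (i) by subadditivity and induction from the leaves, $\deg_x(\mathrm{ev}(T))\le \nu(T)$ for every subtree $T$, using the convention $\deg_x(0)=-\infty$; and (ii) the root satisfies $\deg_x(\mathrm{ev}(T_{full}))=\deg_x(t)=n-1$, which equals $\nu(T_{full})$ since the ordering contains exactly $n-1$ copies of $x$.

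The only place where subadditivity can fail to be tight, or where $\deg_x$ could be undefined, is a subtree with no $x$-leaf. Such a $T$ lies entirely inside a single pair $\{b_i,c_i\}$ of the ordering: any two leaves from different pairs are separated by a copy of $x$ in the list, which a contiguous subtree containing both would also have to contain. Hence, if $T$ is internal, $\mathrm{ev}(T)$ is one of $b_i+c_i$, $b_i-c_i$, $b_ic_i$ (and if $T$ is a leaf, $b_i$ or $c_i$); since $a_i>3>1$, each of these is a nonzero polynomial, so $\deg_x(\mathrm{ev}(T))=0=\nu(T)$. This settles the base of the induction and removes the zero-polynomial headaches.

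For the inductive step, suppose $N$ is an internal node with $\deg_x(\mathrm{ev}(N))=\nu(N)$ and children $L,R$ joined by an operation $\oplus$. If $\oplus=\times$, then $\mathrm{ev}(N)=\mathrm{ev}(L)\,\mathrm{ev}(R)$ is nonzero (its $x$-degree is $\nu(N)\ge 0$), so both factors are nonzero and $\deg_x(\mathrm{ev}(N))=\deg_x(\mathrm{ev}(L))+\deg_x(\mathrm{ev}(R))$; together with (i) and $\nu(N)=\nu(L)+\nu(R)$ this forces $\deg_x(\mathrm{ev}(L))=\nu(L)$ and $\deg_x(\mathrm{ev}(R))=\nu(R)$. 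If $\oplus\in\{+,-\}$, then
\[
\nu(N)=\deg_x(\mathrm{ev}(N))\le \max\bigl(\deg_x(\mathrm{ev}(L)),\,\deg_x(\mathrm{ev}(R))\bigr)\le \max\bigl(\nu(L),\,\nu(R)\bigr)\le \nu(L)+\nu(R)=\nu(N),
\]
so every inequality is an equality; in particular $\max(\nu(L),\nu(R))=\nu(L)+\nu(R)$, forcing $\min(\nu(L),\nu(R))=0$. The child with zero $x$-count evaluates to a nonzero degree-$0$ polynomial by the previous paragraph, and then $\deg_x(\mathrm{ev}(N))=\max(\deg_x(\mathrm{ev}(L)),\deg_x(\mathrm{ev}(R)))$ forces the other child to carry $x$-degree $\nu(N)$, i.e.\ to be tight as well. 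Iterating this step downward from the root yields $\deg_x(\mathrm{ev}(T))=\nu(T)$ for all subtrees $T$. Moreover, the $\oplus\in\{+,-\}$ case shows that no internal node can have both children with $\nu>0$; hence if two disjoint subtrees each contain an $x$-leaf, their least common ancestor --- an internal node having those two subtrees in different child subtrees --- must be a $\times$ node.

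I expect the main obstacle to be bookkeeping the not-tight and zero-polynomial cases cleanly; the combinatorial observation that a subtree with no $x$-leaf sits inside a single $\{b_i,c_i\}$ pair is exactly what closes that gap, and everything else is the routine $x$-degree algebra supplied by Lemma~\ref{lemma:degree-props}.
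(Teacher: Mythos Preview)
Your proof is correct and follows essentially the same route as the paper: first bound $\deg_x(\mathrm{ev}(T))\le\nu(T)$ by subadditivity from the leaves, then use the tight root value $t$ to push equality back down, observing along the way that a $+$ or $-$ node forces one child to have $\nu=0$. Your explicit treatment of the zero-polynomial case (noting that any subtree with $\nu(T)=0$ sits inside a single $\{b_i,c_i\}$ block and hence evaluates to one of the nonzero values $b_i,c_i,b_i\pm c_i,b_ic_i$) is a clean addition that the paper's proof glosses over.
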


\begin{proof}[Proof of Lemma~\ref{lem:leaves:plus-minus-times:1}]
We prove by strong induction on subtrees of increasing height that subtrees $T$ of $T_{full}$ have an evaluation with degree in $x$ at most $k_T$, the number of $x$ terms in its leaves. We then use the equality condition of the induction to show that the evaluation has degree in $x$ exactly $k_T$.

It is clear that all leaves, that is, subtrees of height 0, have degree in $x$ equal to 1 if it is an $x$ term and 0 if it is an $a_i y$ term.

Assume that the degree in $x$ of the evaluation of all subtrees of height at most $h$ is at most the number of leaves of value $x$. We show this is also true for all subtrees of height $h+1$.

Let $T$ be a subtree of height $h+1$ with left subtree $L$ with $k_L$ leaves of value $x$ and right subtree $R$ with $k_R$ leaves of value $x$. Since $L$ and $R$ have height at most $h$, the degree in $x$ of their evaluations are $k_L$ and $k_R$. If $T$ is rooted at an $+$ or $-$ operator, then the degree in $x$ of the evaluation of $T$ is at most $\max(k_L, k_R)$. If $T$ is rooted at a $\times$ operator, then the degree in $x$ of the evaluation of $T$ is at most $k_L + k_R$. Thus the evaluation of $T$ has degree in $x$ at most $k_T = k_L + k_R$, the number of leaves of value $x$.

Note that the evaluation of $T$ has $x$-degree exactly $k_T = k_L + k_R$ when the evaluations of $L$ and $R$ have $x$-degree exactly $k_L$ and $k_R$ (since they have degree in $x$ at most $k_L$ and $k_R$). If $k_L$ and $k_R$ nonzero, then in order for $T$ to have $x$-degree $k_T$, it must be the case that $L$ and $R$ are multiplied together. Since any $T_{full}$ evaluates to $t(x,y) = L^n y^n x^{n-1} \prod_{i \in [n]} p_i$, which has $x$-degree equal to the number of $x$ leaves, propagating this property from the root of the tree to the leaves we find that any subtree $T$ has degree in $x$ equal to the number of leaves of value $x$.
\end{proof}

\begin{lemma} \label{lem:leaves:plus-minus-times:bici}
Let $b_i$ and $c_i$ be as defined above.  Let $T_{full}$ be the tree representation of a solution to $I_A$.  All subtrees of $T_{full}$ whose evaluation has degree 0 in $x$ must evaluate to $\pm b_i$, $ \pm c_i$, or $ \pm b_i\pm c_i$. In other words, the subtree  $ \pm b_i\times c_i$ cannot exist.
\end{lemma}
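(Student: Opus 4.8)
The plan is to first use Lemma~\ref{lem:leaves:plus-minus-times:1} to pin down the exact shape of a degree‑$0$‑in‑$x$ subtree, and then kill the multiplicative case by a $p_i$‑divisibility count. By Lemma~\ref{lem:leaves:plus-minus-times:1}, a subtree whose evaluation has $x$‑degree $0$ contains no $x$‑leaf. The leaves of any subtree of an expression tree form a contiguous block of the enforced leaf order $b_1\,c_1\,x\,b_2\,c_2\,x\,\cdots\,x\,b_n\,c_n$, and consecutive blocks $\{b_i,c_i\}$ and $\{b_{i+1},c_{i+1}\}$ are separated by an $x$; hence such a subtree has all its leaves inside a single block $\{b_i,c_i\}$. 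So it is either a single leaf ($b_i$ or $c_i$), or a two‑leaf subtree whose left and right children are $b_i$ and $c_i$ combined by one of $+,-,\times$, i.e.\ it evaluates to $b_i$, $c_i$, $b_i+c_i$, $b_i-c_i$, or $b_i\times c_i$. It remains only to exclude $b_i\times c_i$.

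Next I would record the needed arithmetic. Using $L=2\prod_j a_j$ and $a_i\mid L$, one rewrites $b_i=\bigl(\prod_{j\neq i}a_j\bigr)(a_i^2+1)\,p_i\,y$ and $c_i=\bigl(\prod_{j\neq i}a_j\bigr)(a_i^2-1)\,p_i\,y$. Since $p_i$ is a prime $>3$ coprime to $\prod_j a_j(a_j^2+1)(a_j^2-1)$ and distinct from the other $p_j$, the $p_i$‑adic valuation $v_{p_i}$ of the integer coefficient of $b_i$ (and of $c_i$) is exactly $1$; the valuation of the coefficient of every other leaf ($x$, and $b_j,c_j$ for $j\neq i$) is $0$; and, since the target $t=L^n x^{n-1}y^n\prod_j p_j$ is a single monomial, $v_{p_i}(t)=1$.

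Then comes the exclusion step. Suppose some subtree $T$ evaluated to $b_i\times c_i$; its value is a monomial in $y$ whose coefficient has $v_{p_i}=2$. I would follow the path from $T$ up to the root, tracking $v_{p_i}$ of the leading‑in‑$y$ coefficient of the running evaluation. At a $\times$ node this quantity gains the (nonnegative) valuation of the sibling's leading coefficient; at a $\pm$ node it keeps the side of strictly larger $y$‑degree, or, if the degrees are equal, becomes the valuation of the sum of the two leading coefficients. Because $b_i$ and $c_i$ lie entirely inside $T$, every sibling met along the path is built only from $x$ and from the $b_j,c_j$ with $j\neq i$; granting that such a sibling's leading‑in‑$y$ coefficient is coprime to $p_i$, the running valuation stays $\ge 2$ while the $T$‑side keeps the $y$‑degree lead and collapses to $0$ otherwise, so it can never equal exactly $1$ — contradicting $v_{p_i}(t)=1$. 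Hence no subtree evaluates to $b_i\times c_i$, which, combined with the shape analysis above, is the claim.

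The main obstacle is exactly the parenthetical granted above: showing that the sibling subtrees encountered on the way up have leading‑in‑$y$ coefficient coprime to $p_i$. Modulo $p_i$, such a coefficient is a polynomial in $x$ assembled from the nonzero residues of the $\beta_j,\gamma_j$ ($j\neq i$) and from $x$, in a way constrained by the enforced leaf order, and one has to rule out an accidental cancellation at the top $y$‑degree. I expect to handle this by the same degree bookkeeping used in Lemma~\ref{lem:leaves:plus-minus-times:1} together with the fact that $t$ is a single monomial (so no lower‑order debris can absorb a cancellation); this is where essentially all the work of the lemma lies, while the shape analysis and the coefficient computations are routine.
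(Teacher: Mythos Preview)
Your shape analysis is correct and is essentially what the paper does: by Lemma~\ref{lem:leaves:plus-minus-times:1} and contiguity of the leaves of any subtree, an $x$-degree-$0$ subtree sits inside a single block $\{b_i,c_i\}$, so its value is one of $b_i$, $c_i$, $b_i+c_i$, $b_i-c_i$, $b_ic_i$.

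The gap you flag in the exclusion step is genuine, and tracking the leading-in-$y$ coefficient makes it harder than necessary. Siblings along the path to the root are built from $x$ and the $b_j,c_j$ with $j\neq i$ using $+,-,\times$; while every leaf coefficient is coprime to $p_i$, nothing stops an additive combination from producing a coefficient divisible by $p_i$, so a sibling's leading-$y$ coefficient can perfectly well have $v_{p_i}\ge 1$. At a $\pm$ node where the sibling wins or ties the $y$-degree, your running valuation can therefore land on exactly $1$. The ``same degree bookkeeping as in Lemma~\ref{lem:leaves:plus-minus-times:1}'' does not address this, because that lemma controls $x$-degree, not $p_i$-divisibility of $y$-leading coefficients.

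The paper sidesteps the obstacle by tracking the coefficient of the \emph{highest-degree-in-$x$} term instead. Lemma~\ref{lem:leaves:plus-minus-times:1} gives exact control there: whenever both children of a node contain an $x$-leaf the operation is $\times$, and when one child is $x$-free a $\pm$ with it leaves the leading-$x$ coefficient unchanged up to sign. Unwinding, the leading-$x$ coefficient of the root factors as $\pm\prod_j F_j$, where $F_j\in\{1,\,b_j,\,c_j,\,b_j\pm c_j,\,b_jc_j\}$ records the net multiplicative contribution of block $j$. Each $F_j$ with $j\ne i$ is coprime to $p_i$ by construction of the $p_j$, so $v_{p_i}$ of the leading-$x$ coefficient equals $v_{p_i}(F_i)\in\{0,1,2\}$. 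Since the target is the monomial $L^n x^{n-1}y^n\prod_j p_j$, this must be $1$. Now if a subtree evaluating to $b_ic_i$ existed, then either it is multiplied into an $x$-containing subtree (forcing $v_{p_i}(F_i)=2$) or it is added or subtracted (forcing $v_{p_i}(F_i)=0$), a contradiction either way. The product structure of the leading-$x$ coefficient comes for free from Lemma~\ref{lem:leaves:plus-minus-times:1}, which is precisely what your leading-$y$ approach lacks.
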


\begin{proof}[Proof of Lemma~\ref{lem:leaves:plus-minus-times:bici}]
First, we show that the coefficient of the term with highest degree in $x$ in the evaluation of $T_{full}$ must be of the form:
$$\prod_{i=1}^n F_i$$ 
where $F_i  \in \{\pm 1, \pm b_i, \pm c_i, \pm b_i\pm c_i, \pm b_ic_i\}$ for all $i \in [n]$. 

Let $n_j$ be the $j$th leaf from the left in leaf ordering.  Consider the subtree rooted at highest ancestor of $n_j$ not containing $n_{j_\ell}$, $n_{j_r}$ with $j_\ell < j < j_r$. The leaves of this subtree are within the range $[j_\ell+1, j_r-1]$. 
Then any term with degree 0 in $x$ formed with $b_i$ or $c_i$ will in the form $F_i$ as above, before being operated with a subtree containing an $x$. This term only contribute the coefficient of the term with highest degree in $x$ if it is multiplied.

Recall that the target $t(x,y) = L^n y^n x^{n-1} \prod_{i \in [n]} p_i$ contains $n$ unique primes $p_1, \ldots, p_n$.
Furthermore, recall that by construction $b_i$ and $c_i$ both contain $p_i$ as a factor.
Thus, a solution $T_{full}$ must have $F_i \in \{\pm b_i, \pm c_i, \pm b_i\pm c_i\}$ for all $i$ because there is exactly one factor of each $p_i$ in $t(x,y)$. We may never multiply $b_i$ and $c_i$, as $b_i c_i$ has 2 factors of $p_i$, so if it were multiplied into a subtree containing $x$, the resulting evaluation would have too many $p_i$ factors to meet the target.
\end{proof}

\begin{lemma} \label{lem:leaves:plus-minus-times:d}
Let $b_i$ and $c_i$ be as defined above.
All sums of the following form, where at least one $d_j$ is nonzero, have nonzero evaluations: \begin{align}
d_1b_i +d_2c_i +d_3b_{i'}+ d_4c_{i'}, \ \  d_j\in \{0,\pm 1\}. \label{eqn:d-eqn}
\end{align}
\end{lemma}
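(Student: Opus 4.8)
\textbf{Proof proposal for Lemma~\ref{lem:leaves:plus-minus-times:d}.}

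The plan is to reduce the statement to an elementary $p$-adic divisibility comparison, after first rewriting $b_i$ and $c_i$ in a form that exposes their integer structure. Since $L = 2\prod_j a_j$ and $a_i \mid L$, one has
$$b_i = \Big(\prod_{j\neq i} a_j\Big)(a_i^2+1)\,p_i\,y, \qquad c_i = \Big(\prod_{j\neq i} a_j\Big)(a_i^2-1)\,p_i\,y,$$
so each is an integer multiple of $p_i y$ whose integer multiplier is a product of the $a_j$ with $j\neq i$, the prime $p_i$, and one of $a_i^2\pm1$. The whole lemma then follows by tracking divisibility by $p_i$ and $p_{i'}$ — this is precisely why the $p_k$ were chosen distinct and coprime to $\prod_k a_k(a_k^2+1)(a_k^2-1)$ — and I will read the statement with $i \neq i'$ (for $i=i'$ the displayed sum can vanish with, e.g., $d_1=-d_3=1$ and $d_2=d_4=0$, so distinctness of the two indices is implicit).

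The key steps, in order, are the following. (1) From the rewriting, compute $d_1 b_i + d_2 c_i = N_i\,y$ where $N_i = \big(\prod_{j\neq i}a_j\big)\,p_i\,e_i$ with $e_i := d_1(a_i^2+1)+d_2(a_i^2-1)$. (2) For $d_1,d_2\in\{0,\pm1\}$ the quantity $e_i$ takes only the values $0$ and $\pm2,\ \pm2a_i^2,\ \pm(a_i^2+1),\ \pm(a_i^2-1)$, and since $a_i>3$ the latter are all nonzero; hence $N_i=0$ iff $(d_1,d_2)=(0,0)$. (3) When $e_i\neq 0$, note $p_i\mid N_i$ while $p_{i'}\nmid N_i$: indeed $p_{i'}$ is coprime to every $a_j$, distinct from $p_i$, greater than $3$ (hence coprime to $2$ and so to $2a_i^2$), and coprime to $a_i^2\pm1$ by construction, so it divides none of the factors of $N_i$. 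Symmetrically, $d_3 b_{i'} + d_4 c_{i'} = N_{i'}\,y$ with $N_{i'}=0$ iff $(d_3,d_4)=(0,0)$, and when $N_{i'}\neq0$ we get $p_{i'}\mid N_{i'}$ and $p_i\nmid N_{i'}$. (4) The displayed sum is $(N_i+N_{i'})\,y$; suppose it vanishes, so $N_i=-N_{i'}$. If $(d_1,d_2)\neq(0,0)$ then $N_i\neq0$ and $p_{i'}\nmid N_i$; were $(d_3,d_4)\neq(0,0)$ too, then $p_{i'}\mid N_{i'}=-N_i$, a contradiction, so $(d_3,d_4)=(0,0)$, whence $N_{i'}=0$ and $N_i=0$, also a contradiction. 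The symmetric argument disposes of the case $(d_3,d_4)\neq(0,0)$. Since "at least one $d_j$ nonzero" means $(d_1,d_2)\neq(0,0)$ or $(d_3,d_4)\neq(0,0)$, the sum is nonzero either way.

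There is no real obstacle here; the lemma is a bookkeeping consequence of the way the $p_k$ and $L$ were set up. The only points requiring care are matching the coprimality facts invoked in step (3) exactly to what the choice of the $p_k$ guarantees (coprimality to the $a_j$ and to the $a_j^2\pm1$, together with $p_k>3$), and handling all sign/zero patterns of $(d_1,d_2)$ and of $(d_3,d_4)$ uniformly through the value set of $e_i$ rather than by brute-force enumeration. The induction-free valuation comparison in step (4) is what makes the "two distinct indices" structure carry the argument.
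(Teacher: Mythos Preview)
Your argument is correct and is actually tidier than the paper's. The paper clears denominators by multiplying through by $a_i a_{i'}$, obtaining
\[
(d_1+d_2)p_i a_i^2 a_{i'} + (d_3+d_4)p_{i'} a_i a_{i'}^2 + (d_1-d_2)p_i a_{i'} + (d_3-d_4)p_{i'} a_i = 0,
\]
and then argues in two stages: first reducing mod a prime factor $q_i>3$ of $a_i$ not dividing $a_{i'}$ (and symmetrically) to force $d_1=d_2$ and $d_3=d_4$, and only afterward reducing mod $p_i$ and $p_{i'}$ to force $d_1=d_3=0$. Your route bypasses the first stage entirely: by factoring $d_1 b_i + d_2 c_i$ as $\big(\prod_{j\neq i} a_j\big)\,p_i\,e_i\,y$ and listing the finitely many values of $e_i$, you reduce the whole lemma to a single $p_i$/$p_{i'}$ valuation comparison. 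This uses exactly the coprimality guarantees that were engineered into the choice of the $p_k$ (coprime to each $a_j$ and to each $a_j^2\pm 1$, and $>3$), and avoids the paper's appeal to a prime factor of $a_i$ that does not divide $a_{i'}$---an appeal which in fact leans on unstated structural features of the $a_i$. Your explicit remark that the lemma is only claimed for $i\neq i'$ is also well taken; the paper leaves this implicit.
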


\begin{proof}[Proof of Lemma~\ref{lem:leaves:plus-minus-times:d}]
Suppose, for contradiction, that we do have an expression of the form above which evaluates to $0$. 
\begin{multline}
d_1b_i +d_2c_i +d_3b_{i'}+ d_4c_{i'} \\ = \frac{Ly}{2}\left(d_1p_i\left(a_i+\frac{1}{a_i}\right)+ d_2p_i\left(a_i - \frac{1}{a_i}\right)+ d_3p_{i'}\left(a_{i'}+\frac{1}{a_{i'}}\right) +d_4p_{i'}\left(a_{i'}-\frac{1}{a_{i'}}\right)\right) \\ = 0 \end{multline}

We can cancel out the $\frac{Ly}{2a_i a_{i'}} $ factor in our equation:
\begin{align}
    (d_1 + d_2) p_i a_i^2 a_{i'}  + (d_3 + d_4) p_{i'} a_i a_{i'}^2 +
    (d_1-d_2)p_i a_{i'} + (d_3 - d_4)p_{i'}a_{i}= 0\label{eqn:intseqn}
\end{align}

For Equation~\ref{eqn:intseqn} to hold, we require $(d_1 - d_2) p_i a_{i'} \equiv 0 \mod a_{i}$ and $(d_2 - d_3) p_{i'} a_{i} \equiv 0 \mod a_{i'}$. Since these $p_i$, $a_{i'}$, and $a_{i}$ are distinct, there is a prime factor, which we call $q_i$ of $a_i$ that is not in either $a_{i'}$ or $p_i$, and by construction $q_i > 3$. Since, by construction, $d_1 - d_2 \in \{0,\pm 1,\pm 2\}$ and $d_1 - d_2 \equiv 0 \mod q_i$, we must have $d_1 = d_2$. Similarly, $d_3 =d_4$. 

Applying these two substitutions, Equation~\ref{eqn:intseqn} becomes $2d_1p_i a_i^2 a_{i'}  + 2d_3 p_{i'} a_i a_{i'}^2 = 0$. Now considering this equation modulo $p_i$ and $p_{i'}$ in the same way as before, we find we must have $d_3, d_1 = 0$. This implies all $d_j = 0$, a contradiction.
\end{proof}

\begin{lemma}
\label{lem:leaves:plus-minus-times:3}
For a solution $T_{full}$ to $I_A$ as defined above, consider any subtree $T$ containing $x$. Its evaluation $\ev(T)$ is a monomial in $x$ and $y$.
\end{lemma}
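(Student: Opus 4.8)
The plan is to prove Lemma~\ref{lem:leaves:plus-minus-times:3} by a smallest-counterexample argument together with the three helper lemmas. Suppose the statement fails and let $T$ be a subtree of $T_{full}$ that contains at least one $x$-leaf, whose evaluation is \emph{not} a monomial, and with as few leaves as possible; write $T = L \circ R$ for its root operation $\circ$ and children $L,R$. First I would dispose of the case where both $L$ and $R$ contain $x$-leaves: by minimality $\ev(L)$ and $\ev(R)$ are monomials, and by Lemma~\ref{lem:leaves:plus-minus-times:1} the least common ancestor of an $x$-leaf of $L$ and one of $R$ — namely the root of $T$ — is $\times$, so $\ev(T)$ is a product of monomials, hence a monomial, a contradiction. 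Thus exactly one child, say $L$, contains $x$-leaves; then $\ev(L)$ is a monomial of $x$-degree equal to its number of $x$-leaves, which is $\ge 1$ (Lemma~\ref{lem:leaves:plus-minus-times:1}), while $R$ contains no $x$-leaf, so (the $x$-leaves separate consecutive pairs in the given ordering) the leaves of $R$ lie inside a single pair $\{b_i,c_i\}$ and by Lemma~\ref{lem:leaves:plus-minus-times:bici} we get $\ev(R) \in \{\pm b_i,\pm c_i,\pm b_i \pm c_i\}$, a nonzero monomial in $y$ of $x$-degree $0$. If $\circ = \times$ then $\ev(T)$ is again a monomial, a contradiction; so $\circ \in \{+,-\}$ and $\ev(T) = \ev(L)\pm\ev(R)$ has exactly the two $x$-degrees $0$ and $\deg_x\ev(L)\ge 1$ occurring with nonzero coefficient.

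The heart of the argument is then to derive a contradiction by walking up the path $T = T_0 \subsetneq T_1 \subsetneq \cdots \subsetneq T_m = T_{full}$ from $T$ to the root, maintaining the invariant that $\ev(T_j)$ has at least two distinct powers of $x$ occurring with nonzero coefficient; in particular $\ev(T_{full})$ would not be a monomial, contradicting $\ev(T_{full}) = t(x,y) = L^n y^n x^{n-1}\prod_{i\in[n]} p_i$, which has the single $x$-power $n-1$. Write $T_{j+1} = T_j \circ_{j+1} S_{j+1}$. Every sibling evaluation $\ev(S_{j+1})$ is nonzero (it has positive $x$-degree if $S_{j+1}$ contains an $x$-leaf, by Lemma~\ref{lem:leaves:plus-minus-times:1}, and otherwise lies in $\{\pm b_k,\pm c_k,\pm b_k\pm c_k\}$ by Lemma~\ref{lem:leaves:plus-minus-times:bici}). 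If $\circ_{j+1} = \times$, the invariant passes through because $\mathbb{Q}[x,y]$ is an integral domain: the lowest- and highest-$x$-degree coefficients of $\ev(T_j)\ev(S_{j+1})$ are the products of those of the factors, hence nonzero, so the spread of $x$-degrees of $\ev(T_{j+1})$ is the sum of the two spreads, which is $\ge 1$.

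The remaining — and main — obstacle is the case $\circ_{j+1}\in\{+,-\}$. Here I would first observe that, since $T_j \supseteq T_0$ contains an $x$-leaf, Lemma~\ref{lem:leaves:plus-minus-times:1} forbids $S_{j+1}$ from containing an $x$-leaf (otherwise the root of $T_{j+1}$ would be forced to be $\times$); hence $\ev(S_{j+1}) \in \{\pm b_k,\pm c_k,\pm b_k\pm c_k\}$ has $x$-degree $0$. Consequently every power of $x$ that is $\ge 1$ in $\ev(T_j)$ survives in $\ev(T_{j+1})$, so the invariant can only break if $\ev(T_j)$ has exactly one such power $a\ge1$ together with the power $0$, and adding $\pm\ev(S_{j+1})$ kills the $x$-degree-$0$ part, i.e. $\ev(T_j)|_{x=0} = \mp\ev(S_{j+1})$. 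To exclude this I would analyze $\ev(T_j)|_{x=0}$: setting $x=0$ collapses every $\times$-gate having an $x$-descendant, and since (by the leaf ordering, and Lemma~\ref{lem:leaves:plus-minus-times:bici} ruling out $b_ic_i$) no subtree of $x$-degree $0$ can multiply leaves from two different pairs, $\ev(T_j)|_{x=0}$ is — up to the signed products introduced by the $x$-free multiplications performed along the path so far — a $\{0,\pm1\}$-linear combination of the $b_\ell,c_\ell$ with each leaf used at most once. If no such multiplication has occurred, this is a genuine $\{0,\pm1\}$-combination that includes the fresh leaf of $S_{j+1}$ with coefficient $\pm1$, hence nonzero by Lemma~\ref{lem:leaves:plus-minus-times:d} (and its verbatim extension to more than two pairs); if one has occurred, $\ev(T_j)|_{x=0}$ has $y$-degree $\ge 2$ while $\ev(S_{j+1})$ has $y$-degree $1$, so again $\ev(T_j)|_{x=0}\pm\ev(S_{j+1}) \ne 0$. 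Either way the invariant is preserved, the walk reaches the root, and we contradict $t(x,y)$ being a monomial. The delicate point to get exactly right is the bookkeeping that $\ev(T_j)|_{x=0}$ decomposes into a $\{0,\pm1\}$-combination of distinct $b_\ell,c_\ell$ plus $y$-degree-$\ge2$ ballast, so that Lemma~\ref{lem:leaves:plus-minus-times:d} and the $y$-degree count between them rule out the cancellation.
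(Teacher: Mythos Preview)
Your bottom-up strategy (minimal counterexample $T_0$, then walk to the root maintaining ``$\ge 2$ distinct $x$-powers'') is the dual of the paper's top-down induction and uses the same three helper lemmas and the same $y$-degree idea. It can be made to work, but your sketch has a real gap in the $\times$-step of the walk: you only discuss ``$x$-free multiplications'', yet when $\circ_{j+1}=\times$ and the sibling $S_{j+1}$ \emph{contains} an $x$-leaf, minimality of $T_0$ does \emph{not} force $\ev(S_{j+1})$ to be a monomial (it may have more leaves than $T_0$), so $\ev(S_{j+1})|_{x=0}$ is not covered by Lemma~\ref{lem:leaves:plus-minus-times:bici} and your stated dichotomy breaks down. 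The fix is short but should be stated: every term of any subtree evaluation has total degree $\ge 1$ in $x,y$, so whenever $\ev(S_{j+1})|_{x=0}\neq 0$ it has $y$-degree $\ge 1$, and hence after \emph{any} $\times$-step the $x^0$-part of $T_{j+1}$ is either zero or has $y$-degree $\ge 2$; in the zero case the lowest surviving $x$-power is $\ge 1$, giving two powers $\ge 1$ and making subsequent $\pm$-cancellation of $x^0$ harmless for the invariant. Also, you never need the ``verbatim extension'' of Lemma~\ref{lem:leaves:plus-minus-times:d}: between consecutive $\times$-steps on the path the $x$-free siblings lie in at most the two currently adjacent pairs, and the $\{0,\pm1\}$-combination resets after each $\times$.

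The paper avoids all of this bookkeeping by going top-down. Assuming the current node $T$ is a monomial with root $\pm$ and one $x$-free child, it takes $A$ to be the \emph{highest} $\times$-node inside $T$; then $A$ contains all $x$-leaves of $T$ and the path from $A$ to $T$ consists \emph{only} of $\pm$-steps with $x$-free siblings. Thus $\ev(T)=\pm\ev(A)$ plus a nonzero $\{0,\pm1\}$-combination from at most two pairs (Lemma~\ref{lem:leaves:plus-minus-times:d}), which has $y$-degree $1$, while the $x^0$-part of $\ev(A)$, being a product of two $x^0$-parts each of $y$-degree $\ge 1$, has $y$-degree $\ge 2$ --- so no cancellation, and $\ev(T)$ is not a monomial. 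This one-shot argument replaces your multi-step walk and the attendant case analysis.
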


\begin{proof}[Proof of Lemma~\ref{lem:leaves:plus-minus-times:3}]
We show by an induction from the root that the evaluation of any  subtree $T$ of solution $T_{full}$, such that $T$ contains at least one $x$ leaf, is a monomial in $x$ and $y$.

At the root of the tree, our target fulfills this condition.

Now we will show that if some $x$-containing subtree $T$ evaluates to a monomial, then its left subtree $L$ and right subtree $R$ do as well. There are two cases:

Case 1: $L$ and $R$ contain at least one leaf with value $x$.
By Lemma~\ref{lem:leaves:plus-minus-times:1}, we know that the operator connecting $L$ and $R$ must be $\times$.  Suppose by way of contradiction that the evaluation of either $L$ or $R$ had more than one term.  Then their product would have more than one term.  But $T$ is a monomial.  Thus, $L$ and $R$ must both be monomials in $x$ and $y$.

Case 2: One of $L$ or $R$ contain at least one leaf with value $x$ and the other does not.
We will show that in this case, for the evaluation of $T$ to be a monomial it must be rooted at a $\times$ operator.  This will show that the evaluations of both $L$ and $R$ must be monomials. which will complete our proof. 

Assume for the sake of contradiction there is some subtree of $T_{full}$ such this subtree $T$ is rooted at a $+$ or $-$ operator, it evaluates to a monomial in $x$ and $y$, and it has one child with leaves of value $x$ and another without. 

Case (i): $T$ has no $\times$ operator.
Then by Lemma~\ref{lem:leaves:plus-minus-times:1} it only has one $x$ leaf and its evaluation is $\pm x$ plus a sum with form as in Equation~\ref{eqn:d-eqn}.  By  Lemma~\ref{lem:leaves:plus-minus-times:d} the sum is nonzero, so $T$ would be a binomial, a contradiction. 

Case(ii): If $T$ has a $\times$ operator, but not at the root.
Let $A$ be the subtree of highest $\times$ operator node in $T$.  By Lemma~\ref{lem:leaves:plus-minus-times:bici}, $A$ must contain some $x$ leaf, and by Lemma~\ref{lem:leaves:plus-minus-times:1} $A$ must contain all $x$ leaves in $T$. Thus, $T$ must equal $A$ plus or minus multiples of $b_i$ and $c_i$ (but not $b_i \times c_i$, by Lemma~\ref{lem:leaves:plus-minus-times:bici}). 
Observe that the sum that we add or subtract from $A$ to form $T$ has degree 0 in $x$, degree 1 in $y$, and by Lemma~\ref{lem:leaves:plus-minus-times:d} it is nonzero. However, the evaluation of $A$ has no term with degree 0 in $x$ and degree 1 in $y$.  This is because $A$ was defined to have a $\times$ node at the root, so any of its terms with degree 0 in $x$ must be a product of two terms with degree 0 in $x$, one from each of its subtrees.  This would require $A$ to have degree at least 2 in $y$, which it does not.  Thus $T$ cannot be a monomial if it is rooted at a $+$ or $-$ operator.
\end{proof}

We are now ready to prove the claim.

\begin{proof}[Proof of Claim 
\ref{claim:EL-plus-minus-times}
]
Let $A$ be an instance of \SetProductPartitionBound{3}, and let $I_A$ be the $\AECEL{\nats[x,y]}{\{+,-,\times\}}$ instance as defined above.

As mentioned in the sketch, if an instance of this product partition variant is solvable, then the constructed instance evaluates to $t(x,y) = L^n y^n x^{n-1} \prod_{i \in [n]} p_i$ when we have $(b_i + c_i)$ for $a_i$ in one partition and $(b_i - c_i)$ for $a_i$ in the other, and the $\times$ operator at every other node.
The partition corresponds to whether the $a_i$ was written as a difference or a sum.

We now show that any expression achieving the target \emph{must} take the form above, and thus implies the existence of a solution to the instance of the product partition variant. 

By Lemma \ref{lem:leaves:plus-minus-times:3}, all subtrees of a solution evaluate to monomials. Thus, addition and subtraction can only occur between $b_i$ and $c_i$. Further, because our target has only one factor of each $p_i$, addition and subtraction \emph{must} occur between $b_i$ and $c_i$, else by Lemma~\ref{lem:leaves:plus-minus-times:3} both would need to be multiplied in, resulting in an evaluation at the root with too many factors of $p_i$. Thus any tree evaluating to the target must be the product of $x$ leaves and the sums or differences of $b_i$ and $c_i$.
This completes the proof.
\end{proof}

\subsection{Weak NP-hardness of AEC Enforced Leaves \texorpdfstring{$\{-, \times\}$}{\{−, ×\}}}
\label{appendix-leaves:minus-times}
\begin{claim}

$\AECEL{\nats[x,y]}{\{-,\times\}}$ is weakly NP-hard.
\label{appendix-leaves:MTclaim}
\end{claim}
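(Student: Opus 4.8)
The plan is to follow the reduction of Section~\ref{leaves:plus-minus-times} for Enforced Leaves $\{+,-,\times\}$, replacing its two‑leaf ``$b_i,c_i$'' gadget (whose branches were $b_i+c_i$ and $b_i-c_i$) by a \emph{three}-leaf gadget whose two branches can be produced with only $-$ and $\times$. I would reduce from \SetProductPartitionBound{3} (as in that section): given $a_1,\dots,a_n$ with no repetitions and with all $a_i$ and all their prime factors greater than $3$, decide whether some $S$ has $\prod_{i\in S}a_i=\prod_{i\notin S}a_i$. For each $i$ choose an integer $A_i\ge 2$ and a prime $p_i=a_i^2A_i+1$, with the $p_i$ pairwise distinct and chosen large (e.g.\ super-increasing) so that the divisibility bookkeeping below goes through; such choices exist by Dirichlet's theorem and are found in polynomial time. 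Take the three leaves
$$u_i=p_iA_i\,y,\qquad v_i=A_i(a_i^2+1)\,y,\qquad w_i=p_i\,y,$$
interleave the $n$ blocks with $n-1$ separator leaves $x$ to get the enforced order $u_1\,v_1\,w_1\,x\,u_2\,v_2\,w_2\,x\,\cdots\,x\,u_n\,v_n\,w_n$, and set the target
$$t(x,y)=x^{n-1}\Bigl(\prod_i a_i\Bigr)\Bigl(\prod_i p_iA_i(A_i-1)\Bigr)y^{2n}.$$

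Since $u_i-v_i=a_i^2A_i(A_i-1)\,y$ and $v_i-w_i=(A_i-1)\,y$, the two ``clean'' ways to collapse a block are
$$(u_i-v_i)\times w_i=a_i^2\cdot p_iA_i(A_i-1)\,y^2,\qquad u_i\times(v_i-w_i)=p_iA_i(A_i-1)\,y^2;$$
both are positive monomials of $y$-degree $2$ carrying exactly one factor of $p_i$, and they differ by exactly the factor $a_i^2$. Hence if $S$ is a solution of \SetProductPartitionBound{3}, taking $(u_i-v_i)\times w_i$ for $i\in S$, $u_i\times(v_i-w_i)$ for $i\notin S$, and multiplying these together with the $n-1$ separators yields $x^{n-1}\bigl(\prod_i p_iA_i(A_i-1)\bigr)\bigl(\prod_{i\in S}a_i\bigr)^2y^{2n}$, which equals $t(x,y)$ because $(\prod_{i\in S}a_i)^2=\prod_i a_i$; and each branch respects the leaf order, so this is a valid \AECEL{\nats[x,y]}{\{-,\times\}} expression.

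For soundness I would port the lemma chain of Section~\ref{leaves:plus-minus-times}. The analogue of Lemma~\ref{lem:leaves:plus-minus-times:1} is unchanged: $\deg_x$ is subadditive under the operations $-$ and $\times$, with equality under $\times$, so every subtree's value has $x$-degree equal to the number of $x$-leaves below it and two $x$-containing subtrees meet at a $\times$. Then, mirroring Lemmas~\ref{lem:leaves:plus-minus-times:bici}, \ref{lem:leaves:plus-minus-times:d} and \ref{lem:leaves:plus-minus-times:3}, one shows every $x$-containing node is a $\times$ with monomial value: a $\pm$ at an $x$-containing node would add a nonzero, $x$-degree-$0$ quantity (here a signed combination of $\{-,\times\}$-sub-expressions of single blocks, which the choice of the $p_i$ makes nonzero) to a term of $x$-degree $\ge1$, and a $y$-degree comparison rules this out exactly as in Case~(ii) there. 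Consequently the solution value is $x^{n-1}\prod_i G_i$, where each $G_i$ is the value of some $\{-,\times\}$-expression on $u_i,v_i,w_i$ in order; being a factor of the monomial $t$, each $G_i$ is a monomial, hence one of the non-binomial block expressions. The pure product $u_iv_iw_i$ is divisible by $p_i^2$, impossible since the exponent of $p_i$ in $G_i$ is at most its exponent in $t$, namely $1$; and the two remaining non-clean block expressions, $(u_i-v_i)-w_i$ and $u_i-(v_i-w_i)$, have $y$-degree $1$, so $\sum_i\deg_y G_i=2n$ forces every $G_i$ to be one of the two clean branches. Letting $S=\{i:G_i=(u_i-v_i)w_i\}$, comparison with $t$ gives $(\prod_{i\in S}a_i)^2=\prod_i a_i$, a solution of \SetProductPartitionBound{3}. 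Finally, Theorem~\ref{theorem:full-ratl} upgrades this to NP-hardness of \AECEL{\nats}{\{-,\times\}}.

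The main obstacle is exactly this structural step: re-establishing Lemmas~\ref{lem:leaves:plus-minus-times:1}--\ref{lem:leaves:plus-minus-times:3} with a three-leaf gadget in which a subtraction may occur \emph{inside} a block enlarges the case analysis over the eight $\{-,\times\}$-expressions on a block, and the parameters $(A_i,p_i)$ must be tuned—pairwise distinct primes $p_i\equiv1\pmod{a_i^2}$, coprime to all $a_j,A_j,A_j-1$, and (as above) chosen so that no signed combination of block sub-expressions vanishes—so that the nonvanishing and $p_i$-valuation arguments replacing Lemmas~\ref{lem:leaves:plus-minus-times:bici} and \ref{lem:leaves:plus-minus-times:d} all close. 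As in Section~\ref{leaves:plus-minus-times}, the ``set'' and ``$>3$'' hypotheses of \SetProductPartitionBound{3} are there to help forbid accidental cross-block cancellations.
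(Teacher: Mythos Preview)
Your gadget is genuinely different from the paper's. The paper does \emph{not} use a $\times$-based branch: it keeps the $b_i,c_i$ idea from Section~\ref{leaves:plus-minus-times} (now with $y$-degree $2$, so $b_i\pm c_i\in\{(La_i)p_iy^2,(L/a_i)p_iy^2\}$), introduces one extra leaf $p_iy$, and sets $b'_i:=b_i+p_iy$, giving the leaf order $b'_1,\,p_1y,\,c_1,\,x,\,b'_2,\,\ldots$. The two branches are purely subtractive, $((b'_i-p_iy)-c_i)=b_i-c_i$ versus $(b'_i-(p_iy-c_i))=b_i+c_i$; the \emph{parenthesisation} of the two $-$'s encodes the partition. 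The crucial design choice is that \emph{all three} block leaves carry the factor $p_i$, so the leading-$x$-coefficient contribution $F_i$ can never use a $\times$ (that would force $p_i^2$), which is exactly what makes the analogue of Lemma~\ref{lem:leaves:plus-minus-times:bici} go through.

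The gap in your sketch is the step you flag as ``a $y$-degree comparison rules this out exactly as in Case~(ii) there.'' It does not. In the paper's proof of Lemma~\ref{lem:leaves:minus-times:3}, the non-monomial case for $\ev(A)$ is dispatched because the $x^0$-part of $\ev(A)=\ev(A_L)\cdot\ev(A_R)$ has $y$-degree at least $4$ (each factor's $x^0$-part has $y$-degree $\ge 2$, since $b_i,c_i$ were lifted to $y$-degree $2$), whereas $\ev(T)-\ev(A)$ has $y$-degree at most $3$. In your gadget every non-$x$ leaf has $y$-degree $1$, so the $x^0$-part of $\ev(A)$ is only guaranteed $y$-degree $\ge 2$; meanwhile, because your intended branches \emph{use} $\times$ inside a block, you cannot forbid $\times$ in the siblings, and a sibling on three leaves (e.g.\ $u_j v_j w_j$ or $(u_j-v_j)w_j$) can have $y$-degree $2$ or $3$. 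The two ranges overlap, so the $y$-degree separation collapses, and you have given no replacement argument that the $x^0$-parts cannot cancel. Relatedly, your version of Lemma~\ref{lem:leaves:plus-minus-times:bici} cannot say ``no $\times$ in $F_i$'' (your $v_i$ has no $p_i$), so the clean restriction of $F_i$ that underpins both Property~1 and Property~2 in the paper's Lemma~\ref{lem:leaves:minus-times:3} is unavailable to you. Tuning the $p_i$ to be large and super-increasing handles the signed-sum nonvanishing (your Lemma~\ref{lem:leaves:plus-minus-times:d} analogue), but it does not address cancellation against the $x^0$-part of a product subtree $A$, which is where the argument actually breaks.
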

The proof for this claim is very similar to the proof for AEC Enforced Leaves $\{+,-,\times\}$ in Section~\ref{appendix-leaves:plus-minus-times}, and will use lemmas and modifications of lemma from the previous section. The key difference is that we use an additional leaf to represent addition using only subtraction and brackets, as $a - (0 - b) = a+b$. 

As in Section~\ref{appendix-leaves:plus-minus-times}, let $A = \{a_1,\ldots, a_n\}$ be an instance of \SetProductPartitionBound{3}, let $L = 2 \prod_{i \in [n]} a_i$, and let $p_j$ for $j \in [n]$ be unique primes greater than 3 that are coprime to $\prod_{i \in [n]} a_i (a_i^2 + 1)(a_i^2 - 1)$. Similar to Section~\ref{appendix-leaves:plus-minus-times}, let $b_i$ and $c_i$ be $y$ times what it was before, so that $b_i + c_i = (La_i)p_iy^2$ and $b_i - c_i = (L/a_i)p_i y^2$.
We cannot use 0 in the equation, since it is non-positive.  Instead, we will add some terms of $b'_i = b_i + p_i y$.

Let instance $I_A$ of $\AECEL{\nats[x,y]}{\{-,\times\}}$ 
enforce the following order of leaves:
$$
b'_1  ~\enspace~ p_1 y  ~\enspace~ c_1  ~\enspace~ x ~\enspace~ b'_2  ~\enspace~ p_2 y ~\enspace~ c_2  ~\enspace~ x~ \enspace~ \cdots ~\enspace~ x~ \enspace~ b'_n  ~\enspace~ p_n y ~ \enspace~ c_n,
$$
and have target polynomial $t(x,y) = L^n y^{2n} x^{n-1} \prod_{i \in [n]} p_i$.

The structure of this proof is similar to the previous section. We can directly use the statement and proof of Lemma~\ref{lem:leaves:plus-minus-times:1} and Lemma~\ref{lem:leaves:plus-minus-times:d} (with an extra factor of $y$ canceled in the proof of the latter). We now state the analogue of Lemma~\ref{lem:leaves:plus-minus-times:bici}, as well as the analogue of Lemma~\ref{lem:leaves:plus-minus-times:3} (which is the same statement but has an altered proof due to the new problem instance structure), show how the latter lemma can be applied to prove Claim~\ref{appendix-leaves:MTclaim}, and finally prove the lemmas.

\begin{lemma} \label{lem:leaves:minus-times:bici}
In the evaluation of a solution $T_{full}$ to $I_A$, the coefficient of the term with highest degree in $x$ must be of the form:
$$\prod_{i} F_i(b'_i, p_i y, c_i)$$ 
where $F_i(b'_i, p_i y, c_i) \in \{c_i, b_i, b_i\pm c_i\}$.
\end{lemma}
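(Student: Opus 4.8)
The plan is to follow the template of the proof of Lemma~\ref{lem:leaves:plus-minus-times:bici} (the $\{+,-,\times\}$ case), adapted to the three‑leaf blocks $\{b'_i, p_i y, c_i\}$, with two extra points of care: the auxiliary leaf $p_i y$, and the fact that $b'_i = b_i + p_i y$ is a \emph{binomial} in $y$. First I would isolate the top power of $x$. The enforced ordering has exactly $n-1$ leaves equal to $x$, and $t(x,y) = L^n y^{2n} x^{n-1}\prod_i p_i$ has $\deg_x = n-1$, so Lemma~\ref{lem:leaves:plus-minus-times:1} applies unchanged: every subtree has $x$‑degree equal to the number of $x$‑leaves it contains, and the least common ancestor of any two $x$‑leaves is a $\times$ node. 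Writing $\operatorname{lc}(S) \in \mathbb{Z}[y]$ for the coefficient of $x^{\deg_x S}$ in $\ev(S)$, I would descend from the root: at a $\times$ node $\operatorname{lc}$ multiplies over the children; at a $+$ or $-$ node at most one child contains $x$‑leaves, so that child has strictly larger $x$‑degree and $\operatorname{lc}$ passes to it (the other child affecting only strictly lower powers of $x$). Unrolling this gives
$$\operatorname{lc}(T_{full}) = \prod_{T\in\mathcal{T}} \ev(T),\qquad \mathcal{T} = \{\,\text{maximal }x\text{-free subtrees whose parent is a }\times\text{ node}\,\},$$
and $\operatorname{lc}(T_{full})$ is the leading $x$‑coefficient of the target, namely $L^n y^{2n}\prod_i p_i$. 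Since the $n$ blocks are separated in the leaf order by single $x$‑leaves, every $T \in \mathcal{T}$ has all of its leaves inside one block, so each $\ev(T)$ is an expression in $b'_i, p_i y, c_i$ for a single $i$.

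Next I would show exactly one block‑expression is contributed per block, and identify it. Grouping $\mathcal{T}$ by block writes $\operatorname{lc}(T_{full}) = \prod_i G_i$ with $G_i$ the product of the $\ev(T)$ for $T\in\mathcal{T}$ inside block $i$ (an empty product being $1$). As $\prod_i G_i = L^n y^{2n}\prod_i p_i$ is a monomial in $y$ and each $G_i$ is a nonzero element of $\mathbb{Z}[y]$, each $G_i = \mu_i y^{e_i}$ is a monomial with $\sum_i e_i = 2n$. Every one of $b'_i, p_i y, c_i$ is divisible by the prime $p_i$, and divisibility by $p_i$ survives $-$ and $\times$, so $p_i^{|\mathcal{T}_i|} \mid G_i$ where $\mathcal{T}_i$ is the block‑$i$ part of $\mathcal{T}$; comparing $p_i$‑adic valuations in $\prod_i G_i = L^n y^{2n}\prod_i p_i$ (note $p_i \nmid L = 2\prod_k a_k$ as $p_i > 3$ is coprime to every $a_k$) forces $v_{p_i}(G_i) \le 1$, hence $|\mathcal{T}_i| \le 1$. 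A short enumeration of the $\le 3$‑leaf expressions on a contiguous part of $(b'_i, p_i y, c_i)$ over $\{-,\times\}$ that happen to be monomials in $y$ leaves only $p_i y$, $c_i$, $b_i = b'_i - p_i y$, $b_i \pm c_i$ (from $(b'_i - p_i y) \mp c_i$ and $b'_i - (p_i y \mp c_i)$), $p_i y c_i$, and $b_i c_i$, up to sign; non‑monomials such as $b'_i$ itself or $(b'_i \times p_i y) - c_i$ are discarded. Thus each $G_i$ lies in $\{1\}$ together with this short list.

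To finish, the bound $v_{p_i}(G_i) \le 1$ already excludes $p_i y c_i$ and $b_i c_i$, whose coefficients have $v_{p_i} = 2$. For $k \ne i$, each surviving coefficient $\mu_k \in \{1, \pm p_k, \pm\!\prod_{l\ne k}a_l(a_k^2\pm 1)p_k, \pm 2\prod_{l\ne k}a_l p_k, \pm 2a_k^2\prod_{l\ne k}a_l p_k\}$ is coprime to $p_i$ by the coprimality of $p_i$ with $\prod_l a_l(a_l^2+1)(a_l^2-1)$ together with $p_i > 3$; so $v_{p_i}(G_k) = 0$ for every $k\ne i$, which in $\prod_i G_i = L^n y^{2n}\prod_i p_i$ forces $v_{p_i}(G_i) = 1$ and rules out $G_i = 1$. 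Among the remaining options $\{p_i y,\ c_i,\ b_i,\ b_i\pm c_i\}$ only $p_i y$ has $y$‑degree $1$; since $\sum_i e_i = 2n$ and every other option has $y$‑degree $2$, all $e_i = 2$, so $G_i \ne p_i y$. Hence $F_i := G_i \in \{c_i,\ b_i,\ b_i\pm c_i\}$ (up to an overall sign, which is immaterial to the partition read off from this lemma), as claimed.

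I expect Step 1 to be the only genuinely nontrivial part: pinning down precisely that the coefficient of the top power of $x$ factors as a product of block‑local evaluations, which requires threading Lemma~\ref{lem:leaves:plus-minus-times:1} through the whole tree and checking that cancellations among the sub‑leading‑in‑$x$ terms can never perturb the leading coefficient. With that decomposition in hand, the rest is a finite inspection of small expressions plus two one‑line valuation and degree counts.
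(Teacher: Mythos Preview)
Your proposal is correct and follows essentially the same route as the paper: factor the leading $x$-coefficient as a product of block-local expressions via Lemma~\ref{lem:leaves:plus-minus-times:1}, then restrict each factor using $p_i$-divisibility together with the requirement that it be a monomial in $y$. Your write-up is in fact more careful than the paper's terse proof, since you explicitly exclude the singleton $F_i = p_i y$ (via the $y$-degree count $\sum_i e_i = 2n$) and the empty contribution $G_i = 1$ (via the cross-block coprimality of the surviving $\mu_k$ with $p_i$), both of which the paper glosses over in its enumeration.
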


\begin{lemma}
\label{lem:leaves:minus-times:3}
For a solution $T_{full}$ to $I_A$ consider any subtree $T$ containing $x$. Its evaluation $\ev(T)$ is a monomial in $x$ and $y$.
\end{lemma}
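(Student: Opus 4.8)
The plan is to adapt the proof of Lemma~\ref{lem:leaves:plus-minus-times:3} almost verbatim, letting the degree-one leaves $p_i y$ take over the structural role that the (then degree-one) values $b_i,c_i$ played there; this is precisely what the leaves $p_i y$ and the encoding $b'_i=b_i+p_iy$ were introduced for. Concretely, I induct downward from the root of a solution $T_{full}$: the root evaluates to the monomial $t(x,y)=L^n y^{2n} x^{n-1}\prod_{i\in[n]}p_i$, and I show that whenever an $x$-containing subtree $T$ evaluates to a monomial, both of its children do. If both children contain an $x$ leaf then by Lemma~\ref{lem:leaves:plus-minus-times:1} the root of $T$ is $\times$, and a product is a monomial only if both factors are, so both children are monomials. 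Hence the one case to treat is when exactly one child of $T$ --- say $P$ --- carries all the $x$ leaves and the other child $Q$ is $x$-free; I must show the operator at $T$ is $\times$ (then $P,Q$ are monomials as their product is), and I do this by contradiction, assuming it is $-$.

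Under that assumption $Q$ occupies a contiguous $x$-free range of leaves, hence lies inside a single block $\{b'_j,p_jy,c_j\}$; by Lemma~\ref{lem:leaves:minus-times:bici}, which forbids $x$-free product subtrees (they carry an excess power of some $p_j$), neither $Q$ nor any $x$-free subtree that appears is rooted at $\times$, so each such subtree evaluates to a signed combination of the leaves of its block. Substituting $b'_j=b_j+p_jy$, that evaluation splits as a degree-$2$-in-$y$ part (a $\{0,\pm1\}$-combination of $b_j,c_j$, the coefficients lying in $\{0,\pm1\}$ because each leaf occurs once) plus a degree-$1$-in-$y$ part (a $\{0,\pm1\}$-combination of $p_\bullet y$). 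Peeling the chain of $-$ operators above the topmost $\times$-node $A$ of $T$ --- which, by Lemma~\ref{lem:leaves:minus-times:bici} and Lemma~\ref{lem:leaves:plus-minus-times:1}, must contain every $x$ leaf of $T$, so that $\ev(T)=\ev(A)\pm S$ with $S$ an $x$-free aggregate of such block-pieces --- gives the shape I want. The key is an asymmetry: since $A$ is rooted at $\times$ and no subtree of $T_{full}$ has a constant term (every leaf, and hence every evaluation, is a sum of monomials of total degree $\geq 1$), every term of $\ev(A)$ of degree $0$ in $x$ is a product of two $x$-free factors each of positive degree in $y$, hence has degree $\geq 2$ in $y$; in particular $\ev(A)$ has no term of degree $0$ in $x$ and degree $1$ in $y$. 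Therefore, as soon as the degree-$1$-in-$y$ part of $S$ is nonzero, $\ev(T)=\ev(A)\pm S$ carries a term of degree $0$ in $x$, forcing the monomial $\ev(T)$ to have degree $0$ in $x$ and contradicting $\deg_x\ev(T)\geq 1$ (Lemma~\ref{lem:leaves:plus-minus-times:1}). And if the degree-$1$-in-$y$ part of $S$ vanishes, Lemma~\ref{lem:leaves:plus-minus-times:d} (whose proof applies here after cancelling one more factor of $y$), together with distinctness of the primes $p_\bullet$, forces a degeneracy: each contributing block must supply both $b'_j$ and $p_jy$ in a single piece with cancelling signs, so that piece evaluates to $\pm b_j$ and $S=\sum_{j\in J}\pm b_j$ with $1\leq|J|\leq 2$.

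This \emph{residual} case is the step I expect to be the main obstacle; it has no analogue in the $\{+,-,\times\}$ proof, where $b_j,c_j$ had degree $1$ in $y$ and were thus out of reach of the degree-$\geq 2$-in-$y$ output of a product node. Here one must instead argue that the pure degree-$2$ error $\pm S=\pm\sum_{j\in J}\pm b_j$ cannot be absorbed into the degree-$0$-in-$x$, degree-$2$-in-$y$ part of $\ev(A)$: that part is, by the product structure above, a product of two $x$-free degree-$1$-in-$y$ factors, each a $\mathbb{Z}$-combination of $p_\bullet y$ with coefficients of bounded size, whereas $b_j=(\prod_{k\neq j}a_k)(a_j^2+1)\,p_jy^2$ has a coefficient that is large and carries the prime factors of the $a_k$'s, which the construction makes coprime to every $p_i$; comparing prime content and sizes (using that the $p_i$ are chosen polynomially small while $\prod_k a_k$ is exponentially large, the $a_i$ being distinct) shows $\ev(A)$ cannot meet this demand, so $\ev(T)$ again acquires a degree-$0$-in-$x$ term and the contradiction closes. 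The sub-case where $T$ has no $\times$ node at all is the degenerate instance of all this, with $A$ the single $x$ leaf and $\ev(T)=\pm x\pm S$ for $S\neq 0$ of positive $y$-degree, and is handled identically. Once Lemma~\ref{lem:leaves:minus-times:3} is in hand, all subtrees of a solution evaluate to monomials, so addition-by-subtraction is confined to the triples $b'_i,p_iy,c_i$ within one block, and Claim~\ref{appendix-leaves:MTclaim} follows exactly as in Section~\ref{appendix-leaves:plus-minus-times}.
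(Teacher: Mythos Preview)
Your induction-from-the-root skeleton matches the paper's, and your reduction to the case where $T$ is rooted at $-$ with one $x$-free child is correct. However, your handling of that critical case has genuine gaps.

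First, you misread Lemma~\ref{lem:leaves:minus-times:bici}: it does \emph{not} forbid $x$-free product subtrees outright. It constrains only the factor $F_i$ contributed to the leading $x$-coefficient; the paper's own proof of Property~2 explicitly allows the $x$-free product $b'_i\times p_iy$ when $F_i=c_i$. Consequently $S=\ev(T)-\ev(A)$ may carry degree-$3$-in-$y$ pieces $\pm b'_ip_iy$, and your decomposition of $S$ into ``degree-$2$ plus degree-$1$'' parts is incomplete. Second, even granting that decomposition, your residual case is mischaracterized: when the degree-$1$ part of $S$ vanishes, a block can still contribute $\pm c_j$ alone (with neither $b'_j$ nor $p_jy$ appearing in $S$), so $S$ need not be a sum of $\pm b_j$. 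Third, your size/prime-content argument rests on the false claim that the degree-$0$-in-$x$ part of $\ev(A)$ is a product of two \emph{degree-$1$-in-$y$} factors built from $p_\bullet y$; those factors may have $y$-degree $2$ or $3$ and can themselves carry the large $b_j$, $c_j$ coefficients, so comparing sizes proves nothing.

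The paper sidesteps all of this with a pure degree count. It establishes (Property~1) that $\ev(T)-\ev(A)$ has $y$-degree between $2$ and $3$, and then observes that the degree-$0$-in-$x$ part of $\ev(A)$, when both children $A_L,A_R$ of $A$ contribute to it, has $y$-degree at least $2+2=4$ (by an analogue of Property~1 applied to each child). Since $3<4$, these parts cannot cancel, so $\ev(T)$ retains a term of degree $0$ in $x$ and fails to be a monomial. The remaining cases ($\ev(A)$ already a monomial, or only one of $A_L,A_R$ contributing a degree-$0$-in-$x$ part) are immediate. No coefficient-size or prime-content comparison is needed; the degree asymmetry $3$ versus $4$ is the whole argument.
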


To complete the proof of this claim, we prove the two modified lemmas.

\begin{proof}[Proof of Lemma~\ref{lem:leaves:minus-times:bici}]
As in the proof of Lemma~\ref{lem:leaves:plus-minus-times:bici}, the coefficient of the term with highest degree in $x$ in the evaluation of $T_{full}$ must be a product of functions of $b'_i$, $p_i y$, and $c_i$. Since there is exactly one factor of each $p_i$ in the target, multiplication cannot be used in the construction of $F_i(b'_i, p_i y, c_i)$. Thus using 0, 1, or 2 subtraction operations, we may form the set $\{b'_i, c_i, b_i, p_i y - c_i, b_i\pm c_i\}$. Since the target is a monomial in $x$ and $y$, then each $F_i(b'_i, p_i y, c_i)$ must also be a monomial in $x$ and $y$. Thus $F_i(b'_i, p_i y, c_i) \in \{c_i, b_i, b_i\pm c_i\}$.
\end{proof}

\begin{proof}[Proof of Lemma~\ref{lem:leaves:minus-times:3}]
As in the proof of Lemma~\ref{lem:leaves:plus-minus-times:3}, we show by an induction from the root that for any solution $T_{full}$,  the evaluation of any subtree $T$ containing $x$ is a monomial in $x$ and $y$.

As before, our target fulfills this condition at the root of the tree, and if both the left subtree $L$ and right subtree $R$  contain at least one leaf with value $x$, then the evaluation of both subtrees are monomials in $x$ and $y$. To complete the proof of this lemma we show that the evaluation of $T$ to be a monomial in $x$ and $y$, then its left or right subtree containing $x$ must also be a monomial. If $T$ is rooted at a $\times$ operator and it evaluates to a monomial then the evaluations of both $L$ and $R$ must be monomials.

Unlike before, Lemma~\ref{lem:leaves:minus-times:bici} is weaker than its analogue Lemma~\ref{lem:leaves:plus-minus-times:bici}, so there is a more complex analysis of the case where $T$ is rooted at a $-$ operator, one of $L$ or $R$ contain at least one leaf with value $x$, and the other does not. 

Assume for the sake of contradiction there is some subtree of $T_{full}$ such this subtree $T$ is rooted at a $-$ operator, evaluates to a monomial in $x$ and $y$, and it has one child without leaves of value $x$ and another which evaluates to a monomial in $x$ and $y$. 

Let $A$ be the subtree rooted at the highest $\times$ operator that is an ancestor to some $x$ leaf in $T$, if there is such an operator.  If there is no such operator, then by Lemma~\ref{lem:leaves:plus-minus-times:1}, it $A$ contains only one $x$ leaf. Let $A$ be the leaf $x$. If there is such an operator, then by Lemma~\ref{lem:leaves:plus-minus-times:1}  $A$ contains all $x$ leaves in $T$. There are at most 6 leaves in $T$ not in $A$, the 3 to the left of the leftmost $x$ in $A$ which we assign index $i$, and the 3 to the right of the rightmost $x$ in $A$ which we assign index $j$. There is at least one leaf in $T$ not in $A$, as $T$ is rooted at a $-$ operator and $A$ lies within it and is rooted at a $\times$ operator.

We use the following properties:

\textbf{Property 1. } $\ev(T) - \ev(A)$ has degree 0 in $x$ and degree 2 or 3 in $y$.

By Lemma~\ref{lem:leaves:plus-minus-times:1}, there are no $x$ leaves in $T$ not in $A$, so $\ev(T) - \ev(A)$ must have degree 0 in $x$.  By Lemma~\ref{lem:leaves:minus-times:bici}, at least one of $b'_i$ and $c_i$ must be in $F_i(b'_i, p_i y, c_i)$, so the degree in $y$ of $\ev(T) - \ev(A)$ is at most 3, which occurs when it is the product of $p_i y$ and $b'_i$ or $c_i$. Since there is at least one leaf in $T$ not in $A$, then $\ev(T) - \ev(A)$ has degree at least 1 in $y$. Further, if there is just a single leaf it cannot be $p_i y$, as that would require multiplying both $b'_i$ and $c_i$ into subtrees containing containing $x$, which would put too many $p_i$ terms in the target. Thus $\ev(T) - \ev(A)$ has degree at least 2 in $y$.

\textbf{Property 2. } $\ev(T) - \ev(A)$ is nonzero.

If there are no $\times$ operators in $T$ not in $A$, then the evaluation of $T$, $\ev(T)$, has form $d_1 b'_i + d_5 p_i y d_2 c_i \pm \ev(A) + d_3 b'_{j} + d_6 p_{j}y + d_4 c_{j}$ for $d_i \in \{0, \pm 1\}$ and not all $d_i$ zero. Note that $\ev(T) - \ev(A)$ has degree 0 in $x$ and degree 1 or 2 in $y$. Observing the part with degree 0 in $x$ and degree 2 in $y$, they are a sum with form as in Equation~\ref{eqn:d-eqn}, which by Lemma~\ref{lem:leaves:plus-minus-times:d} is nonzero. 

If there is some $\times$ operator in $T$ not in $A$, then by Lemma~\ref{lem:leaves:minus-times:bici} the only $F_i(b'_i, p_i y, c_i)$ using just less than two leaves is $c_i$, so the evaluation at the $\times$ operator must be $b'_i p_i y$ (or $b'_j p_j y$). The evaluation of these subtree that is degree 3 in $y$ is $b_i p_i y$ and $b_j p_j y$. Both these values are nonzero, as are their sum and difference. Thus $\ev(T) - \ev(A)$ is again nonzero.

Now, we can finally prove that $\ev(T)$ is not a monomial under these conditions, a contradiction.

If $\ev(A)$ is a monomial in $x$ and $y$, then it cannot have any part with degree 0 in $x$. By the two properties,  $\ev(T) - \ev(A)$ is nonzero and has degree 0 in $x$, so their sum $\ev(T) = (\ev(T) - \ev(A)) + \ev(A)$ is not a monomial.

If $\ev(A)$ is not a monomial, then we let $A_L$ and $A_R$ be the left and right subtrees of $A$. If at least one of $\ev(A_L)$ and $\ev(A_R)$ has no part with degree 0 in $x$, then $\ev(A)$ has no terms with degree 0 in $x$. As before since $\ev(T) - \ev(A)$ is nonzero and has degree 0 in $x$, then $\ev(T)$ is not a monomial. If both $\ev(A_L)$ and $\ev(A_R)$ have some part with degree 0 in $x$, then these parts must also have degree at least 2 in $y$ (by an analogue of the argument in the first property). Thus the degree in $y$ of the part of $\ev(A)$ with degree 0 in $x$ is at least 4. By the first property  $\ev(T) - \ev(A)$ has degree at most 3 in $y$, so the parts with with degree 0 in $x$ in $\ev(A)$ and $\ev(T) - \ev(A)$ cannot cancel, implying $\ev(T)$ is not a monomial.
\end{proof}

\begin{proof}[Proof of Claim~\ref{appendix-leaves:MTclaim}]

The idea of this proof is the same as the proof from Section~\ref{appendix-leaves:plus-minus-times}.  However, this time, we will construct either $((b'_i-p_i y) -c_i) = b_i-c_i$ or $(b'_i-(p_i y -c_i)) = b_i+c_i$, making the placement of parentheses indicate the partition membership instead of the sign.

Like before, Lemma~\ref{lem:leaves:minus-times:3} implies that subtraction can only occur between terms with degree 0 in $x$. Further, because our target has only one factor of each $p_i$, subtraction \emph{must} occur between terms with degree 0 in $x$, else by Lemma~\ref{lem:leaves:plus-minus-times:3} both would need to be multiplied in, resulting in an evaluation at the root with too many factors of $p_i$. Thus any tree evaluating to the target must be the product of $x$ leaves and one of the two subtraction structures $((b'_i-p_i y) -c_i) = b_i-c_i$ or $(b'_i-(p_i y -c_i)) = b_i+c_i$, on the terms of degree 0 in $x$ as we desire.

This shows that any solution to the instance can be read as a solution to the underlying \SetProductPartitionBound{3} instance, completing the proof.
\end{proof}



\subsection{Weak NP-hardness of Enforced Leaves \texorpdfstring{$\{+, \times, \div\}$}{\{+, ×, ÷\}}}
\label{leaves:plus-times-div}

\begin{claim}
\label{appendix-leaves:PTD}
$\AECEL{\nats[x]}{\{+,\times,\div\}}$ is weakly NP-hard.
\end{claim}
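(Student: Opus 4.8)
The plan is to follow the template used for the other Enforced Leaves variants, and in particular the $\{+,\times\}$ reduction of Section~\ref{appendix-leaves:plus-times}: reduce from \PartitionEqual{} by building an $\AECEL{\nats[x]}{\{+,\times,\div\}}$ instance $I_A$ over a single free variable $x$, and then collapse $x$ with the rational framework. Given a \PartitionEqual{} instance $A=\{a_1,\dots,a_n\}$, I would interleave value-carrying monomials $a_i x^{d}$ (for a fixed $d$ chosen large enough to control degrees, possibly with an extra separator or two next to each $a_i$) with separator leaves that are small powers of $x$, and pick a polynomial target $t(x)$ whose coefficients force the ``division-free'' solution: for each $a_i$ one either adds or multiplies its monomial with the adjacent separator block, and all the resulting blocks are summed, the side on which $a_i$ lands recording its partition class. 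The forward direction is then immediate (a YES instance of \PartitionEqual{} yields a solution that never uses $\div$), so the entire content is the converse.

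For the converse I would compute in the paired model of Section~\ref{sec:rational}, tracking each subtree's value as a pair $(f,g)$ of integer-coefficient polynomials in $x$. The heart of the argument is a chain of structural lemmas, proved by induction on subtree height, bounding the range of polynomials that may appear as a numerator $f$ or a denominator $g$ in any subtree evaluation of a solution: every denominator $g$ must be, up to a nonzero constant, a monomial $x^{e}$ with $e$ at most a fixed small constant, and the coefficients of $f$ and $g$ must stay within a controlled multiplicative range of $\prod_i a_i$. These bounds combine subadditivity of $\deg_x$ (Lemma~\ref{lemma:degree-props}) with coefficient-size tracking in the spirit of Lemma~\ref{lem:poly-product-bound}. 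Once denominators are pinned to monomials, the monotone-degree reasoning of Lemmas~\ref{lem:leaves:plus-times1}--\ref{lem:leaves:plus-times3} can be recovered with small modifications --- no $\times$ or $\div$ node can have a $\times$ or $\div$ descendant, value leaves are combined only in $O(1)$-sized blocks with their separators --- and matching the coefficients of $t(x)$, by the same counting argument as in Section~\ref{appendix-leaves:plus-times}, forces a valid partition. Finally, Theorem~\ref{theorem:full-ratl} transfers weak NP-hardness down to $\AECEL{\nats}{\{+,\times,\div\}}$ (and a pseudopolynomial dynamic program in the style of Claim~\ref{appendix-leaves:pseudopoly-PT}, now also ranging over the bounded-denominator rationals that lie below the target, would upgrade this to weak NP-completeness, matching the table).

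I expect the division steps to be the main obstacle. Unlike the monotone $\{+,\times\}$ setting, where degrees and coefficients only grow, a single $\div$ can lower the $x$-degree and, more dangerously, create a nontrivial denominator that must later be cancelled. The real design work is to choose $d$, the separator powers, and $t(x)$ so that (i) the only denominators a solution can ever produce are monomials in $x$, and (ii) any genuine use of division is fatal --- for instance, by giving $t(x)$ a deliberate gap in its low-degree terms that a division such as $a_i x^{d}\div x$ would necessarily fill and that no later $+$, $\times$, or $\div$ on the remaining (fixed-order) leaves can remove. Getting the paired-model invariant exactly right, so that the controlled numerator/denominator range is simultaneously preserved by all of $+$, $\times$, and $\div$ on precisely the restricted class of inputs a partial solution can present, is where the effort concentrates; the rest is a routine adaptation of Section~\ref{appendix-leaves:plus-times}.
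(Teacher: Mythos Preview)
Your proposal takes a very different route from the paper, and the route you have chosen is the harder one. The paper does not reduce from \PartitionEqual{} or adapt the $\{+,\times\}$ template; instead it reduces from \ProductPartition{}, uses the simple leaf order $a_1\ x\ a_2\ x\ \cdots\ x\ a_n$, and sets the target to the \emph{monomial} $t(x)=x^{n-1}$. A yes-instance is witnessed by multiplying all the $x$'s and then multiplying or dividing by each $a_i$ according to its partition class. The converse, carried out in the paired model, shows that in any solution every subtree $(f_T,g_T)$ has one of $f_T,g_T$ of degree exactly $k_T$ (the number of $x$ leaves in $T$) and the other a constant, and finally that both are monomials. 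This forces every internal node to be $\times$ or $\div$ --- the argument forbids $+$, not $\div$ --- and the resulting $\times/\div$ expression reads off a product partition directly.

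The asymmetry matters. The operations $\times$ and $\div$ preserve monomials, so a monomial target combined with a degree-counting invariant naturally excludes $+$. Your plan instead tries to exclude $\div$ while keeping a multi-term polynomial target, and this is exactly where the $\{+,\times\}$ machinery breaks: Lemmas~\ref{lem:leaves:plus-times1}--\ref{lem:leaves:plus-times3} rest on monotonicity of degree and of positive coefficients, and a single $\div$ by a separator $x$ defeats both. Your sketch acknowledges this obstacle but gives no concrete choice of $d$, separators, and $t(x)$ that provably rules out every division; the ``deliberate gap in low-degree terms'' idea is not obviously robust, since $(a_i x^{d})\div x$ followed by further $+$ or $\times$ with the remaining leaves can deposit contributions at many degrees, and your proposed invariant (``denominators are monomials $x^e$ with $e$ bounded by a small constant'') is neither stated precisely nor shown to be closed under all three operations. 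As written the proposal is a plan whose hardest step is left open, whereas the paper's monomial-target/forbid-$+$ approach closes the argument cleanly in four short steps.
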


\begin{proof}
We will reduce from \ProductPartition to $\AECEL{\nats[x]}{\{+,\times,\div\}}$. For an instance of \ProductPartition with elements $\{a_1, \ldots, a_n\}$, we construct an instance of $\AECEL{\nats[x]}{\{+, \times, \div\}}$ with leaf order 
$$ a_1\ \ x\ \ a_2\ \ x\ \ \cdots\ \ x\ \ a_{n-1}\ \ x\ \ a_n $$  and target $t(x) = x^{n-1}$.

If the \ProductPartition instance is solvable then this $\AECEL{\nats[x]}{\{+,\times,\div\}}$ is solvable with operations $\times$ and $\div$ by multiplying the $x$s and the $a_i$s in the same partition as $a_1$ and dividing the other $a_i$s.

Furthermore, if an AEC-EL solution \emph{only} uses the $\times$ and $\div$ operators, observe that it yields a valid solution to $\ProductPartition$, since the numerator and denominator will provide two sets of $a_i$ with equal product.

We proceed to show that only $\times$ and $\div$ can be used in our constructed AEC instance.  This proof proceeds in four steps, which ultimately show that any evaluation subtree containing $k$ $x$ leaves can only be an $x^k$ term or an $x^{-k}$ term.
This implies that addition can never occur, as the alternating $x$ and $a_i$ pattern we constructed for leaf order would under addition create polynomials of more than one term, a contradiction.

Recall from the paired model of rational function computation in  Section \ref{sec:rational} that we may represent the evaluation of a subtree $T$ as a pair of integer polynomials $(f_T,g_T)$. Our proof proceeds in four steps:
\begin{enumerate}
    \item First we use an inductive argument to bound the degree of $f_T$ and $g_T$ at $k$, the number of $x$ leaves within $T$.
    \item Then, we note that the equality condition of the induction, the target is achieved only when one of $f_T$ or $g_T$ have degree exactly $k$. 
    \item We then show that the other integer polynomial ($g_T$ or $f_T$ respectively) is a constant.
    \item Finally, we show by induction from the root that both integer polynomials are monomials as desired.
\end{enumerate}
Note the first two steps are together the rational equivalent of Lemma~\ref{lem:leaves:plus-minus-times:1}.\newline

\textbf{Step 1.} \emph{For any subtree $T$ represented as $(f_T, g_T)$, the evaluation of $T$ has degree at most the number of $x$ leaves within $T$.}

We prove the first step by strong induction on subtrees of increasing height.  

All subtrees of height 0 (leaves), can be represented by the pair $(a_i,1)$
or $(x,1)$.

Assume that the evaluations of all subtrees of height at most $h$ are integer polynomials with degree at most the number of leaves of value $x$. We will show this is also true for all subtrees of height $h+1$.

Let $T$ be a subtree of height $h+1$ with left subtree $L$ with $k_L$ leaves of value $x$ and right subtree $R$ with $k_R$ leaves of value $x$. We will show that the evaluation of $T$ has integer polynomials with degree at most $k_T = k_L + k_R$, the number of leaves of value $x$.

Because $L$ and $R$ have height at most $h$, by the inductive hypothesis the degrees of $f_L$ and $g_L$ are at most $k_L$; similarly the degrees of $f_R$ and $g_R$ are at most $k_R$. 
Recall the definitions of the following operations:
\begin{enumerate}
    \item $(f_T,g_T) = (f_L,g_L) + (f_R,g_R) = (f_L g_R + g_L f_R,\ g_L g_R)$
    \item $(f_T,g_T) = (f_L,g_L) \times (f_R,g_R) = (f_L f_R,\ g_L g_R)$
    \item $(f_T,g_T) = (f_L,g_L) \div (f_R,g_R) = (f_L g_R,\ g_L f_R)$
\end{enumerate}
Thus,  the degrees of both $f_T$ and $g_T$ are at most the max of the degrees of $f_L f_R, f_L g_R, g_L f_R$, or $g_L g_R$.  Each of these terms is the product of two polynomials with degrees $k_L$ and $k_R$.  Thus, the degree of each of these is at most $k_L + k_R = k_T$.  This completes step 1.\newline

\textbf{Step 2.} \emph{Consider an AEC-EL solution tree $T_{full}$ that meets the target.  For a subtree $T$ of $T_{full}$, represented by $(f_T, g_T)$, one of $f_T$ or $g_T$ has degree $k_T$.}


Let $T$ be a subtree with child subtrees $L$ and $R$, and define $f_T, g_T, f_L, g_L, and f_R, g_R$ appropriately.
Let $h_T$ be either $f_T$ or $g_T$, whichever has larger degree. Define $h_L$ and $h_R$  in the same manner.
In this step, our goal is to show that $\deg(h_T) = k_T$ for any subtree of a valid AEC-EL solution.

Recall from Step 1 that the maximum of the degrees of $f_L f_R, f_L g_R, g_L f_R, g_L g_R$ is at most $k_T = k_L + k_R$.
Observe that equality holds only if $\deg(h_L) = k_L$ and $\deg(h_R) = k_R$; else the degree of the product is less than $k_T$.

Let $T_{full}$ be the full tree. Observe that if $T_{full}$ evaluates to the target polynomial, then $\deg(h_{T_{full}}) = k_{T_{full}} = n-1$.
Since this equality holds only if the equality condition also holds for its left and right subtree, induction from the root shows that one of $\deg(f_T) = k_T$ or $\deg(g_T) = k_T$ for all subtrees.  This proves the second step.\newline


\textbf{Step 3.} \emph{Consider an AEC-EL solution tree $T_{full}$ that meets the target.  For a subtree $T$ of $T_{full}$, one of $f_T$ or $g_T$ is constant.}

For all subtrees $T$, let $h_T$ be defined as in the previous step, and recall that it has degree $k_T$.  Let $c_T$ be the other polynomial and let its degree be $d_T$. 

Table \ref{tab:fg1} shows casework for determining the degrees of $f_T$ and $g_T$ given the options for $h_L$ and $h_R$, and the operation.
Observe that when combining $L$ and $R$ into $T$, for these three operations 
it always holds that 
$\deg(c_L) + \deg(c_R) \le \deg(c_T)$.


The evaluation of $T_{full}$ is some constant multiple of $(x^{n-1},1)$ so $\deg(c_{T_{full}}) = 0$. Since degrees are non-negative, the degrees of the $c$ polynomials are non-decreasing, and $\deg(c_{T_{full}}) = 0$, we find that $\deg(c_{T}) = 0$ for all subtrees $T$.


\begin{table}[h] 
\centering
\begin{tabular}{c|ccc}
              & \multicolumn{3}{c}{\textbf{Degrees of $f_T; g_T$ under operation}} \\
\textbf{$h_L, h_R$}
& $+$                                           & $\times$                         & $\div$                           \\ \hline
$f_L,f_R$          & $\max(k_L + d_R,d_L+k_R) ;\ d_L + d_R$ & \cellcolor[HTML]{FFEFD5}$k_L + k_R;\ d_L + d_R$   & $k_L + d_R;\ d_L+ k_R$     \\
$f_L,g_R$          & $k_L+k_R;\ d_L+k_R$            & $k_L + d_R;\ d_L+ k_R$                       & \cellcolor[HTML]{FFEFD5}$k_L + k_R;\ d_L + d_R$ \\
$g_L,f_R$          & $k_L+k_R;\ k_L+d_R$            & $d_L+k_R;\ k_L+d_R$                       & \cellcolor[HTML]{FFEFD5}$d_L + d_R;\ k_L + k_R$ \\
$g_L,g_R$          & $\max(k_L + d_R,d_L+k_R) ;\ k_L+k_R$ &\cellcolor[HTML]{FFEFD5}$d_L + d_R;\ k_L + k_R$   & $d_L+k_R;\ k_L+d_R$    
\end{tabular}
\caption{The degrees of $f_T; g_T$ under the three operations $+$, $\times$, and $\div$ in the cases $i,j$ for $i,j \in \{f,g\}$ where $\deg(i_L) = k_L$,  $\deg(j_R) = k_R$, and the other two have degrees $d_L$ and $d_R$. Cells where $\deg(h_T) = k_T$ and $\deg(c_T) = 0$ when $k_L + k_R = k_T$ and $d_L + d_R = 0$ are highlighted. }\label{tab:fg1}
\end{table}

\textbf{Step 4.} \emph{Let $T_{full}$ be an AEC-EL solution tree that meets the target.  For a subtree $T$ of $T_{full}$ represented by $(f_T,g_T)$, both $f_T$ and $g_T$ are monomials.}

We show this by induction from the root. 
At the root of the tree, we have 
$(f_{T_{full}}, g_{T_{full}}) = (x^{n-1}, 1)$, 
fulfilling our desired condition.

Now we will show that if some subtree $T$ fulfills $(f_T, g_T) = (b_n x^{k_T}, b_d)$ or $(f_T, g_T) = (b_n, b_d x^{k_T})$ for constant $b_n, b_d$, then its left subtree $L$ and right subtree $R$ 
also fulfill one of these two conditions.

Case 1: $k_L, k_R > 0$.  ($L$ and $R$ both contain at least one $x$ leaf.)

When $k_L, k_R > 0$, then $k_L, k_R < k_L + k_R = k_T$.
We know by the previous steps that one of $f_T$ and $g_T$ must have degree $k_T$ (call this $h_T$), and the other must have degree 0 (call this $c_T$).  Table \ref{tab:fg1} highlights the four situations where $\deg(h_T) = k_T$ and $c_T$ constant.  In particular, these situations only occur under the $\times$ and $\div$ operations.  The resulting expression is of the form: $(h_L h_R, c_L c_R)$ or $(c_L c_R, h_L h_R)$. 

Since $h_T$ and $c_T$ are monomials, then this implies $h_L h_R$ and  $c_L c_R$ are monomials. The latter is certainly true when $c_L$ and $c_R$ are constant, and the former is only true when both $h_L$ and $h_R$ are monomials, because if either has more than one term, then their product must have more than one term.
With the conditions from the previous steps, this means in particular that $(f_T, g_T) = (b_n x^{k_T}, b_d)$ or $(f_T, g_T) = (b_n, b_d x^{k_T})$ where $b_n$ and $b_d$ are constant.

Case 2: $k_R > 0$, $k_L = 0$.  ($R$ contains an $x$ leaf but $L$ does not.)

In this case, $L$ is an $a_i$ leaf, and both its integer polynomials are constant. We know $h_T, c_T, h_L, c_L, c_R$ are monomials, and we would like to show that $h_R$ is also a monomial.

Note that the operation at $T$ cannot be $+$ because $f_T = f_L g_R + g_L f_R$ is the sum of a degree $k_R$ polynomial and a constant, and thus cannot be a monomial.
If the operation at $T$ is $\times$ or $\div$, then $h_R$ is a monomial because if it were not, then $h_T$ could not be a monomial of degree $k_T$.

Case 3: $k_L > 0$, $k_R = 0$.  ($L$ contains an $x$ leaf but $R$ does not.)
This holds for the same reason as case 2.

Case 4: $k_L = k_R = 0$ never happens due to the ordering of our leaves.

This completes our proof.
\end{proof}

\section{Additional Results from Other Reductions}
\label{basic}
\label{standard:plus}
\label{standard:times}
\label{ops:plus}
\label{ops:times}
\label{leaves:plus}
\label{leaves:times}


In this section, we present some basic results in Arithmetic Expression Construction that do not use the rational framework: a classification of the hardness of $\{+\}, \{\times\},\{-\}, \{\div\}, \{+,-\},$ and $\{\times,\div\}$ Arithmetic Expression Construction for all variants, and of $\{-, \times\}$ and $\{+, -, \times\}$ Arithmetic Expression Construction for the Enforced Operations variant.

\SG{\mathbb{N}}{\{+\}}{\variant} and \SG{\mathbb{N}}{\{\times\}}{\variant} are trivial for all variants as any expression involving just an addition (multiplication) reduces to a sum (product) of the terms. Checking the solvability reduces to testing that $\sum_i a_i = t$ or $\prod_i a_i = t$ and is in P. 

For \SG{\mathbb{N}}{\{+, -\}}{\variant} and \SG{\mathbb{N}}{\{-\}}{\variant} variants we prove weak NP-hardness by reductions from $\textsc{Partition}$ or $\textsc{Partition-}n/2$. We then give pseudopolynomial algorithms for all variants of \SG{\mathbb{N}}{\{+, -\}}{\variant} and \SG{\mathbb{N}}{\{-\}}{\variant}.

\subsection{\texorpdfstring{$\{+, - \}$}{\{+, −\}}, \texorpdfstring{$\{ - \}$}{\{−\}} is Weakly NP-hard for All Variants}

\subsubsection{\texorpdfstring{$\{+, -\}$}{\{+, −\}} is Weakly NP-hard for All Variants}
\label{standard:plus-minus}
\label{ops:plus-minus}
\label{leaves:plus-minus}


To prove hardness of the Standard and Enforced Leaves variants, we reduce from \Partition.  Any expression $E$ using $\{+,-\}$ over the set $A$ to be partitioned can be written in the form 
$$\sum_{i \in \vert A \vert} c_ia_i = \sum_{c_i = 1}a_i - \sum_{c_i = -1} a_i$$
where $c_i\in \{\pm 1\}$ by distributing the sums and differences of $E$. 
We construct an instance of \SG{\mathbb{N}}{\{+, -\}}{\variant} given by $A$ with target $t = 0$. If there is a partition $(A_1, A_2)$ of $A$ with $\sum A_1 = \sum A_2$, then we can form the expression: 
$$a_{i_1} + a_{i_2} + \cdots + a_{i_k} - a_{j_1} - a_{j_2} -\cdots -a_{j_{k'}} = 0, a_{i} \in A_1, a_{j}\in A_2$$ 
and if there is an AEC solution we can recover a partition from the coefficients $c_i$.

For Enforced Leaves, we choose an arbitrary ordering of the partition set $A$ and produce a \SGEL{\mathbb{N}}{\{+, -\}} instance:
$$a_1\enspace a_2\enspace \cdots\enspace a_n, t = 0$$
The expressions formable with this ordering are:
$$a_1\pm a_2\pm a_3\pm \cdots\pm a_n$$
These are the same expressions in the Standard case except that $a_1$ must be positive.  In the context of our reduction from $\textsc{Partition}$ we can have $a_1\in A_1$ without loss of generality, so this does not affect the reduction. 

For Enforced Operations, we reduce from $\textsc{Partition-}n/2$.  We create an \SGEO{\mathbb{N}}{\{+, -\}} instance using the same set of integers $A$ and an expression which equates the difference of two sets of $n/2$ integers with target $t=0$: $$(\Box + \Box + \Box + \cdots) - (\Box + \Box + \Box + \cdots)$$
It is clear this expression is $0$ exactly when the subtracted terms are a partition of $A$.


\subsubsection{Standard \texorpdfstring{$\{-\}$}{\{−\}} is Weakly NP-hard}
\label{standard:minus}
Given an instance of $\textsc{Partition}$, we produce an instance of \SGSTD{\mathbb{N}}{\{-\}} with the same set of positive integers $A$ and target $t = 0$.
If the \Partition instance has a solution, we can construct an expression of the form $$(p_1 - n_2 - \cdots - n_{|A_2|}) - (n_1 - p_2 - \cdots - p_{|A_1|}) = \sum A_1 - \sum A_2 = 0$$ with $p_i$ in $A_1$ and $n_i$ in $A_2$. Conversely, any solution to the produced \SGSTD{\mathbb{N}}{\{-\}} instance can be factored into this form. Since $\sum A_1 = \sum A_2$ and $a_i > 0$ for all $a_i \in A_1 + A_2$, both $A_1$ and $A_2$ have nonzero sum and are thus nonempty.


\subsubsection{Enforced Operations \texorpdfstring{$\{-\}$}{\{−\}} is Weakly NP-hard}
\label{ops:minus}
 Given an instance of \PartitionEqual, we construct an instance of \SGEO{\mathbb{N}}{\{-\}} with the same set of positive integers $A$, target $t = 0$, and operation tree formula 
$$(\Box - \Box - \Box - \dots) - (\Box - \Box - \Box - \dots)$$
If this instance of $\{-\}$-\textsc{aec-eo} has a solution, we can divide $A$ into two complementary subsets $A_1, A_2$ as above in Section~\ref{standard:minus} such that $|A_1|=|A_2|=n/2$ and $\sum A_1 = \sum A_2$.

\subsubsection{Enforced Leaves \texorpdfstring{$\{-\}$}{\{−\}} is Weakly NP-hard} 
\label{leaves:minus}

Given an instance of $\textsc{Partition}$, 
we construct an \SGEL{\mathbb{N}}{\{-\}} instance with the same set of integers $a_i \in A$ plus one $1$, target $t = 1$, and leaf order
$$1\ \ a_1 \ \ a_2 \ \ a_3 \ \ \ldots \ \ a_n.$$

If our instance of $\textsc{Partition}$ has a solution such that $\sum A_1 = \sum A_2$ where $A_1, A_2$ are non-empty complementary subsets of $A$, there is a solution to the corresponding \SGEL{\mathbb{N}}{\{-\}} instance. Without loss of generality, let $A_1$ be the subset that contains $a_1$. We want our expression to simplify to $$1 - \sum A_1 + \sum A_2,$$
or equivalently
$$1 + \sum c_i a_i$$
where $c_i = -1$ for $a_i \in A_1$   and $c_i = 1$  for  $a_i \in A_2$.

In all constructions, $c_1 = -1$, but for any choice of $c_i$ where $i > 1$, we can construct a corresponding expression for our instance of \SGEL{\mathbb{N}}{\{-\}}. For a given term $a_i$, if we wish for $c_{i}$ to have the opposite sign of $c_{i-1}$, we insert a parentheses before $a_{i-1}$. We close the parenthetical expression at the end of the expression:
$$\dots\ -\ (a_{i-1}\ -\  a_{i}\ -\ a_{i+1} \dots\ a_{n})$$
The sign of $c_{i-1}$ is unchanged from before, as 
$\dots\ -\ a_{i-1}\ -\ \dots$ is equivalent to $\dots\ -\ (a_{i-1}\ -\ \dots)$ with respect to the sign of $c_{i-1}$. Thus we can construct an expression with the given leaf ordering such that $c_i=-1$ if $a_i \in A_1$ and $c_i=1$ if $a_i \in A_2$. This evaluates to $1$, our target.

Conversely if there is a solution for our instance of \SGEL{\mathbb{N}}{\{-\}},
$$1 + \sum c_i a_i = 1$$

Define $A_1 = \{a_i \mid a_i \in A, c_i = -1\} $ and $A_2 = \{a_i \mid a_i \in A, c_i = 1\} $.
$$\sum A_2 - \sum A_1 = \sum c_i a_i = 0$$
$$ \sum A_1 = \sum A_2$$
The two complementary subsets $A_1, A_2$ of $A$ have nonzero sum and are thus nonempty.



\subsection{\texorpdfstring{$\{\times, \div \}, \{\div \}$}{\{×, ÷\}, \{÷\}} is Strongly NP-hard for All Variants}

The reductions for \SG{\mathbb{N}}{\{\times, \div\}}{\variant} closely follow our reductions for \SG{\mathbb{N}}{\{+, -\}}{\variant}. We reduce from \ProductPartition or \ProductPartitionEqual, which are both NP-hard. 

\subsubsection{\texorpdfstring{$\{\times, \div \}$}{\{×, ÷\}} is Strongly NP-hard for All Variants}
\label{standard:times-div}
\label{leaves:times-div}
\label{ops:times-div}

To prove hardness of the Standard and Enforced Leaves variants, we reduce from \ProductPartition.  Given a \ProductPartition instance $A$, we will produce an instance of \SG{\mathbb{N}}{\{\times, \div\}}{\variant} with the same set of positive integers $A$ and target $t = 1$.

If this instance of \SG{\mathbb{N}}{\{\times, \div\}}{\variant} has a solution, we can divide $A$ into two complementary subsets $A_1$ and $A_2$ such that $\prod A_1 / \prod A_2 = t$. As $t=1$, $\Pi A_1 = \Pi A_2$. This means that $A_1$ and $A_2$ are a valid product-partitioning.
If the instance of \ProductPartition has a solution, we know two complementary subsets $A_1, A_2$ of $A$ exist such that $\Pi A_1 = \Pi A_2$. Our instance of \SG{\mathbb{N}}{\{\times, \div\}}{\variant} allows us assign nodes to produce the expression $\Pi A_1 / \Pi A_2 = 1$. 

We can construct a similar reduction for Enforced Leaves by choosing an arbitrary ordering of the elements in $A$.

For Enforced Operations, we reduce from \ProductPartitionEqual. We construct an instance of  \SGEO{\mathbb{N}}{\{\times, \div\}} using the same set of positive integers $A$, target $t=1$, and an expression tree of the form
$$(n_1 \times n_2 \times \cdots \times n_{n/2}) \div (d_1 \times d_2 \times \cdots \times d_{n/2})$$
where there are $n/2$ positive integers between each pair of parenthesis.

\subsubsection{Standard \texorpdfstring{$\{\div \}$}{\{÷\}} is Strongly NP-hard}
\label{standard:div}
Given an instance of \ProductPartition, we will produce an instance of \SGSTD{\mathbb{N}}{\{\div\}} with the same set of positive integers $A$ and target $t = 1$.


Our instance of \SGSTD{\mathbb{N}}{\{\div\}} allows us assign nodes to produce an expression equivalent to $\Pi A_1 \div \Pi A_2 = 1$:
$$ (n_1 \div d_2 \div \cdots \div d_{|A_2|}) \div (d_1 \div n_2 \div \cdots \div n_{|A_1|}) = \Pi A_1 \div \Pi A_2$$
where $A_1 + A_2 = A$;  $n_i$ in $A_1$ and  $d_i$ in $A_2$
If  $\Pi A_1 = \Pi A_2 \ne 1$, both $A_1$ and $A_2$ must be nonempty. Otherwise, $\Pi A_1 = \Pi A_2 = 1$ and all integers $a_i \in A$ are 1, making both problems trivial.

\subsubsection{Enforced Operations \texorpdfstring{$S \supseteq \{ \div \}$}{S ⊇ \{÷\}} is Strongly NP-hard}
\label{ops:div}
\label{ops:plus-div}
\label{ops:minus-div}
\label{ops:plus-minus-div}
\label{ops:plus-times-div}
\label{ops:minus-times-div}
\label{ops:plus-minus-times-div}
Given an instance of \ProductPartitionEqual, we will construct an instance of \SGEO{\mathbb{N}}{\{\div\}} with the same set of positive integers $A$, target $t = 1$, and tree requiring formula $$(n_1 \div d_2 \div \cdots \div d_{n/2}) \div (d_1 \div n_2 \div \cdots \div n_{n/2})$$ where there are $n/2$ positive integers between each pair of parenthesis. The proof of correctness follows the same method as used for \ref{ops:minus}.

Since Enforced Operations allows us to specify the expression tree, we can restrict expressions to $S \subset \{ \div \}$ in reductions to show that all set of operations $S \supseteq \{ \div \}$ are strongly NP-hard.

\subsubsection{Enforced Leaves \texorpdfstring{$\{\div \}$}{\{÷\}} is Strongly NP-hard}
\label{leaves:div}
Given an instance of \ProductPartition, we will produce an instance of \SGEL{\mathbb{N}}{\{\div\}} with the same set of positive integers $A$ and $n-1$ ones, target $t = 1$, and leaf order as $$a_1 \  \ 1 \ \ a_2 \ \ 1 \ \ a_3 \ \ 1  \ \ \ldots \ \ 1  \ \ a_n$$

If this instance of \SGEL{\mathbb{N}}{\{\div\}} has a solution, we expand our expression such that $\div (1 \div a_i)$ is simplified to $\times a_i$. We can then divide $A$ into two complementary subsets $A_1$, containing elements preceded by $\times$ after expansion, and $A_2$, containing elements preceded by $\div$ after expansion, so $\prod A_1 / \prod A_2 = t$.  As $t=1$, $\Pi A_1 = \Pi A_2$. 

If the instance of \ProductPartition has a solution, we know two complementary subsets $A_1, A_2$ of $A$ exist such that $\Pi A_1 = \Pi A_2$. Assume $a_1 \in A_1$. Our instance of \SGEL{\mathbb{N}}{\{\div\}} allows us assign operations to produce an expression equivalent to $\Pi A_1 / \Pi A_2 = 1$, by assigning $\div (1 \div a_i)$ equivalent to $\times a_i$ for $a_i \in A_1$ (except $a_1$, which is not preceded by anything and is already equivalent to $\times a_1$) and $\div 1 \div a_i$ equivalent to $\div a_i$  for $a_i \in A_2$. 

\subsection{Enforced Operations \texorpdfstring{$\{-,\times\}$}{\{−, ×\}} and \texorpdfstring{$\{+,-,\times\}$}{\{+, −, ×\}} are Strongly NP-hard}
\label{ops:minus-times}
\label{ops:plus-minus-times}
Given an instance of \ProductPartitionEqual, we will produce an instance of \SGEO{\mathbb{N}}{\{-, \times\}} with the same set of positive integers $A$, target $t = 0$, and tree requiring formula $$(\Box \times \Box \times \cdots \times \Box) - (\Box \times \Box \times \cdots \times \Box),$$ where there are  $n/2$ positive integers between each pair of parenthesis.

If the instance of \ProductPartitionEqual has a solution, we know two complementary subsets $A_1, A_2$ of $A$ exist such that $\Pi A_1 = \Pi A_2$. Our instance of \SGEO{\mathbb{N}}{\{-, \times\}} allows us assign nodes to produce the expression $\Pi A_1 - \Pi A_2 = 0$.

Similarly if this instance of \SGEO{\mathbb{N}}{\{-, \times\}} has a solution, we can divide $A$ into two complementary subsets $A_1$ and $A_2$ such that $|A_1|=|A_2|=n/2$ and  $\prod A_1 - \prod A_2 = t$. As $t=0$, $\Pi A_1 = \Pi A_2$.

We have shown \SGEO{\mathbb{N}}{\{-, \times\}} is strongly NP-hard. Since \SGEO{\mathbb{N}}{\{-, \times\}} is a special case of \SGEO{\mathbb{N}}{\{+, -, \times\}}, \SGEO{\mathbb{N}}{\{+, -, \times\}} is strongly NP-hard as well.

\subsection{Enforced Operations \texorpdfstring{$\{+,\times\}$}{\{+, ×\}} is weakly NP-hard}
\label{ops}
\label{ops:plus-times}


%
To prove that $\AECEO{\nats}{\{+,\times\}}$ is weakly NP-hard, we proceed by reduction from $3\textsc{-Partition}$-3, which is $3\textsc{-Partition}$ with the extra restriction that all the subsets have size 3.
Given an instance of $3\textsc{-Partition}$-3,  $A = \{a_1, a_2, \cdots, a_n\} $, construct instance $I_A$  of $\AECEO{\nats}{\{+,\times\}}$ with the same set of values $A$, target $t = \left( \frac{S}{n/3} \right)^{n/3}$, where $S = \sum_i a_i$, and expression-tree: 
$$(\Box + \Box + \Box) \times (\Box + \Box + \Box) \times \cdots \times (\Box + \Box + \Box),$$ 
where there are $n/3$ pairs of parentheses and $3$ positive integers between each pair of parentheses.

Given a solution of the $3\textsc{-Partition}$-3 instance, one can use the same partition to fill in the $3$-sums and solve our $\AECEO{\nats}{\{+,\times\}}$ instance. If the constructed instance is solvable, we claim that each expression $(\Box + \Box + \Box)$ must have equal value. Denote the value of the $i$th $(\Box + \Box + \Box)$ by $s_i$. Since $\sum_i s_i = S$, the arithmetic mean-geometric mean inequality yields $\prod_{i = 1}^{n / 3} s_i \le \left( \frac{S}{n/3} \right)^{n/3}$, with equality occurring if and only if $s_i = \frac{S}{n/3}$ for all $i$. This completes the proof.

\subsection{\texorpdfstring{$\{+, - \}$}{\{+, −\}}, \texorpdfstring{$\{-\}$}{\{−\}} has a Pseudopolynomial Algorithm for All Variants}
\newcommand{\DPStd}{\mathbf{DPSums}} 

The following pseudopolynomial algorithms demonstrate that the the proofs of weak NP-hardness for all variants of \SG{\mathbb{N}}{\{+, -\}}{\variant} and \SG{\mathbb{N}}{\{-\}}{\variant} are tight.

In all variants, all the possible values which are formable from an AEC instance with integers $\{a_1,\ldots, a_n\}$ are of the form 
$$\sum_i c_i a_i,\ c_i \in \{\pm 1\}.$$
and solvability is determined by whether or not such an expression can equal $t$, the target number.  For example, in \SGSTD{\mathbb{N}}{\{+, -\}}, we can construct all expressions of the above form where at least one $c_i$ is positive (the first number in the expression), whereas in \SGEL{\mathbb{N}}{\{-\}} we must have $c_1 = 1$ and $c_2 = -1$.

The algorithm $\DPStd$ effectively evaluates all expressions of the above form.  Table entry $DP[i, v]$ is true exactly when there is a $c_j\in \{\pm 1\}$ assignment such that the $\sum_{j\le i}c_ja_j = v$.  The innermost for loop iterates across all integers possibly already formed and adds or subtracts $a_i$ from them to set new table entries.  We use $\DPStd$ or slight modifications of it to give pseudopolynomial algorithms for all variants of AEC with $\{+,-\}$ and $\{-\}$.

\begin{algorithm}
\begin{algorithmic}[1]
\Procedure{\textbf{DPSums}}{$\{a_1,\ldots,a_n\}$}
\State $DP[i,v] \gets \text{ new table }$
\State $DP[1,\pm a_1] \gets True$
\For{$i \in [2, n]$}
\For {$v \in [-\sum_{j=0}^{i-1} a_j, \sum_{j=0}^{i-1} a_j]$}
\If{$DP[i-1,v]$} $DP[i,v\pm a_i]\gets True$
\EndIf 
\EndFor
\EndFor
\State \textbf{return} $DP$
\EndProcedure
\end{algorithmic}
\caption{DP Subroutine}
\end{algorithm}

\subsubsection{\texorpdfstring{$\{+, -\}$}{\{+, −\}}, \texorpdfstring{$\{-\}$}{\{−\}} has a Pseudopolynomial Algorithm for Standard and Enforced Leaves}

For \SGSTD{\mathbb{N}}{\{+, -\}}, we can construct any expression of the above form except when all $c_i = -1$, so our algorithm first checks if $t = -\sum_i a_i$ and returns false if so, otherwise returning $\DPStd(\{a_i\})[n,t]$.

For \SGEL{\mathbb{N}}{\{+, -\}}, the only requirement one has is that $c_1$ must equal $1$, so we modify line 3 of $\DPStd$ to only set $DP[1,a_1]$ to true, then return $\DPStd(\{a_i\})[n,t]$.

The \SG{\mathbb{N}}{\{-\}}{\variant} case is slightly more involved. Any expression with purely $-$ operations must have at least one $c_i$ be positive and one $c_i$ be negative. We claim that this is the only restriction and that any other expression is formable. That is, given enforced leaves 
$$a_1 \enspace a_2 \enspace a_3 \enspace \cdots \enspace a_n$$
one can form the expressions 
$$a_1 - a_2 + \sum_{i>2}c_i a_i$$
with $c_i$ arbitrary. To see this, we use the same sign-flipping procedure from Section~\ref{leaves:minus} and note that given a chain of positive coefficients $c_i$ following a negative coefficient we can rewrite the expression using only negative signs via converting: 
$$-a_i+ a_{i+1} + \cdots + a_{i+k} \mapsto -(a_i - a_{i+1} - \cdots - a_{i+k})$$ 
thus, any $\pm 1$-coefficient expression is formable in $\{-\}$-ops given that there is always a $c_i$ which is negative preceding any positive $c_i$ (with the exception of $c_1 = 1$). For Enforced Leaves, this translates to the expression written above, so solvability of an AEC instance is given by $\DPStd(\{a_i\}_{i>2})[n-2,t+a_2-a_1]$. 

For the Standard variant, \SGSTD{\mathbb{N}}{\{-\}}, we can simply loop through all choices of $a, a'\in A$ and return true if any $\DPStd(\{a_i\}-\{a,a'\})[n-2, t+a' - a]$ is true. This will check for all expressions with at least one positive $c_i$ and one negative $c_i$. 

\subsubsection{Enforced Operations \texorpdfstring{$\{+, -\}$}{\{+, −\}}, \texorpdfstring{$\{-\}$}{\{−\}} has a Pseudopolynomial Algorithm}

Trivially, note that \SGEO{\mathbb{N}}{\{-\}} reduces to \SGEO{\mathbb{N}}{\{+, -\}} and thus we need only provide a single algorithm. Any expression $E$ in operations $\{+,-\}$ expands into an expression with $k$ additions and $n-k$ subtractions. Thus, given an instance $I = \{a_i,E\}$, we can compute the number $k$ of additions and then use a similar DP algorithm which keeps count of the number of additions used to form a given subexpression. The algorithm is given by \textbf{DPCount}. 

\begin{algorithm}
\caption{DP Algorithm for $\{+,-\}$-AEC-EO}
\begin{algorithmic}
\Procedure{\textbf{DPCount}}{$\{a_1,\ldots,a_n\},k$}
\State $DP[i,c,v] \gets \text{ new table }$
\State $DP[1,k-1,a_1] = True$
\State $DP[1,k,-a_1] = True$
\For{$i \in [2, n]$}
\For{$c \in [0, k]$}
\For {$v \in [-\sum_{j=0}^{i-1} a_j, \sum_{j=0}^{i-1} a_j]$}
\If{$DP[i-1,c,v]$}
$DP[i,c-1,v+a_i]\gets True,$ $DP[i,c,v-a_i]\gets True$ 
\EndIf 
\EndFor
\EndFor
\EndFor
\State \textbf{return} $DP[n,0,0]$
\EndProcedure
\end{algorithmic}
\end{algorithm}

\section*{Acknowledgments}
{This work was initiated during open problem solving in the MIT class on
Algorithmic Lower Bounds: Fun with Hardness Proofs (6.892)
taught by Erik Demaine in Spring 2019.
We thank the other participants of that class ---
in particular, Josh Gruenstein, Mirai Ikebuchi, and Vilhelm Andersen Woltz ---
for related discussions and providing an inspiring atmosphere.}

\bibliography{references}

\begin{thebibliography}{NBCK10}

\bibitem[AB09]{AroraBarak}
Sanjeev Arora and Boaz Barak.
\newblock {\em Computational Complexity: A Modern Approach}.
\newblock Cambridge University Press, USA, 2009.

\bibitem[AM09]{AggarwalMaurer}
Divesh Aggarwal and Ueli Maurer.
\newblock Breaking {RSA} generically is equivalent to factoring.
\newblock In Antoine Joux, editor, {\em Advances in Cryptology --- EUROCRYPT
  2009}, pages 36--53, Berlin, Heidelberg, 2009. Springer Berlin Heidelberg.

\bibitem[BU08]{BuchfuhrerBooleanFormulaMinimization}
David Buchfuhrer and Christopher Umans.
\newblock The complexity of boolean formula minimization.
\newblock In Luca Aceto, Ivan Damg{\aa}rd, Leslie~Ann Goldberg, Magn{\'u}s~M.
  Halld{\'o}rsson, Anna Ing{\'o}lfsd{\'o}ttir, and Igor Walukiewicz, editors,
  {\em Automata, Languages and Programming}, pages 24--35, Berlin, Heidelberg,
  2008. Springer Berlin Heidelberg.

\bibitem[Dij61]{dijkstra1961algol}
E.~W. Dijkstra.
\newblock {ALGOL-60} translation.
\newblock Technical Report MR 34/61, Rekenafdeling, Stichting Mathematisch
  Centrum, 1961.

\bibitem[DLS81]{minadditionchains}
Peter Downey, Benton Leong, and Ravi Sethi.
\newblock Computing sequences with addition chains.
\newblock {\em SIAM Journal on Computing}, 10(3):638--646, 1981.

\bibitem[GJ02]{GareyJohnson}
Michael~R. Garey and David~S. Johnson.
\newblock {\em Computers and Intractability}.
\newblock W. H. Freeman and Company, New York, 2002.

\bibitem[Gor98]{FastExponentiation}
Daniel~M. Gordon.
\newblock A survey of fast exponentiation methods.
\newblock {\em Journal of Algorithms}, 27:129--146, 1998.

\bibitem[HS11]{HemaspaandraBooleanFormulas}
Edith Hemaspaandra and Henning Schnoor.
\newblock Minimization for generalized boolean formulas.
\newblock arXiv:1104.2312, 2011.

\bibitem[KyC00]{Kabanets99circuitminimization}
Valentine Kabanets and Jin yi~Cai.
\newblock Circuit minimization problem.
\newblock In {\em Proceedings of the 32nd Annual ACM Symposium on Theory of
  Computing}, pages 73--79, Portland, OR, 2000.

\bibitem[Mau05]{Maurer}
Ueli Maurer.
\newblock Abstract models of computation in cryptography.
\newblock In Nigel~P. Smart, editor, {\em Cryptography and Coding}, pages
  1--12, Berlin, Heidelberg, 2005. Springer Berlin Heidelberg.

\bibitem[NBCK10]{ProductPartition}
C.~T. Ng, M.~S. Barketau, T.~C.~E. Cheng, and Mikhail~Y. Kovalyov.
\newblock ``{P}roduct {P}artition'' and related problems of scheduling and
  systems reliability: Computational complexity and approximation.
\newblock {\em European Journal of Operational Research}, 207(2):601--604,
  2010.

\bibitem[Sch37]{scholzchains}
Arnold Scholz.
\newblock {Aufgaben und L{\"{o}}sungen 253}.
\newblock {\em Jahresbericht der Deutschen Mathematiker-Vereinigung},
  47:41--42, 1937.

\bibitem[Sho97]{Shoup}
Victor Shoup.
\newblock Lower bounds for discrete logarithms and related problems.
\newblock In Walter Fumy, editor, {\em Advances in Cryptology --- EUROCRYPT
  '97}, pages 256--266, Berlin, Heidelberg, 1997. Springer Berlin Heidelberg.

\bibitem[vzG88]{Gathen-survey}
Joachim von~zur Gathen.
\newblock Algebraic complexity theory.
\newblock In {\em Annual Review of Computer Science}, volume~3, pages 317--347.
  Annual Reviews Inc., 1988.

\bibitem[Wik]{24game-wiki}
Wikipedia.
\newblock 24 game.
\newblock \url{https://en.wikipedia.org/wiki/24_Game}.

\bibitem[You84]{linearfractional}
N.~J. Young.
\newblock Linear fractional transformations in rings and modules.
\newblock {\em Linear Algebra and its Applications}, 56:251--290, 1984.

\end{thebibliography}

\appendix

\section{Related Problems}
\label{sec:related}

To show the NP-hardness of the variants of Arithmetic Expression Construction, we reduce from the following problems:

\defineproblemrefcomment{\Partition}
{A multiset of positive integers $A = {a_1, a_2, \dots, a_n}$.}
{Can $A$ be partitioned into two subsets with equal sum?}
{\cite{GareyJohnson}, problem SP12.}
{Weakly NP-hard.}

\defineproblemrefcomment{\PartitionEqual}
{A multiset of positive integers $A = {a_1, a_2, \dots, a_n}$.}
{Can $A$ be partitioned into two subsets with equal size $\frac{n}{2}$ and equal sum?}
{\cite{GareyJohnson}, problem SP12.}
{Weakly NP-hard.}

\defineproblemrefcomment{\ProductPartition}
{A multiset of positive integers $A = {a_1, a_2, \dots, a_n}$.}
{Can $A$ be partitioned into two subsets with equal product?}
{\cite{ProductPartition}.}
{Strongly NP-hard.}

\defineproblemcomment{\ProductPartitionEqual}
{A multiset of positive integers $A = {a_1, a_2, \dots, a_n}$.}
{Can $A$ be partitioned into two subsets with equal size $\frac{n}{2}$ and equal product?}
{Strongly NP-hard.  See Theorem~\ref{thm:productpartitionequal}.}

\defineproblemcomment{\SquareProductPartition}
{A multiset of square numbers $A = {a_1, a_2, \dots, a_n}$.}
{Can $A$ be partitioned into two subsets with equal product?}
{Strongly NP-hard.  See Theorem~\ref{thm:squareproductpartitionequal}.}

\defineproblemcomment{\SquareProductPartitionEqual}
{A multiset of square numbers $A = {a_1, a_2, \dots, a_n}$.}
{Can $A$ be partitioned into two subsets with equal size $\frac{n}{2}$ and equal product?}
{Strongly NP-hard.  See Theorem~\ref{thm:squareproductpartitionequal}.}

\defineproblemrefcomment{\SetProductPartitionBound{K}}
{A set (without repetition) of positive integers $A = {a_1, a_2, \dots, a_n}$ where $a_i > K$ and all prime factors of $a_i$ are also greater than $K$. $K$ is fixed and the prime factors are not specified in the instance.}
{Can $A$ be partitioned into two subsets with equal product?}
{\cite{ProductPartition}.}
{Strongly NP-hard by a modification of the proof for \ProductPartition in \cite{ProductPartition}. 
The reduction constructs a set of positive integers $A$ where all elements are unique, which we modify by choosing primes factors $> K$ when constructing $A$.} 

\defineproblemrefcomment{\TPartition-3}
{A multiset of positive integers $A = {a_1, a_2, \dots, a_n}$, with $n$ a multiple of 3.}
{Can $A$ be partitioned into $n/3$ subsets with equal sum, where all subsets have size 3?}
{\cite{GareyJohnson}, problem SP15.}
{Strongly NP-hard, even when all subsets are required to have size 3 (\TPartitionEqual).}

\begin{theorem}\label{thm:productpartitionequal}
\ProductPartitionEqual is strongly NP-complete.
\end{theorem}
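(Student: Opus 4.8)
I would prove the two halves of the statement separately: membership in NP and strong NP-hardness. Membership is routine --- a candidate partition $(A_1,A_2)$ is itself a polynomial-size certificate, and one checks in polynomial time that $|A_1| = |A_2| = n/2$ and that $\prod A_1 = \prod A_2$, the two products having at most $\sum_i \log_2 a_i$ bits. For hardness I would use a simple padding reduction from \ProductPartition, which is strongly NP-hard by the cited construction.

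The reduction: given a \ProductPartition instance $A = \{a_1,\dots,a_n\}$, output the multiset $A'$ consisting of $A$ together with $n$ extra copies of the integer $1$. Then $|A'| = 2n$, so the size constraint of \ProductPartitionEqual forces two parts of size $|A'|/2 = n$. This map runs in linear time and introduces only the value $1$, so it leaves every number's magnitude unchanged and hence transfers \emph{strong} NP-hardness; combined with NP-membership this gives strong NP-completeness.

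For correctness, first suppose $A = A_1 \sqcup A_2$ with $\prod A_1 = \prod A_2$. Writing $s_i = |A_i|$ (so $s_1 + s_2 = n$), place $n - s_1$ of the new ones with $A_1$ and the remaining $s_1$ new ones with $A_2$; this uses all $n$ added ones and produces parts $B_1,B_2$ of $A'$ with $|B_1| = |B_2| = n$ and $\prod B_1 = \prod A_1 = \prod A_2 = \prod B_2$, so $A'$ is a yes-instance. Conversely, suppose $A' = B_1 \sqcup B_2$ with $|B_1| = |B_2| = n$ and $\prod B_1 = \prod B_2$. Since a factor of $1$ does not change a product, delete $n$ copies of $1$ from $A'$, taking as many as possible from $B_1$ and the rest from $B_2$ (possible since $B_1$ and $B_2$ together contain at least $n$ copies of $1$); this yields multisets $A_1 \subseteq B_1$ and $A_2 \subseteq B_2$. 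Because $A'$ has exactly $n$ more copies of $1$ than $A$ and agrees with $A$ on all other values, $A_1 \sqcup A_2 = A$, and $\prod A_1 = \prod B_1 = \prod B_2 = \prod A_2$, so $A$ is a yes-instance of \ProductPartition.

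I do not anticipate a real obstacle here; the reduction is essentially the multiplicative analogue of padding a \Partition instance with zeros. The only point that needs care is the multiset bookkeeping in the backward direction --- one must argue that removing $n$ arbitrary copies of $1$ from $A'$ recovers exactly $A$, and that this deletion can be split between $B_1$ and $B_2$. The lone degenerate case, where $A$ consists entirely of $1$s, is decidable by inspection and can be handled separately, so it does not affect the argument.
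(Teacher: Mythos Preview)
Your proposal is correct and follows essentially the same padding reduction as the paper: add $n$ copies of $1$ to the \ProductPartition instance, then distribute the ones to equalize the sizes in the forward direction and strip them in the backward direction. Your write-up is in fact slightly more careful than the paper's (you handle the multiset bookkeeping when $A$ already contains ones, and you note explicitly that only the value $1$ is introduced so magnitudes are preserved), but the idea is identical.
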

\begin{proof}
We can reduce from \ProductPartition to \ProductPartitionEqual. Given instance of \ProductPartition $\{a_1, \cdots, a_n\}_i$ with $n$ elements, where $n$ is even, we construct an corresponding instance of \ProductPartitionEqual as $\{a_1, \cdots, a_n\} \cup \{1\}*n$, where $\{1\}*n$ denotes $n$ instances of the integer $1$.

Clearly if we have a valid solution to \ProductPartitionEqual, we have a valid solution to the instance of \ProductPartition. Conversely, given a valid solution to \ProductPartition, two subsets $S_1, S_2 \subset{\{a_i\}_i}$ with equal product, the difference between the sizes of $S_1$ and $S_2$ is at most $n-2$. One can then distribute the $1$s as needed to even the out the number of elements of $S_1$ and $S_2$. We can then construct two sets: $S_1 \cup \{1\}*|S_2|,\ S_2 \cup \{1\}*|S_1|$ which form a solution to \ProductPartitionEqual. Strong NP-hardness follows from strong NP-hardness of $\ProductPartitionEqual$.
\end{proof}

\begin{theorem}\label{thm:squareproductpartitionequal}\SquareProductPartition and
\SquareProductPartitionEqual is strongly NP-complete.
\end{theorem}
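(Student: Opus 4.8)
The plan is to reduce from the (strongly NP-hard) problems \ProductPartition and \ProductPartitionEqual by the simple expedient of squaring every element. Membership in NP is immediate for both variants: a partition is a polynomial-size certificate, and the product of $n$ integers each of magnitude at most $m$ has at most $n\log m$ bits, so the two products can be computed and compared in polynomial time; this also shows the problems remain in NP under a unary encoding, which is what a strong-completeness claim requires.

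For \SquareProductPartition, given an instance $A = \{a_1,\ldots,a_n\}$ of \ProductPartition, I would output the multiset $A' = \{a_1^2,\ldots,a_n^2\}$, every element of which is a perfect square. Since $i \mapsto a_i^2$ is a bijection on the index set, any partition of $A'$ into two sub-multisets corresponds to a partition of $A$ on the same indices, and conversely. Moreover, for index sets $S_1, S_2$, one has $\prod_{i\in S_1} a_i^2 = \prod_{i\in S_2} a_i^2$ iff $\big(\prod_{i\in S_1} a_i\big)^2 = \big(\prod_{i\in S_2} a_i\big)^2$ iff $\prod_{i\in S_1} a_i = \prod_{i\in S_2} a_i$, where the last step uses that all $a_i$ are positive, so both products are positive and taking square roots is injective. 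Hence $A$ is a yes-instance of \ProductPartition iff $A'$ is a yes-instance of \SquareProductPartition. The reduction only squares each number, so under a unary encoding its output has size polynomial in the input size, and strong NP-hardness is inherited.

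For \SquareProductPartitionEqual, the identical squaring reduction applied to an instance of \ProductPartitionEqual works verbatim: squaring does not change the number of elements, so the size-$\tfrac n2$ balance condition is preserved in both directions, and the product equivalence above is unchanged. (Alternatively, one can reduce from \SquareProductPartition to \SquareProductPartitionEqual by adjoining $n$ copies of $1 = 1^2$ and redistributing the ones to equalize the two sides, exactly as in the proof of Theorem~\ref{thm:productpartitionequal}; since $1$ is a perfect square this keeps the instance inside \SquareProductPartitionEqual.)

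There is essentially no obstacle here; the only points needing a moment's care are the observation that equality of products of squares is equivalent to equality of the underlying products (which relies on positivity of the inputs, so that square roots may be taken), and the bookkeeping that squaring preserves polynomial size under the unary encoding used for strong NP-completeness.
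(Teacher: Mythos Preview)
Your proposal is correct and matches the paper's proof essentially line for line: both reduce from \ProductPartition (resp.\ \ProductPartitionEqual) by squaring every element, argue the equivalence via taking square roots of positive integers, and observe that squaring only doubles the bit-length so strong NP-hardness is preserved. Your additional remarks on NP membership and the alternative padding-with-ones reduction are fine extras but not needed beyond what the paper does.
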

\begin{proof}
One can reduce from $\ProductPartition$ to $\SquareProductPartition$ by simply taking an instance $I = \{a_i\}_{i\in \alpha}$ and producing the instance $I' = \{a_i^2\}_{i\in\alpha}$. Given a partition of $\alpha = \alpha_1\sqcup \alpha_2$ such that $\prod_{i\in \alpha_1}a_i = \prod_{i\in\alpha_2} a_i$, the same partition of $\alpha$ will produce a valid partition of $I'$ as the squares will remain equal. The converse also holds by taking noting that $\prod_{i\in \alpha'}a_i = \sqrt{\prod_{i\in\alpha'} a_i^2}$. The same construction above, with the added requirement that $|\alpha_1| = |\alpha_2|$, will reduce from $\ProductPartitionEqual$ to $\SquareProductPartitionEqual$. Strong NP-hardness of both holds by noting that squaring integers scales their bitsize by a factor of $2$. 
\end{proof}

\end{document}